\setlist[itemize]{leftmargin=*}
\setlist[enumerate]{leftmargin=*}
\newtheorem{thm}{Theorem}
\newtheorem{defn}{Definition}
\newtheorem{lem}{Lemma}
\newcommand{\namedref}[2]{\hyperref[#2]{#1~\ref*{#2}}}
\newcommand{\Sectionref}[1]{\namedref{Section}{sec:#1}}
\newcommand{\Subsectionref}[1]{\namedref{Subsection}{subsec:#1}}
\newcommand{\Appendixref}[1]{\namedref{Appendix}{app:#1}}
\newcommand{\Theoremref}[1]{\namedref{Theorem}{thm:#1}}
\newcommand{\Lemmaref}[1]{\namedref{Lemma}{lem:#1}}
\newcommand{\Figureref}[1]{\namedref{Figure}{fig:#1}}
\newcommand{\Equationref}[1]{\namedref{Equation}{eq:#1}}
\newcommand{\Footnoteref}[1]{\namedref{Footnote}{foot:#1}}
\newcommand{\Pageref}[1]{\hyperref[#1]{page~\pageref*{#1}}}
\definecolor{darkred}{rgb}{0.5, 0, 0} 
\definecolor{darkblue}{rgb}{0,0,0.5} 
\renewcommand{\ij}{\ensuremath{\vec{ij}}\xspace}
\newcommand{\ji}{\ensuremath{\vec{ji}}\xspace}
\newcommand{\RI}{\ensuremath{RI}\xspace}
\newcommand{\out}[1]{\ensuremath{\Pi^{\text{out}}_{#1}}\xspace}
\newcommand{\X}{\ensuremath{\mathcal{X}}\xspace}
\newcommand{\Y}{\ensuremath{\mathcal{Y}}\xspace}
\newcommand{\Z}{\ensuremath{\mathcal{Z}}\xspace}
\newcommand{\equi}{\ensuremath{\triangleleft}\xspace}
\renewcommand{\paragraph}[1]{\smallskip\noindent{\bf #1}~}
\begin{document}
\title{On the Communication Complexity of\\Secure Computation}
\author{Deepesh Data and Vinod M. Prabhakaran\\
School of Technology and Computer Science\\
Tata Institute of Fundamental Research\\
Mumbai 400 005
\and  
Manoj M. Prabhakaran\\
Department of Computer Science\\
University of Illinois, Urbana-Champaign\\
Urbana, IL 61801}

% make the title area
\maketitle

\begin{abstract}

Information theoretically secure multi-party computation (MPC) has been a central
primitive of modern cryptography. However, relatively little is known about
the communication complexity of this primitive.

In this work, we develop powerful information theoretic tools to prove lower
bounds on the communication complexity of MPC. We restrict ourselves to a
concrete setting involving 3-parties, in order to bring out the power of
these tools without introducing too many complications. Our techniques
include the use of a data processing inequality for {\em residual
information} --- i.e., the gap between mutual information and
G\'acs-K\"orner common information, a new {\em information inequality} for
3-party protocols, and the idea of {\em distribution switching} by which
lower bounds computed under certain worst-case scenarios can be shown to
apply for the general case.

Using these techniques we obtain tight bounds on communication complexity by MPC protocols for various interesting functions. In particular, we show concrete functions which have
``communication-ideal'' protocols, which achieve the minimum
communication simultaneously on all links in the network. Also, we obtain
the first {\em explicit} example of a function that incurs a higher communication
cost than the input length, in the secure computation model of Feige, Kilian
and Naor \cite{FeigeKiNa94}, who had shown that such functions exist. We also show that our communication bounds imply tight lower bounds on the amount of randomness required by MPC protocols for many interesting functions.

We identify a {\em multi-secret sharing} primitive that is interesting
on its own right, but also has the property that lower bounds on its share
sizes serve as lower bounds for communication complexity of MPC protocols.
While often the resulting bounds are tight, we can use our results to give a
concrete example where there is a gap between the share sizes and the
communication complexity.

\begin{comment}
Information theoretically secure multi-party computation (MPC) has been a central
primitive of modern cryptography. However, relatively little is known about
the communication complexity of this primitive.

In this work, we develop powerful information theoretic tools to prove
lowerbounds on the communication complexity of MPC. We restrict ourselves to
a 3-party setting, and obtain tight bounds on MPC protocols for various
interesting functions.  In particular, we show that for certain functions,
there are ``communication-ideal'' protocols, which achieve the minimum
communication simultaneously on all links in the network. We also obtain
the first explicit example of a function that incurs a higher communication
cost than the input length, in the secure computation model of Feige, Kilian
and Naor \cite{FeigeKiNa94}, who had shown that such functions exist.

We identify a multiple-secret-sharing primitive that is closely related
to MPC protocols. Some of our bounds are in fact lowerbounds on the share
sizes in such a secret-sharing scheme, which in turn imply lowerbounds on
the communication complexity of MPC protocols.  However, we show that the
converse is not true: the communication complexity of MPC protocols can be
higher than the minimum share sizes in the corresponding secret-sharing
scheme.
\end{comment}
\end{abstract}

\thispagestyle{empty}
\newpage

\thispagestyle{empty}
\tableofcontents
\thispagestyle{empty}
\pagestyle{empty}

\newpage
\setcounter{page}{1}

\pagestyle{plain}

\section{Introduction}
Information theoretically secure multi-party computation has been a central
primitive of modern cryptography. The seminal results of Ben-Or, Goldwasser,
and Wigderson~\cite{BenorGoWi88} and Chaum, Cr\'epeau, and
Damg\r{a}rd~\cite{ChaumCrDa88} showed that information theoretically secure
function computation is possible between parties connected by pairwise,
private links as long as only a strict minority may collude in the
honest-but-curious model (and a strictly less than one-third minority may
collude in the malicious model). Since then, several protocols have improved
the efficiency of these protocols.

However, relatively less is known about {\em lower bounds} on the amount of
{\em communication} required by a secure multi-party computation protocol,
with a few notable exceptions
\cite{Kushilevitz89,FranklinYu92,ChorKu93,FeigeKiNa94}. In fact,
\cite{IshaiKu04} shows that establishing strong communication lower bounds
(even with restrictions on the number of rounds) would imply breakthrough
lower bound results for other well-studied problems like private-information
retrieval and locally decodable codes. Further, due to known upper bounds on
the communication needed in a secure multi-party computation protocol
\cite{DamgardIs06}, such lower bounds would imply non-trivial circuit
complexity lower bounds --- a notoriously hard problem in theoretical
computer science. The goal of this work is to develop tools to tackle the
difficult problem of lower bounds for communication in secure multi-party
computation, even if they do not immediately have direct implications to
circuit complexity or locally decodable codes.%
%
% \footnote{The bounds in our work focus on amortized complexity of computing
% several instances of a fixed function, as the number of instances increases;
% the circuit-complexity and locally decodable codes bounds are related to the
% complexity of computing a single instance of a function (from a function
% family), as the input size grows.}
%

It is instructive to compare the problem of communication complexity lower
bounds for secure multi-party computation with that when there is no
security requirement involved. This latter problem has been extensively
studied --- over the last three and a half decades, starting with
\cite{Yao79} --- resulting in a rich collection of results and techniques.
Unfortunately, many of the techniques in the communication complexity
setting are not relevant in the setting of secure computation:%
\footnote{Of course, communication complexity lower bounds continue to hold
for secure computation as well, but these bounds as such are (apparently)
very loose (since there is a trivial upper bound for communication
complexity, which is at most the size of all inputs and outputs).}
for instance, for communication complexity Yao's minimax theorem allows one
to consider only deterministic protocols with public randomness, but in the
secure computation setting, one must allow private randomness, and hence it
is not sufficient to consider only deterministic protocols. This rules out
several powerful combinatorial approaches from the communication complexity
literature. But over the last decade or so (see for example, \cite{KerenidisLLRX12} and
references therein), a slew of information theoretic tools have been
developed, which in many cases subsume more complicated combinatorial
approaches.

Following this lead, the approach we take in this work is to develop novel
{\em information-theoretic tools to obtain lower bounds on the communication
complexity of secure computation}. Indeed, the tools we develop and use
have connections with similar tools developed in the context of
communication complexity and related problems. In particular, all these
tools are related to notions of ``common information'' introduced by
G\'{a}cs-K\"{o}rner \cite{GacsKorner} and Wyner \cite{Wyner75}.%
\footnote{In communication complexity and related problems, the lower bound
techniques relate to Wyner common information~\cite{PPunderprep14,BraunPo13},
whereas the tools in this work are more directly related to
G\'{a}cs-K\"{o}rner common information. Wyner common information and
G\'{a}cs-K\"{o}rner common information have been generalized to a measure of
correlation represented as the ``tension region'' in
\cite{PrabhakaranPr12}.}

In this work we restrict our study to a concrete setting that brings out the
power of these tools without introducing too many additional complications.
Our setting involves 3 parties (with security against corruption of any
single party) of which only two parties have inputs, $X$ and $Y$, and only
the third party receives an output $Z$ as a (possibly randomized) function
of the inputs. This class of functions is similar to that studied in
\cite{FeigeKiNa94}, but our protocol model is more general (since it allows
fully interactive communication), making it harder to establish lower
bounds.

\subsection{Results and Techniques}
We study the setting shown in \Figureref{setup}. 
% Our focus is on passive (honest-but-curious) adversaries, but all our lower
% bounds hold for active adversaries as well.  
We obtain lower bounds
on the expected number of bits that need to be exchanged between each pair
of parties when securely evaluating a (possibly randomized) function of two
inputs so that Alice and Bob have one input each, and Charlie receives the
output. In fact, our bounds are on the entropy of the transcript between
each pair,%
\footnote{The entropy bounds translate to bounds on the expected number of
bits communicated, when we require that the messages on the individual links
are encoded using (possibly adaptively chosen) prefix-free codes. See
\Appendixref{communication-lowerbound-through-entropy-lowerbound}.}
and hence hold even when the protocol is amortized over several
instances with independent inputs. Further, these bounds do not depend on
the input distribution (as long as the distribution has full support) and
hold even if the protocol is allowed to depend on the input distribution.

At a high-level, the ingredients in deriving of our lower-bounds are the following:
\begin{itemize}
\item Firstly, we observe that, since Alice and Bob do not obtain any outputs, they are both forced to reveal their inputs fully (upto equivalent inputs) to the rest of the system. This implies that the transcripts of a secure computation form the shares of the inputs and outputs according to an appropriately defined ``correlated multiple secret sharing
scheme'' (CMSS).%
\footnote{We remark that our notion of multiple secret sharing schemes is different from that of \cite{BlundoSaCrGaVa94}, which (implicitly) required that secrets with different access structures be
independent of each other. In our case, $Z$ is typically strongly correlated with $X,Y$, often via a deterministic function.}
Hence, lower bounds on the entropies of the shares in a CMSS imply lower bounds on the entropies of the messages in a secure computation protocol. One can immediately obtain a na\"ive lower bound on the entropies of the shares in a correlated multiple secret sharing scheme: specifically, if $X,Y,Z$ are the secrets, and $M_{23}$ denotes the part of shares that is not available to a party who should learn only $X$, then we can see that $H(M_{23}) \ge H(Y,Z|X)$.%
\footnote{\label{foot:addition}
We point out a simple example for which one can obtain a tight bound from this na\"ive bound for CMSS: addition (in any group) requires one group element to be communicated between every pair of players, even with amortization
over several independent instances. Previous lower bounds for secure evaluation of addition (in any group) \cite{FranklinYu92,ChorKu93}, while considering an arbitrary number of parties, either restricted themselves to
bounding the {\em number of messages} required, or relied on non-standard security requirements. (For the 3-party case, for semi-honest security, results of \cite{FranklinYu92,ChorKu93} only imply that all three links should be used. \cite{FranklinYu92} did give a lower bound on the number of bits communicated as well, but this was shown only under a non-standard security requirement called {\em unstoppability}.)
% In contrast, we obtain a tight bound for the standard semi-honest security
% notion (even with amortization) in our 3-party setting.
}

We strengthen the na\"ive lower bounds by relying on a ``data-processing inequality'' for {\em residual information} --- i.e., the gap between mutual-information and (G\'{a}cs-K\"{o}rner) common information --- which lets us relate the
residual information between the shares to the residual information between the secrets. This bound is given in \Theoremref{prelim_lbs}.

\item We can further improve the above lower bounds using a new tool, called {\em distribution switching}. The key idea is that the security requirement forces the distribution of the transcript on certain links to be independent of the inputs. Hence, we can optimize our bounds over all input distributions having full support. Further, this shows that even if the protocol is allowed to depend on the input distribution, our bounds (which depend only on the function being evaluated) hold for every input distribution that has full support over the input domain. The resulting bound is summarised in \Theoremref{improved_prelim_lbs}.

\item As it turns out, CMSS lower bounds are in general weak, because a CMSS can in fact be strictly more efficient than a secure computation protocol that the CMSS problem is derived from (see \Appendixref{CMSS_sampling}). To
go beyond the CMSS bounds, we need to exploit the fact that in a protocol, the transcripts have to be generated by the parties interactively, rather than be created by an omniscient ``dealer''. An important technical contribution of this work is to provide a new tool towards this, in the form of a {\em new information inequality for 3-party interactive protocols}~(\Lemmaref{infoineq}). We use this to derive a bound (\Theoremref{intermediate_lbs}) that serves as an intermediate result for us.

Using the idea of {\em distribution switching}, we can significantly improve the above lower bounds by optimizing them using appropriate distributions of inputs. In fact, we can take the different terms in our bounds and {\em optimize each of them separately using different distributions over the inputs}. The resulting bounds (\Theoremref{lowerbound} and \Theoremref{conditionallowerbounds}) are often stronger than what can be obtained by considering a single input distribution for the entire expression.

\end{itemize}

The resulting bounds are summarized in \Theoremref{prelim_lbs}, \Theoremref{improved_prelim_lbs}, \Theoremref{lowerbound} and
\Theoremref{conditionallowerbounds}. While we restrict our attention to a 3
party setting, to the best of our knowledge, these are the first {\em
generic} lower bounds which apply to any function. To illustrate their use,
we apply them to several interesting example functions.
In particular, we show the following:
\begin{itemize}
\item We analyze secure protocols for two functions --
{\sc group-add, controlled-erasure} and {\sc remote-ot}  --
and, applying our lower bounds, show that these protocols
achieve {\em optimal communication complexity 
simultaneously on each link}. We call such a protocol a {\em
communication-ideal} protocol. We leave it open to characterize which
functions have communication-ideal protocols.

\item We use our lower bounds to establish a separation between secret
sharing and secure computation: we show that there exists a function (in
fact, the {\sc and} function) which has a CMSS scheme with a share strictly
smaller than the number of bits in the transcript on the corresponding link
in any secure computation protocol for that function. While such a
separation is natural to expect, we note that proving it requires exploiting
the properties of an interactive protocol.

\item We show an {\em explicit} deterministic function $f:\{0,1\}^n \times
\{0,1\}^n \rightarrow \{0,1\}^{n-1}$  which has a
communication-ideal protocol in which Charlie's total communication cost is
(and must be at least) $3n-1$ bits. In contrast, \cite{FeigeKiNa94} showed
that {\em there exist} functions $f:\{0,1\}^n \times \{0,1\}^n \rightarrow
\{0,1\}$, for which Charlie must receive at least $3n-4$ bits, if the
protocol is required to be in their non-interactive model.
% to be non-interactive, with a single message from Alice and Bob to
% Charlie, and only shared randomness between Alice and Bob. 
(Note that our bound is incomparable to that of \cite{FeigeKiNa94}, since we
require the output of our function to be longer; on the other hand, our
bound uses an explicit function, and continues to hold even if we allow
unrestricted interaction.)

\item Our lower bounds for communication complexity also yield lower bounds on the amount of randomness needed in secure computation protocols. We analyze secure protocols for {\sc group-add, controlled-erasure, remote-ot} and {\sc sum}, and prove that these protocols are {\em randomness-optimal}, i.e., they use the least amount of randomness.
\end{itemize}

\begin{figure}[tb]

\centering
\usetikzlibrary{arrows}
\begin{tikzpicture}[<->,>=stealth',shorten >=1pt,auto,node distance=3cm,
  thick,main node/.style={circle,fill=black!20,draw,font=\sffamily\Large\bfseries}]

  \node[main node, pin={[pin edge={->,thin,black}]above:$Z$} ] (3) {3};
  \node[main node, pin={[pin edge={<-,thin,black}]left:$X$} ](1) [below left of=3] {1};
  \node[main node, pin={[pin edge={<-,thin,black}]right:$Y$} ] (2) [below right of=3] {2};
  
  \path[every node/.style={font=\sffamily\small}]
    (3) edge node [right] {$M_{23}$} (2)
        edge node [left] {$M_{31}$} (1)
    (2) edge node {$M_{12}$} (1);
    
\end{tikzpicture}
\caption{A three-party secure computation problem. Alice (party-1) has input
$X$ and Bob (party-2) has $Y$. We require that (i) Charlie (party-3) obtains
as output a randomized function of the other two parties' inputs,
distributed as $p_{Z|XY}$, (ii) Alice and Bob learn no additional information about
each other's inputs,
and (iii) Charlie learns nothing more about $X,Y$ than what is revealed by
$Z$. All parties can talk to each other, over multiple rounds over
bidirectional pairwise private links.} 
\label{fig:setup}
\end{figure}

\subsection{Related Work}
Communication complexity of multi-party computation without security
requirements has been widely studied since \cite{Yao79} (see
\cite{KushilevitzNi97book}), and more recently has seen the use of
information-theoretic tools as well, in \cite{ChakrabartiShWiYa01} and
subsequent works.  Independently, in the information theory literature
communication requirements of interactive function computation have been
studied (e.g.\ \cite{OrlitskyRoche}).

In secure multi-party computation, there has been a vast literature on
information-theoretic security, focusing on building efficient protocols, as
well as characterizing various aspects like corruption models that admit
secure protocols (e.g.\
\cite{BenorGoWi88,ChaumCrDa88,Chaum89,HirtMa97,FitziHiMa99,HirtLuMaRa12})
and the number of rounds of interaction needed (e.g.\
\cite{FischerLy82,GennaroIsKuRa01,FitziGaGoRaSr06,PatraChRaRa09,KatzKoKu09}),
In the computational security setting, \cite{NaorNi01} gave upper bounds on
the communication complexity of 2-party secure computation in terms of the
communication complexity without security requirements. In the
information-theoretic security setting, \cite{DamgardIs06} upper bounded the
communication complexity of multi-party secure computation in terms of the
circuit complexity of the computation.

But {\em lower-bounding communication complexity} has received much less
attention.  For 2-party secure computation with security against passive
corruption of one party (when the function admits such a protocol),
communication complexity was combinatorially characterized in
\cite{Kushilevitz89}. Franklin and Yung \cite{FranklinYu92} showed that the
{\em number of messages} used in a protocol must be quadratic in $n$, the
number of parties (if security against corrupting $t=\Omega(n)$ parties is
required).
% , in a model where an adversary can corrupt parties based on the
% communication pattern. 
Further \cite{FranklinYu92,ChorKu93} gave tight lower and upper bounds on
the number of messages needed for secure computation of the ``modulo-sum'' function
by $n$ parties; relying on a stronger corruption model (fail-stop
corruption), \cite{FranklinYu92} also argued a lower bound for the amortized
{\em communication complexity} of secure summation. \cite{FeigeKiNa94}
obtained a lower bound on the communication complexity for a restricted
class of 3-party protocols; along with positive results, they gave a modest
lower bound for communication needed for evaluating random functions in this
model. The difficulty of obtaining general lower bounds was pointed out in
\cite{IshaiKu04}, who related such lower bounds to lower bounds for locally
decodable codes and private information retrieval protocols. The connection
to private information retrieval protocols was recently used
in~\cite{BeimelIsKuKu14} to, among other things, derive the best known
general upper bound on communication for Boolean functions in the model
of~\cite{FeigeKiNa94}. Note that this upper bound is exponential in the number of
input bits compared to the lower bound of~\cite{FeigeKiNa94} which is only linear. The question of how much randomness is required for secure computation seems to have received even less attention; we are aware of \cite{KushilevitzMa97, BlundoSaPeVa99, GalRo05, AbbeLe14}.

Information-theoretic tools have been successfully used in deriving bounds
in various cryptographic problems like key agreement (e.g.\
\cite{MaurerWo03}), secure 2-party computation (e.g.\ \cite{DodisMi99}) and
secret-sharing and its variants (e.g.\ \cite{BeimelOr11} and
\cite{BlundoSaCrGaVa94}).  In this work, we rely on information-theoretic
tools developed in \cite{WolfWu08,PrabhakaranPr12}, which also considered
cryptographic problems. Some preliminary observations leading to this work
appeared in \cite{DPAllerton13} (as referenced at the appropriate points,
below).

\section{Preliminaries}

\paragraph{Notation.}
We write $p_X$ to denote the distribution of a discrete random variable
$X$; $p_X(x)$ denotes $\Pr[X=x]$. When clear from the context, the
subscript of $p_X$ will be omitted.
The conditional distribution denoted by $p_{Z|U}$ specifies
$\Pr[Z=z|U=u]$,
for each value $z$ that $Z$ can take and each
value $u$ that $U$ can take.
A {\em randomized function} of two variables,  is specified by
a probability distribution $p_{Z|XY}$, where $X,Y$ denote the two input
variables, and $Z$ denotes the output variable.
For a sequence of random variables $X_1,X_2,\ldots,$ we denote by $X^n$ the vector $(X_1,\ldots,X_n)$.

For random variables $T,U,V,$ we write the {\em Markov chain} $T-U-V$ to
indicate that $T$ and $V$ are conditionally independent conditioned on $U$: i.e.,
$I(T;V|U)=0$.  All logarithms are to the base 2. The binary entropy function
is denoted by $H_2(p)=-p\log p -(1-p)\log(1-p),\;p\in(0,1).$

\paragraph{Protocols.}
A $3$-party protocol $\Pi$ is specified by a collection of ``next message
functions'' $(\Pi_1,\Pi_2,\Pi_3)$ which probabilistically map a {\em state}
of the protocol to the next state (in a restricted manner), and output functions
$(\out1,\out2,\out3)$ used to define the outputs of the parties as
probabilistic functions of their views.  We shall also allow the protocol to
depend on the distribution of the inputs to the parties. (This would allow
one to tune a protocol to be efficient for a suitable input distribution.
Allowing this makes our lower bounds stronger; on the other hand, none of the
protocols we give for our examples require this flexibility. See discussion
in \Appendixref{discuss}.)

In \Appendixref{prelims} we formalize a well-formed 3-party
protocol.  Without loss of generality, the state of the protocol
consists only of the inputs received by each party and the {\em transcript}
of the messages exchanged so far.%
\footnote{\label{foot:stateless}%
Since the parties are computationally unbounded, there is no need
to allow private randomness as part of the state; randomness for a party can
always be resampled at every round conditioned on the inputs, outputs and
messages in that party's view.}
We denote the final transcripts on the three links, after
executing protocol $\Pi$ on its specified input distribution by
$M^{\Pi}_{12}, M^{\Pi}_{23}$ and $M^{\Pi}_{31}$. When $\Pi$ is clear from
the context, we simply write $M_{12}$ etc.  We define $M_1=(M_{12},M_{31})$
as the transcripts that party 1 can see; $M_2$ and $M_3$ are defined
similarly. We define the view of the $i^\text{th}$ party, $V_i$ to consist
of $M_i$ and that party's inputs and outputs (if any).

It is easy to see that a (well-formed) protocol, along with an input
distribution, fully defines a joint distribution over all the inputs,
outputs and the joint transcripts on all the links.

\paragraph{Secure Computation.}
We consider three party computation functionalities, in which Alice and Bob (parties 1
and 2) receive as inputs the random variables $X\in{\X}$ and
$Y\in{\Y}$, respectively, and Charlie (party 3) produces an output
$Z\in{\Z}$ distributed according to a specified distribution
$p_{Z|XY}$. In particular, we can consider a {\em deterministic function
evaluation} functionality where $Z=f(X,Y)$ with probability 1, for some
function $f:{\X}\times{\Y}\rightarrow{\Z}$. The set
${\X}$, ${\Y}$ and ${\Z}$ are always finite. In secure computation, we shall consider the inputs to the computation to come from a distribution $p_{XY}$ over ${\X}\times{\Y}$.

A (perfectly) secure computation protocol $\Pi(p_{XY},p_{Z|XY})=(\Pi_1,\Pi_2,\Pi_3,\out3)$ for $(p_{XY},p_{Z|XY})$ 
is a protocol which satisfies the following conditions:
\begin{itemize}
\item Correctness: Output of Charlie, is distributed according to $p_{Z|X=x,Y=y}$,
where $x,y$ are the inputs to Alice and Bob
\item Privacy: The privacy condition corresponds to ``1-privacy'', wherein at
most one party is passively corrupt. Corresponding to security against
Alice, Bob and Charlie, respectively, we have the following three Markov chains.
$M_1 - X - (Y,Z)$, $M_2 - Y - (X,Z)$ and $M_3 - Z - (X,Y)$.
% \begin{align*}
% M_1 - X - (Y,Z) &&&\text{(privacy against Alice)} \\
% M_2 - Y - (X,Z) &&&\text{(privacy against Bob)} \\
% M_3 - Z - (X,Y) &&&\text{(privacy against Charlie)}
% \end{align*}
Equivalently, in terms of the views, $I(V_1;(Y,Z)|X)=I(V_2;(X,Z)|Y)=I(V_3;(X,Y)|Z)=0$.
\end{itemize}
Intuitively, the privacy condition guarantees that even if one party (say
Alice) is curious, and retains its view from the protocol (i.e., $M_1$),
this view reveals nothing more to it about the inputs and outputs of the
other parties (namely, $Y,Z$), than what its own inputs and outputs reveal
(as long as the other parties erase their own views). In other words, a
curious party may as well simulate a view for itself based on just its
inputs and outputs, rather than retain the actual view it obtained from the
protocol execution.

For simplicity, we prove all our results for {\em perfect security} as defined
above; this is also the setting for classical positive results like
that of \cite{BenorGoWi88}. But all our bounds do extend to the setting of
statistical security, as we shall show in the full version of this paper
(following \cite{WinklerWu10,PrabhakaranPr12} who extend similar results to
the statistical security case).%
\footnote{We remark that we do not know if our bounds extend to a 
relaxed security setting sometimes considered in the information theory
literature: there the error in security is only required to go to 0 as the
size of the input grows to infinity. Instead, we use the standard
cryptographic security requirement that for any fixed input length, the
error can be driven arbitrarily close to zero by choosing a large enough
security parameter.}
Also, the above security requirements are for an honest execution of the
protocol (corresponding to honest-but-curious or passive corruption of at
most one party). The lower bounds derived in this model typically continue to hold
for active corruption as well (since for many functionalities,
every protocol secure against active corruption is a protocol secure against
passive corruption), but in fact, in our setting (where 1 out of 3 parties is
corrupted), the functions we consider simply do not have secure protocols
against active corruption.

%\paragraph{Secure Sampling.} We shall also consider {\em sampling} functionalities, in which none of the parties receive any input, but all three parties produce output. The functionality is specified by a joint distribution $p_{XYZ}$.  The correctness condition in this case is that the outputs of Alice, Bob and Charlie is distributed according to $p_{XYZ}$. The security conditions remain the same as in the case of secure computation. We remark that if the marginal distribution $p_{XY} = p_X p_Y$ (i.e., $X$ and $Y$ are independent), then a secure computation protocol for $p_{Z|XY}$ can be turned into a secure sampling protocol (with the same communication costs), by having Alice and Bob locally sample inputs $X$ and $Y$ according to $p_X$ and $p_Y$ and then run the computation protocol.

\paragraph{Communication Complexity and Entropy.} A standard approach to
lowerbounding the number of bits in a string is to lowerbound its entropy.
However, in an interactive setting, a party sees the messages in each
round, rather than just a concatenation of all the bits sent over the
entire protocol. In a setting where we allow variable length messages, this
would seem to allow communicating more bits of information than the length
of the transcript itself. But this allows the parties to learn when the
message transmitted in a round ends, implicitly inserting an end-of-message
marker into the bit stream. To account for this, one can require that the
message sent at every round is a codeword in a prefix-free code. (The code
itself can be dynamically determined based on previous messages exchanged
over the link.) As shown in
\Appendixref{communication-lowerbound-through-entropy-lowerbound},
with this requirement, the number of bits communicated in each link is
indeed lowerbounded by the entropy of the transcript in that link.

\paragraph{Normal Forms.} 
In \Appendixref{prelims}, we describe a normal form for a randomized function $p_{Z|XY}$ as well as for the pair $(p_{XY},p_{Z|XY})$, where $p_{XY}$ and $p_{Z|XY}$ are the input distribution and the function respectively. Essentially, these normal forms merge ``equivalent'' inputs and outputs. As argued there, it suffices to study the communication complexity of secure computation for functions $p_{Z|XY}$ and for pairs $(p_{XY},p_{Z|XY})$ in normal form.

% \paragraph{Communication-Ideal Protocol.} We say that a protocol
% $\Pi(p_Xp_Y)$ for
% securely computing a randomized function $p_{Z|XY}$,
% for a product distribution $p_Xp_Y$ with full support,
% is {\em
% communication-ideal} if for each $ij \in \{12,23,31\}$, 
% \[ H(M^{\Pi}_{ij}) = \inf_{\Pi'(p_Xp_Y)} H(M^{\Pi'}_{ij}), \]
% where the infimum is over all secure protocols for $p_{Z|XY}$ with the
% same distribution.
% That is, a communication-ideal protocol achieves the least entropy
% possible for every link, simultaneously. We remark that it is not clear, {\em
% a priori}, how to determine if a given function $p_{Z|XY}$ has a
% communication-ideal protocol for a given distribution.
% 

\paragraph{Communication-Ideal Protocol.} We say that a protocol
$\Pi(p_{XY},p_{Z|XY})$ for
securely computing a randomized function $p_{Z|XY}$,
for a distribution $p_{XY}$ 
is {\em
communication-ideal} if for each $ij \in \{12,23,31\}$, 
\[ H(M^{\Pi}_{ij}) = \inf_{\Pi'(p_{XY},p_{Z|XY})} H(M^{\Pi'}_{ij}), \]
where the infimum is over all secure protocols for $p_{Z|XY}$ with the
same distribution $p_{XY}$.
That is, a communication-ideal protocol achieves the least entropy
possible for every link, simultaneously. We remark that it is not clear, {\em
a priori}, how to determine if a given function $p_{Z|XY}$ has a
communication-ideal protocol for a given distribution $p_{XY}$.

\subsection*{Common Information and Residual Information}
G\'acs and K\"orner~\cite{GacsKorner} introduced the notion of common
information to measure a certain aspect of correlation between two random
variables. The G\'acs-K\"orner common information of a pair of correlated
random variables $(U,V)$ can be defined as $H(U\sqcap V)$, where $U\sqcap V$
is a random variable with maximum entropy among all random variables $Q$ that
are determined both by $U$ and by $V$ (i.e., there are functions $f$ and $g$
such that $Q =f(U)=g(V)$). In \cite{PrabhakaranPr12}, the gap between mutual information and
common information was termed {\em residual information}:
$\RI(U;V):=I(U;V)-H(U\sqcap V)$.

In \cite{WolfWu08}, Wolf and Wullschleger identified (among other things) 
the following important {\em data processing inequality} for residual information.
\begin{lem}[\cite{WolfWu08}]\label{lem:monotone}
If $T,U,V,W$ are jointly distributed random variables such that the following two Markov chains hold: (i) $U-T-W$, and (ii) $T-W-V$, then
\[  \RI(T;W) \leq \RI((U,T);(V,W)).\]
\end{lem}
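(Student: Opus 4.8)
The plan is to turn the inequality into a pure statement about Gács–Körner common parts and then to prove that statement by a value‑by‑value argument that uses \emph{both} Markov hypotheses. Write $R:=(U,T)\sqcap(V,W)$ and $R_0:=T\sqcap W$. Using the two chains, I would first expand
\[
I((U,T);(V,W))=I(T;(V,W))+I(U;(V,W)\mid T)=I(T;W)+I(U;V\mid T,W),
\]
since $I(T;V\mid W)=0$ by (ii) and $I(U;W\mid T)=0$ by (i). Also, $R_0$ is a function of $T$ and of $W$, hence a common function of $(U,T)$ and of $(V,W)$; by the maximality defining $\sqcap$ (if $Q$ is a common function, then so is $(Q,R)$, so $H(Q,R)=H(R)$ and $Q$ is a function of $R$), $R_0$ is a function of $R$. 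Therefore $H(R)-H(R_0)=H(R\mid R_0)$, and substituting into $\RI(\cdot;\cdot)=I(\cdot;\cdot)-H(\cdot\sqcap\cdot)$ shows that $\RI(T;W)\le\RI((U,T);(V,W))$ is \emph{equivalent} to
\[
H(R\mid R_0)\ \le\ I(U;V\mid T,W).
\]

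One side of this is easy. Conditioned on $(T,W)=(t,w)$, the variable $R$ is a function of $U$ alone (via $u\mapsto(u,t)$) and a function of $V$ alone (via $v\mapsto(v,w)$), hence a common function of $U$ and $V$ under $p_{UV|T=t,W=w}$; so $H(R\mid T=t,W=w)$ is at most the common information of $(U,V)$ under that law, which is at most $I(U;V\mid T=t,W=w)$. Averaging over $(t,w)$ gives $H(R\mid T,W)\le I(U;V\mid T,W)$. It then remains to prove the \emph{equality} $H(R\mid R_0)=H(R\mid T,W)$; since $R_0$ is a function of $(T,W)$, this is precisely the assertion $I(R;T,W\mid R_0)=0$.

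To get that, I would note that $R$ satisfies two Markov chains of its own: conditioned on $T$, $R$ is a function of $U$, so $I(R;W\mid T)\le I(U;W\mid T)=0$; conditioned on $W$, $R$ is a function of $V$, so $I(R;T\mid W)\le I(V;T\mid W)=0$. Since $R_0$ is a function of $T$, the chain rule together with $I(R;W\mid T)=0$ gives $I(R;T,W\mid R_0)=I(R;T\mid R_0)$, so it suffices to show $R-R_0-T$. Here I condition on $R_0=r_0$: the chains $R-T-W$ and $R-W-T$ persist (we have conditioned on a function of $T$, resp.\ of $W$), and, because the classes of $T\sqcap W$ are exactly the connected components of the bipartite support graph of $(T,W)$, within this event that graph is connected. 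Now $R-W-T$ gives $p(r,t,w)=p(w)p(r\mid w)p(t\mid w)$ and $R-T-W$ gives $p(r,t,w)=p(t)p(r\mid t)p(w\mid t)$; since $p(w)p(t\mid w)=p(t,w)=p(t)p(w\mid t)$, cancelling yields $p(r\mid t)=p(r\mid w)$ for every $r$ and every $(t,w)$ with $p(t,w)>0$. Propagating this equality along a path in the connected support graph shows that $p(r\mid t)$ does not depend on $t$, i.e.\ $R\perp T$ in this conditional world; hence $I(R;T\mid R_0)=0$, which closes the argument.

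The step I expect to be the crux is the equality $H(R\mid R_0)=H(R\mid T,W)$: the cheap estimate only controls $R$ given the \emph{full} pair $(T,W)$, whereas the statement needs the bound given the coarser $R_0$, and reconciling the two is exactly where both Markov hypotheses must be used simultaneously — one makes $R$ a ``downstream'' of $U$ given $T$, the other makes it a downstream of $V$ given $W$, and combined with the common‑part structure of $(T,W)$ this forces $R$ to be independent of $(T,W)$ once $R_0$ is fixed. The other ingredients (the expansion of $I((U,T);(V,W))$, the universal property of $\sqcap$, and $H(U\sqcap V)\le I(U;V)$ applied conditionally) are routine.
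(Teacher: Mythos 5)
Your proof is correct. Note that the paper itself does not prove this lemma at all --- it is imported from Wolf--Wullschleger \cite{WolfWu08} as a black box --- so what you have produced is a self-contained derivation rather than a variant of an in-paper argument. The structure is sound: the decomposition $I((U,T);(V,W))=I(T;W)+I(U;V\mid T,W)$ follows correctly from the two Markov chains; the universal property of $\sqcap$ (every common function is a function of the maximal one) legitimately reduces the claim to $H(R\mid R_0)\le I(U;V\mid T,W)$; the conditional bound $H(R\mid T,W)\le I(U;V\mid T,W)$ is a correct application of the non-negativity of residual information to the conditional law $p_{UV\mid T=t,W=w}$ (under which $R$ is indeed a common function of $U$ and $V$); and the crux, $I(R;T,W\mid R_0)=0$, is handled properly. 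In particular you correctly verify that the two chains $R-T-W$ and $R-W-T$ survive conditioning on $R_0=r_0$ (because $R_0$ is a function of $T$ and also a function of $W$), that the conditional support graph of $(T,W)$ is a single connected component by the standard characterization of $T\sqcap W$, and that cancelling the two factorizations gives $p(r\mid t)=p(r\mid w)$ on every support edge, which propagates by connectivity to $R\perp T$ given $R_0$. One point worth making explicit if you write this up: the example $U=(T,S)$, $V=(W,S)$ with $S$ fresh shared randomness shows that $H(R\mid R_0)$ can be strictly positive under the hypotheses, so the slack term $I(U;V\mid T,W)$ in your reduction is genuinely needed --- your proof correctly accounts for this, whereas a naive attempt to show $(U,T)\sqcap(V,W)=T\sqcap W$ would fail.
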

The Markov chain conditions above correspond to the requirement that
it is secure (against honest-but-curious adversaries) to require
a pair of parties holding the views $(U,T)$ and $(V,W)$, to produce outputs
$T,W$, respectively, because for the first party, the rest of its view, $U$,
can be simulated based on the output $T$, independent of the output $W$ (and
similarly, for the second party). The lemma states that under such a secure
transformation from views to outputs, the residual information can only
decrease.

In \cite{PrabhakaranPr12}, the following alternate definition of residual information
was given, which will be useful in lowerbounding conditional mutual information
terms.
\begin{align}
\RI(U;V)=\min_{\substack{Q:\exists f,g \text{ s.t. }\\Q=f(U)=g(V)}} I(U;V|Q). \label{eq:RI_minform}
\end{align}
The random variable $Q$ which achieves the minimum is, in fact, $U\sqcap V$.
Note that the residual information is always non-negative.

\section{Lower Bounds on Communication Complexity}\label{sec:lowerbounds}
This section is divided into three parts. In \Subsectionref{prelim_lbs}, we derive preliminary lower bounds for secure computation. In each of the subsequent subsections, we give different improvements of the lower bounds derived in \Subsectionref{prelim_lbs}.

\subsection{Preliminary Lower Bounds}\label{subsec:prelim_lbs}
We first state the following basic lemma for any protocol for secure computation. Similar results have appeared in the literature earlier (for instance, special cases of \Lemmaref{cutset} appear in \cite{DodisMi00,WinklerWu10,DPAllerton13}).

\Lemmaref{cutset} states the simple fact that, for $(p_{XY},p_{Z|XY})$ in normal form, the information about a party's input must flow out through the links she/he is part of, and the information about Charlie's output must flow in through the links he is part of. This crucially relies on the fact that Alice and Bob obtain no output, and Charlie has no input in our model.

\begin{lem}\label{lem:cutset}
Suppose $(p_{XY},p_{Z|XY})$ is in normal form. Then, in any secure protocol $\Pi(p_{XY},p_{Z|XY})$,  the cut isolating Alice from Bob and Charlie must reveal Alice's input $X$, i.e., $H(X| M_{12},M_{31}) =0$. Similarly, $H(Y|M_{12},M_{23})=0$ and $H(Z|M_{23},M_{31})=0$.
\end{lem}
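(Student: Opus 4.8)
The plan is to prove the three statements by essentially the same argument, so I focus on the first: $H(X \mid M_{12}, M_{31}) = 0$, i.e. Alice's input is determined by the transcripts on the two links incident to her. The intuition is that everything about $X$ that affects the rest of the system must travel out of Alice over these two links, and since Alice receives no output, nothing can be "hidden" inside Alice alone; combined with the fact that $(p_{XY}, p_{Z\mid XY})$ is in normal form (so distinct values of $X$ are genuinely distinguishable from the outside), $X$ must be recoverable from $(M_{12}, M_{31})$.

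First I would set up the relevant conditional independences. Since $(M_{12}, M_{31})$ is exactly party~1's transcript view $M_1$, and the state of party~1 is $(X, M_1)$, the messages sent by party~1 in any round are a function of $(X, M_1)$ and fresh randomness; meanwhile messages sent by parties~2 and~3 do not depend on $X$ except through what has already been transmitted to them. Formally, I would argue the Markov chain $X - (M_{12},M_{31}) - (Y, Z, M_{23})$: conditioned on the link transcripts Alice sees, her input is independent of everything the rest of the system holds. This follows by a round-by-round induction on the protocol transcript — at each round, a message is generated either by party~1 (as a randomized function of $(X,M_1)$, so no new dependence between $X$ and the others' states beyond $M_1$ is created) or by party~2 or~3 (as a function of their own view, which depends on $X$ only through $M_{12}$ or $M_{31}$). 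This is the kind of standard "protocol-tree" argument; I would cite the formalization of well-formed protocols in Appendix~\ref{app:prelims}.

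Next, combine this Markov chain with the privacy constraint. By correctness and the security-against-Charlie Markov chain $M_3 - Z - (X,Y)$, the output $Z$ is distributed as $p_{Z\mid XY}$; crucially, because the pair is in \emph{normal form}, for any two distinct values $x \ne x'$ of $X$ there is some $y$ with $p_{Z\mid X=x,Y=y} \ne p_{Z\mid X=x',Y=y}$ (this is precisely what normal form rules out — merging of equivalent inputs). Now suppose for contradiction that $H(X \mid M_{12},M_{31}) > 0$: then there is a transcript value for which two distinct inputs $x,x'$ both occur with positive probability. Using the Markov chain $X - (M_{12}, M_{31}) - (Y,Z,M_{23})$, conditioned on that transcript value the joint distribution of $(Y,Z)$ is the same whether $X = x$ or $X = x'$. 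But by security against Alice, $M_1 = (M_{12},M_{31})$ must be simulatable from $X$ alone — equivalently, $M_1 - X - (Y,Z)$ — so conditioning on $X=x$ versus $X=x'$ should not change $M_1$'s correlation with $Z$ beyond what $X$ itself dictates; running this through together with correctness forces $p_{Z\mid X=x,Y=y} = p_{Z\mid X=x',Y=y}$ for all $y$ in the (full) support, contradicting normal form. Symmetric arguments using $M_2 - Y - (X,Z)$ and the roles of Bob and Charlie give $H(Y\mid M_{12},M_{23})=0$ and $H(Z\mid M_{23},M_{31})=0$; for the last one I would instead use that Charlie's output $Z$ is a function of his view $V_3 = (M_{23}, M_{31})$ together with his local randomness, and that security against Charlie ($M_3 - Z - (X,Y)$) together with the fact that Charlie has no input lets his randomness be resampled, so in fact $Z$ can be taken to be determined by $(M_{23},M_{31})$.

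The main obstacle is the careful handling of the interaction between the two Markov chains — the "information-flow" chain $X - M_1 - (Y,Z,M_{23})$ coming from the protocol structure, and the privacy chain $M_1 - X - (Y,Z)$ — to extract a genuine contradiction with normal form rather than just a statistical near-equality. The cleanest route is probably to fix a transcript $\tau = (m_{12}, m_{31})$ of positive probability, note that conditioned on $\{M_{12}=m_{12}, M_{31}=m_{31}\}$ the variable $X$ is independent of $(Y,Z)$ (from the first chain), yet $p_{Z \mid XY}$ only depends on the pair $(X,Y)$ — so for this to be consistent across the positively-supported values of $X$ given $\tau$, those values must induce the identical kernel $p_{Z\mid X=\cdot, Y=y}$ for every $y$, which normal form forbids unless there is only one such value. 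I expect the bookkeeping to be routine once the two chains are in hand; the conceptual content is entirely in establishing the information-flow Markov chain from the definition of a well-formed protocol.
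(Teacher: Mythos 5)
Your overall strategy for the first claim --- combine a cut-set conditional independence coming from the protocol structure with privacy against Alice and correctness, then invoke normal form to force two candidate inputs to induce the same kernel $p_{Z|X=\cdot,Y=y}$ --- is the paper's argument. But there are two genuine gaps. The information-flow Markov chain you assert, $X - (M_{12},M_{31}) - (Y,Z,M_{23})$, is false when $X$ and $Y$ are correlated: if the transcript prefix is (say) empty or input-independent, $X$ and $Y$ remain dependent after conditioning on $(M_{12},M_{31})$. The lemma is stated and used for general $p_{XY}$ in normal form, not only product distributions, and your ``cleanest route'' at the end leans on exactly the false unconditional version (``conditioned on $\tau$, $X$ is independent of $(Y,Z)$''). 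The correct cut-set statement conditions on Bob's input: $(M_{23},Z)$ is independent of $X$ given $(Y,M_{12},M_{31})$, i.e., $p(x,y,z,m_{12},m_{31}) = p_{XY}(x,y)\,p(m_{12},m_{31}|x,y)\,p(z|m_{12},m_{31},y)$, which is what the paper uses; with that repair the contradiction with normal form goes through, provided you also pick the witness $y$ with both $p_{XY}(x,y)>0$ and $p_{XY}(x',y)>0$ (this is exactly what the pair normal form guarantees via the sets $\mathcal{S}_{x,x'}$, and it is needed to run the argument for $x$ and $x'$ at the same $y$).

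The second gap is your treatment of $H(Z|M_{23},M_{31})=0$. Arguing that Charlie's randomness ``can be resampled'' so that ``$Z$ can be taken to be determined by $(M_{23},M_{31})$'' modifies the protocol rather than proving a property of the given one, and it nowhere uses the normal-form condition on $Z$ --- which is essential: if $Z$ were a coin flip independent of the inputs, Charlie could output it with no communication at all, and only the merging of $z\cong z'$ in normal form rules this out. The paper's proof of this part is a separate ratio argument, not a cosmetic symmetrization of the $X$ case: assuming $z\neq z'$ both occur with positive probability given $(m_{23},m_{31})$, it combines the protocol factorization $p(z|x,y,m_{23},m_{31})=p(z|m_{23},m_{31})$ with privacy against Charlie, $p(m_{23},m_{31}|x,y,z)=p(m_{23},m_{31}|z)$, to show that $p_{Z|XY}(z|x,y)/p_{Z|XY}(z'|x,y)$ is constant over the support of $p_{XY}$, i.e., $z\cong z'$, contradicting normal form. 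This argument needs to be supplied; as written, your proposal does not establish the third claim.
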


A proof is given in \Appendixref{proofs}. We obtain a preliminary lower bound in \Theoremref{prelim_lbs} below by using the above lemma and the data-processing inequality for residual information in \Lemmaref{monotone}. Note that the assumption of $(p_{XY},p_{Z|XY})$ being in normal form below is without loss of generality (\Appendixref{prelims}).

\begin{thm}\label{thm:prelim_lbs}
Any secure protocol $\Pi(p_{XY},p_{Z|XY})$, where $(p_{XY},p_{Z|XY})$ is in normal form, should satisfy the following lower bounds on the entropy of the transcripts on each link.
\begin{align}
H(M_{23}) &\geq \max\{RI(X;Z), RI(X;Y)\} + H(Y,Z|X) \label{eq:prelim_lb_M23},\\
H(M_{31}) &\geq \max\{RI(Y;Z), RI(X;Y)\} + H(X,Z|Y) \label{eq:prelim_lb_M31},\\
H(M_{12}) &\geq \max\{RI(X;Z), RI(Y;Z)\} + H(X,Y|Z) \label{eq:prelim_lb_M12}.
\end{align}
\end{thm}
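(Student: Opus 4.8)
The plan is to establish the bound \Equationref{prelim_lb_M23} on $H(M_{23})$; the bounds \Equationref{prelim_lb_M31} and \Equationref{prelim_lb_M12} then follow by symmetry, cyclically permuting the three parties (and correspondingly the triples $X,Y,Z$ and $M_1,M_2,M_3$), noting that $M_{23}$ is the link not incident to Alice, $M_{31}$ the link not incident to Bob, and $M_{12}$ the link not incident to Charlie. Throughout I use $M_1=(M_{12},M_{31})$, $M_2=(M_{12},M_{23})$, $M_3=(M_{23},M_{31})$, and the privacy Markov chains $M_1 - X - (Y,Z)$, $M_2 - Y - (X,Z)$, $M_3 - Z - (X,Y)$. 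The backbone of the argument is the split
\[ H(M_{23}) \;=\; I(M_{23};M_1) \;+\; H(M_{23}\mid M_1), \]
and I bound the second summand from below by $H(Y,Z\mid X)$ and the first by $\max\{\RI(X;Z),\RI(X;Y)\}$.

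For $H(M_{23}\mid M_1)$: by \Lemmaref{cutset}, $(M_{12},M_{23})$ determines $Y$ and $(M_{23},M_{31})$ determines $Z$, so $H(Y,Z\mid M_1,M_{23})=0$ and hence $H(M_{23}\mid M_1)\ge I(M_{23};Y,Z\mid M_1)=H(Y,Z\mid M_1)$. Now privacy against Alice gives $I(Y,Z;M_1\mid X)=0$, and $X$ is a function of $M_1$ (again \Lemmaref{cutset}), so $H(Y,Z\mid M_1)=H(Y,Z\mid M_1,X)=H(Y,Z\mid X)$. This already recovers the na\"ive bound $H(M_{23})\ge H(Y,Z\mid X)$.

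For $I(M_{23};M_1)$: I apply the residual-information data-processing inequality (\Lemmaref{monotone}) with $(U,T,W,V)=(M_1,X,Z,M_3)$; its two hypotheses $M_1 - X - Z$ and $X - Z - M_3$ are immediate weakenings of privacy against Alice and against Charlie respectively. Since $X$ is determined by $M_1$ and $Z$ by $M_3$, we get $\RI(X;Z)\le \RI\big((M_1,X);(M_3,Z)\big)=\RI(M_1;M_3)$. As $M_{31}$ is a common (coordinate) function of $M_1$ and $M_3$, the minimum form \Equationref{RI_minform} gives $\RI(M_1;M_3)\le I(M_1;M_3\mid M_{31})=I(M_{12};M_{23}\mid M_{31})\le I(M_{23};M_1)$. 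Repeating with $(U,T,W,V)=(M_1,X,Y,M_2)$ (hypotheses now from privacy against Alice and against Bob), and using that $M_{12}$ is common to $M_1$ and $M_2$, yields $\RI(X;Y)\le I(M_{31};M_{23}\mid M_{12})\le I(M_{23};M_1)$. Hence $I(M_{23};M_1)\ge \max\{\RI(X;Z),\RI(X;Y)\}$, and adding the bound of the previous paragraph finishes the proof.

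The one place to be careful is the treatment of $H(M_{23}\mid M_1)$: one must pull out the full $H(Y,Z\mid X)$ there without ``consuming'' the mutual-information term $I(M_{23};M_1)$. This is exactly what privacy against Alice buys --- conditioning on the extra transcripts $M_1$ leaves the uncertainty of $(Y,Z)$ unchanged once $X$ is known --- so the equality $H(Y,Z\mid M_1)=H(Y,Z\mid X)$ holds on the nose rather than as an inequality in the wrong direction. Apart from this, the main design choice (and the real content of the argument) is which random variables to feed into \Lemmaref{monotone}: each invocation pairs one ``full view'' $M_i$ with the secret it reconstructs, chosen so that the two views overlap in exactly one link $M_{ij}$, which is then removed via the common variable $Q=M_{ij}$ in \Equationref{RI_minform}.
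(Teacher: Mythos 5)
Your proof is correct and follows essentially the same route as the paper's: the same decomposition of $H(M_{23})$ into a mutual-information term plus $H(M_{23}|M_{12},M_{31})$, the same use of \Lemmaref{cutset} and privacy against Alice for the conditional-entropy term, and the same pairing of views with the secrets they reconstruct in \Lemmaref{monotone} followed by removing the shared link via $Q=M_{ij}$ in \eqref{eq:RI_minform}. The only differences are cosmetic bookkeeping (you bound $I(M_{23};M_1)$ directly rather than $\max\{H(M_{23}|M_{31}),H(M_{23}|M_{12})\}$, and you swap the roles of $T$ and $W$ in \Lemmaref{monotone}).
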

\begin{proof}
We shall prove \eqref{eq:prelim_lb_M23}. The other two inequalities 
follow symmetrically.
\begin{align}
H(M_{23}) &\geq \max\{H(M_{23}|M_{31}),H(M_{23}|M_{12})\} \notag \\
&= \max\{I(M_{23};M_{12}|M_{31}),I(M_{23};M_{31}|M_{12})\} + H(M_{23}|M_{12},M_{31}).
\label{eq:prelim_bound_M23}
\end{align}
Firstly, we can bound the last term of \eqref{eq:prelim_bound_M23} as follows (to
already get a na\"ive bound):
\begin{align*}
H(M_{23}|M_{12},M_{31}) &\stackrel{\text{(a)}}{=} H(M_{23},Y,Z|M_{12},M_{31},X)\\
&\geq H(Y,Z|M_{12},M_{31},X) \stackrel{\text{(b)}}{=} H(Y,Z | X),
\end{align*}
where (a) follows from \Lemmaref{cutset} and (b) follows from the privacy against Alice.
Next, we lower bound the first term inside $\max$ of
\eqref{eq:prelim_bound_M23} by $RI(X;Z)$ as follows. Firstly,
\begin{align}
\label{eq:condI-vs-RI}
I(M_{23};M_{12}|M_{31}) = I(M_{23},M_{31};M_{12},M_{31}|M_{31}) \geq RI(M_{23},M_{31};M_{12},M_{31}),
\end{align}
where the inequality follows from \eqref{eq:RI_minform}, the alternate
definition of residual information, by taking
$Q=M_{31}$.
Now, by privacy against Charlie, we have $(M_{23},M_{31})-Z-X$ and privacy against
Alice, we have $(M_{12},M_{31})-X-Z$. Applying \Lemmaref{monotone} with the
above markov chains, together with \Lemmaref{cutset}, we get
\[RI(M_{23},M_{31};M_{12},M_{31}) \geq RI(Z;X) = RI(X;Z).\]
Similarly, we can lower bound the second term inside $\max$ of \eqref{eq:prelim_bound_M23} by $RI(X;Y)$,
completing the proof.
\end{proof}

A consequence of \Lemmaref{cutset} is that the transcripts in a secure computation protocol form shares in a CMSS scheme for the same distribution $p_{XYZ}=p_{XY}p_{Z|XY}$; see \Appendixref{CMSS_sampling}. There we derive bounds on the sizes of these shares which, in fact, imply \Theoremref{prelim_lbs} (and \Theoremref{improved_prelim_lbs}). In the rest of the paper we will restrict our attention to $p_{XY}$ which have full support (and, without loss of generality, $p_{Z|XY}$ expressed in normal form). This will allow us to strengthen the preliminary bounds in \Theoremref{prelim_lbs}. Notice that such $(p_{XY}, p_{Z|XY})$ are in normal form and hence \Lemmaref{cutset} holds.

%--------------------- IMPROVEMENTS -----------------------

\subsection{Distribution Switching and Improved Lower Bounds - I}\label{subsec:improved_lbs}

To improve the bounds in \Theoremref{prelim_lbs}, we give a technique, {\em distribution switching}, which significantly improves the above bounds and leads to one of our main theorems.

The following lemma states that privacy requirements imply that the transcript $M_{12}$ generated by a secure protocol computing $p_{Z|XY}$ is independent of both the inputs. Moreover, if the function $p_{Z|XY}$ satisfies some additional constraints, then the other two transcripts also become independent of the inputs. For a distribution $p_{XY}$, a bipartite graph on vertex set $\X \cup \Y$ is said to be the {\em characteristic bipartite graph of $p_{XY}$}, if $x\in\X$ and $y\in\Y$ are connected whenever $p_{XY}(x,y)>0$. The proof of the following lemma is along the lines of a similar lemma in~\cite{DPAllerton13} and we prove it in \Appendixref{proofs} for completeness.

%------------------------------
\begin{lem}\label{lem:XYZ_inde_M123}
Consider a function $p_{Z|XY}$ not necessarily in normal form. 
\begin{enumerate}
\item[1. ] Suppose that $p_{XY}$ is such that the characteristic bipartite graph of $p_{XY}$ is connected. Then, for any secure protocol $\Pi(p_{XY},p_{Z|XY})$, we have $I(X,Y,Z;M_{12})=0$.
\item[2. ] Suppose $p_{XY}$ has full support and $p_{Z|XY}$ satisfies the following condition:\\[0.15cm]
{\bf Condition 1.} There is no non-trivial partition $\mathcal{X} = \mathcal{X}_1 \cup \mathcal{X}_2$ (i.e., $\mathcal{X}_1 \cap \mathcal{X}_2 = \varnothing$ and neither $\mathcal{X}_1$ nor $\mathcal{X}_2$ is empty), such that if $\mathcal{Z}_k = \{z \in {\mathcal Z} : x \in \mathcal{X}_k, y \in \mathcal{Y}, p(z|x,y)>0\}, k=1,2$, their intersection $\mathcal{Z}_1 \cap \mathcal{Z}_2$ is empty. \\[0.15cm]
Then, for any secure protocol $\Pi(p_{XY},p_{Z|XY})$, we have $I(X,Y,Z;M_{31})=0$.
\item[3. ] Suppose $p_{XY}$ has full support and $p_{Z|XY}$ satisfies the following condition:\\[0.15cm]
{\bf Condition 2.} There is no non-trivial partition $\mathcal{Y} = \mathcal{Y}_1 \cup \mathcal{Y}_2$ such that if $\mathcal{Z}_k = \{z\in{\mathcal Z} : x \in \mathcal{X}, y \in \mathcal{Y}_k, p(z|x,y)>0\}, k=1,2$, their intersection $\mathcal{Z}_1 \cap \mathcal{Z}_2$ is empty.\\[0.15cm]
Then, for any secure protocol $\Pi(p_{XY},p_{Z|XY})$, we have $I(X,Y,Z;M_{23})=0$.
\end{enumerate}
\end{lem}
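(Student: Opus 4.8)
The plan is to read off, from the three privacy Markov chains, that each link's transcript has a conditional law given $(X,Y,Z)$ that depends on only one of the three variables, and then to argue that the combinatorial hypotheses force that conditional law to be constant over the support of $p_{XYZ}$ --- which is precisely the assertion $I(X,Y,Z;M_{ij})=0$. Using $M_1=(M_{12},M_{31})$, $M_2=(M_{12},M_{23})$, $M_3=(M_{23},M_{31})$ and marginalizing, privacy against Alice ($M_1-X-(Y,Z)$) gives $p_{M_{12}|XYZ}(\cdot|x,y,z)=p_{M_{12}|X}(\cdot|x)$ and $p_{M_{31}|XYZ}(\cdot|x,y,z)=p_{M_{31}|X}(\cdot|x)$; privacy against Bob ($M_2-Y-(X,Z)$) gives $p_{M_{12}|XYZ}(\cdot|x,y,z)=p_{M_{12}|Y}(\cdot|y)$ and $p_{M_{23}|XYZ}(\cdot|x,y,z)=p_{M_{23}|Y}(\cdot|y)$; and privacy against Charlie ($M_3-Z-(X,Y)$) gives $p_{M_{23}|XYZ}(\cdot|x,y,z)=p_{M_{23}|Z}(\cdot|z)$ and $p_{M_{31}|XYZ}(\cdot|x,y,z)=p_{M_{31}|Z}(\cdot|z)$.

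For Part 1, I would fix $(x,y)$ with $p_{XY}(x,y)>0$ and any $z$ with $p(z|x,y)>0$; equating the two expressions for $p_{M_{12}|XYZ}(\cdot|x,y,z)$ yields $p_{M_{12}|X}(\cdot|x)=p_{M_{12}|Y}(\cdot|y)$ whenever $x$ and $y$ are adjacent in the characteristic bipartite graph of $p_{XY}$. Since that graph is connected, an induction along paths shows that $p_{M_{12}|X}(\cdot|x)$ is the same for every $x\in\mathcal{X}$ (a connected graph on at least two vertices has no isolated vertex, so every such $x$ lies in the support of $X$). Hence $p_{M_{12}|XYZ}(\cdot|x,y,z)$ is constant on the support of $p_{XYZ}$, i.e.\ $I(X,Y,Z;M_{12})=0$.

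For Part 2, I would introduce the auxiliary graph $G$ on vertex set $\mathcal{X}$ in which $x$ and $x'$ are adjacent iff some output $z\in\mathcal{Z}$ is reachable from both, i.e.\ $p(z|x,y)>0$ and $p(z|x',y')>0$ for some $y,y'\in\mathcal{Y}$; the first observation is that Condition~1 is exactly ``$G$ is connected'' (a disconnection of $G$ into a component $\mathcal{X}_1$ and its complement $\mathcal{X}_2$ gives a nontrivial partition with $\mathcal{Z}_1\cap\mathcal{Z}_2=\varnothing$, and conversely any such partition disconnects $G$). Then, for $x,x'$ adjacent in $G$ via a common $z$, full support of $p_{XY}$ lets me pick positive-probability triples $(x,y,z)$ and $(x',y',z)$, and equating the ``privacy against Alice'' and ``privacy against Charlie'' forms of $p_{M_{31}|XYZ}$ gives $p_{M_{31}|X}(\cdot|x)=p_{M_{31}|Z}(\cdot|z)=p_{M_{31}|X}(\cdot|x')$; connectedness of $G$ then forces $p_{M_{31}|X}(\cdot|\cdot)$ to be constant over $\mathcal{X}$, and since $p_{M_{31}|XYZ}(\cdot|x,y,z)=p_{M_{31}|X}(\cdot|x)$ and $p_{XY}$ has full support, we conclude $I(X,Y,Z;M_{31})=0$. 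Part 3 is the mirror image: replace $X$ by $Y$ throughout, use privacy against Bob and Charlie, and use Condition~2 in place of Condition~1.

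The main obstacle --- really the only nontrivial point --- is establishing the equivalence between the combinatorial Conditions 1 and 2 and connectivity of the auxiliary graphs, and making sure the full-support hypothesis is invoked correctly: it is needed both to instantiate each edge of $G$ by an honest positive-probability triple $(x,y,z)$ (so that the Markov-chain identities actually apply along that edge) and to guarantee that ``constant over $\mathcal{X}$'' upgrades to ``constant over the support of $p_{XYZ}$''. Once that is in place, the remainder is a routine path-induction in a connected graph together with marginalization of the three privacy Markov chains.
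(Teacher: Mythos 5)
Your proposal is correct and follows essentially the same route as the paper's proof: equating the two marginalized privacy Markov chains along edges of the characteristic bipartite graph (Part~1) or of the ``common reachable output'' graph (Parts~2 and~3), and then propagating equality of the conditional laws $p_{M_{ij}|\cdot}$ by path induction using connectivity. The only cosmetic differences are that you make explicit the equivalence of Condition~1 (resp.~2) with connectivity of the auxiliary graph --- which the paper uses implicitly --- and that you conclude by noting the full conditional law $p_{M_{ij}|XYZ}$ is constant on the support, whereas the paper first reduces $I(X,Y,Z;M_{ij})=0$ to $I(X;M_{ij})=0$ via the chain rule and privacy against Alice; both are valid.
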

We point out that $p_{XY}$ having a connected characteristic bipartite graph is a weaker condition than $p_{XY}$ having full support.

%------------------------------

\subsection*{Distribution Switching}

We will now strengthen the lower bounds from \Theoremref{prelim_lbs}. Specifically, we will argue that even if the protocol is allowed to depend on the input distribution (as we do here), privacy requirements will require that the lower bounds derived for when the distributions of the inputs are changed continue to hold for the original setting. 

We note that any secure protocol $\Pi(p_{XY},p_{Z|XY})$, where distribution $p_{XY}$ has full support, continues to be a secure protocol even if we switch the input distribution
to a different one $p_{X'Y'}$. This follows directly from examining the correctness and privacy conditions required for a protocol to be secure. 

\begin{thm}\label{thm:improved_prelim_lbs}
For any secure protocol $\Pi(p_{XY},p_{Z|XY})$, where $p_{Z|XY}$ is in normal form and $p_{XY}$ has full support, we have the following strengthening of~\eqref{eq:prelim_lb_M12}:
\begin{align}
H(M_{12}) &\geq \max\{\sup_{p_{X'Y'}}\left(RI(X';Z')+H(X',Y'|Z')\right), \sup_{p_{X'Y'}}\left(RI(Y';Z') + H(X',Y'|Z')\right)\}, \label{eq:improved_prelim_lb_M12}
\end{align}
where the supremizations are over $p_{X'Y'}$ having full support and the objective functions are evaluated using $p_{X'Y'Z'}(x,y,z)=p_{X'Y'}(x,y)p_{Z|XY}(z|x,y)$.\\[0.15cm]
If $p_{Z|XY}$ (in normal form) satisfies Condition~1 of \Lemmaref{XYZ_inde_M123}, then for any secure protocol $\Pi(p_{XY},p_{Z|XY})$, where $p_{XY}$ has full support, we have the following strengthening of~\eqref{eq:prelim_lb_M31}:
\begin{align}
H(M_{31}) &\geq \max\{\sup_{p_{X'Y'}}\left(RI(Y';Z')+H(X',Z'|Y')\right), \sup_{p_{X'Y'}}\left(RI(X';Y') + H(X',Z'|Y')\right)\}, \label{eq:improved_prelim_lb_M31}
\end{align}
where the supremizations are over $p_{X'Y'}$ having full support and the objective functions are evaluated using $p_{X'Y'Z'}(x,y,z)=p_{X'Y'}(x,y)p_{Z|XY}(z|x,y)$.\\[0.15cm]
If $p_{Z|XY}$ (in normal form) satisfies Condition~2 of \Lemmaref{XYZ_inde_M123}, then for any secure protocol $\Pi(p_{XY},p_{Z|XY})$, where $p_{XY}$ has full support, we have the following strengthening of~\eqref{eq:prelim_lb_M23}:
\begin{align}
H(M_{23}) &\geq \max\{\sup_{p_{X'Y'}}\left(RI(X';Z')+H(Y',Z'|X')\right), \sup_{p_{X'Y'}}\left(RI(X';Y') + H(Y',Z'|X')\right)\}, \label{eq:improved_prelim_lb_M23}
\end{align}
where the supremizations are over $p_{X'Y'}$ having full support and the objective functions are evaluated using $p_{X'Y'Z'}(x,y,z)=p_{X'Y'}(x,y)p_{Z|XY}(z|x,y)$.\\
\end{thm}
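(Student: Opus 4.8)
The plan is to derive \eqref{eq:improved_prelim_lb_M12}, \eqref{eq:improved_prelim_lb_M31} and \eqref{eq:improved_prelim_lb_M23} by combining three ingredients already in hand: the preliminary bounds of \Theoremref{prelim_lbs}, the transcript‑independence statements of \Lemmaref{XYZ_inde_M123}, and the observation (recorded just above the theorem) that a secure protocol for a full‑support input distribution remains secure when the inputs are redrawn from any other full‑support distribution. Concretely, fix a secure protocol $\Pi=\Pi(p_{XY},p_{Z|XY})$ with $p_{Z|XY}$ in normal form and $p_{XY}$ of full support, and let $p_{X'Y'}$ be an arbitrary full‑support distribution on $\X\times\Y$. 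Since the next‑message and output functions of $\Pi$ are fixed objects, running $\Pi$ on inputs from $p_{X'Y'}$ produces, for every fixed input pair $(x,y)$, exactly the same conditional law of the three transcripts and of Charlie's output as running it on inputs from $p_{XY}$. Hence $\Pi$, viewed as a protocol for $(p_{X'Y'},p_{Z|XY})$, is still correct and --- by the switching observation --- still secure, and $(p_{X'Y'},p_{Z|XY})$ is again in normal form (as $p_{Z|XY}$ is and $p_{X'Y'}$ has full support). Therefore \Theoremref{prelim_lbs} applies to it: writing $H^{p_{X'Y'}}$ for entropies under the law $p_{X'Y'}\,p_{Z|XY}$ and priming the corresponding variables,
\[ H^{p_{X'Y'}}(M_{12})\ \ge\ \max\{RI(X';Z'),RI(Y';Z')\}+H(X',Y'\mid Z'), \]
and similarly for $M_{31}$ and $M_{23}$ via \eqref{eq:prelim_lb_M31} and \eqref{eq:prelim_lb_M23}.

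The key step is then to observe that the left‑hand side does not change under the switch. By part~1 of \Lemmaref{XYZ_inde_M123} --- the characteristic bipartite graph of the full‑support $p_{XY}$ is complete, hence connected --- we have $I(X,Y,Z;M_{12})=0$ under $p_{XY}$; in particular $I(X,Y;M_{12})=0$, so the conditional law $p_{M_{12}\mid XY}(\cdot\mid x,y)$ is one and the same distribution over all $(x,y)$ in the support of $p_{XY}$, i.e.\ over all of $\X\times\Y$. Since $\Pi$'s behaviour on a fixed input pair is independent of the input distribution, this same constant conditional law governs $M_{12}$ when the inputs come from $p_{X'Y'}$; hence the marginal of $M_{12}$, and therefore $H(M_{12})$, is the same under $p_{XY}$ and under $p_{X'Y'}$. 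The identical argument, invoking part~2 of \Lemmaref{XYZ_inde_M123} under Condition~1 and part~3 under Condition~2, gives the switch‑invariance of $H(M_{31})$ and $H(M_{23})$ under the respective hypotheses.

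Combining the two steps, for the original protocol $H(M_{12})=H^{p_{X'Y'}}(M_{12})\ge RI(X';Z')+H(X',Y'\mid Z')$ and also $\ge RI(Y';Z')+H(X',Y'\mid Z')$; as $p_{X'Y'}$ ranges over all full‑support distributions, taking the supremum in each inequality and then the maximum of the two yields \eqref{eq:improved_prelim_lb_M12} (indeed one even gets the a~priori stronger $H(M_{12})\ge\sup_{p_{X'Y'}}(\max\{RI(X';Z'),RI(Y';Z')\}+H(X',Y'\mid Z'))$, but the max‑of‑suprema form stated is the one we shall use). Inequalities \eqref{eq:improved_prelim_lb_M31} and \eqref{eq:improved_prelim_lb_M23} follow in exactly the same way from the $M_{31}$ and $M_{23}$ bounds of \Theoremref{prelim_lbs}, using Conditions~1 and~2 respectively to license the switch‑invariance of that link's entropy. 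I expect the only real subtlety to be precisely this switch‑invariance: because the protocol is allowed to depend on the input distribution, the equality $H(M_{ij})=H^{p_{X'Y'}}(M_{ij})$ is not automatic, and it is exactly here that \Lemmaref{XYZ_inde_M123} (together with the fact that the protocol's per‑input behaviour is distribution‑independent) is indispensable; everything else is bookkeeping.
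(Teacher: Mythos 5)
Your proposal is correct and follows essentially the same route as the paper's proof: apply \Theoremref{prelim_lbs} under a switched full-support distribution $p_{X'Y'}$, and use \Lemmaref{XYZ_inde_M123} to conclude that the relevant transcript's marginal (hence its entropy) is unchanged by the switch. You actually spell out the switch-invariance step (constant conditional law $p_{M_{ij}\mid XY}$ plus distribution-independence of the protocol's per-input behaviour) more explicitly than the paper does, which is a welcome clarification rather than a deviation.
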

\begin{proof}
By \Lemmaref{XYZ_inde_M123}, it follows that the transcript $M_{12}$ of the protocol (under both the original and the switched input distributions) must remain independent of the input data $X,Y$. This allows us to argue using \Theoremref{prelim_lbs}, that 
\[H(M_{12}) \geq \max\{\sup_{p_{X'Y'}}\left(RI(X';Z')+H(X',Y'|Z')\right), \sup_{p_{X'Y'}}\left(RI(Y';Z') + H(X',Y'|Z')\right)\},\] where the supremizations are over $p_{X'Y'}$ having full support and the objective functions are evaluated using $p_{X'Y'Z'}(x,y,z)=p_{X'Y'}(x,y)p_{Z|XY}(z|x,y)$.

Similarly, if the function $p_{Z|XY}$ satisfies the condition 1 and 2 of \Lemmaref{XYZ_inde_M123}, we can show the other two bounds on $H(M_{31})$ and $H(M_{23})$ as well.
\end{proof}
In \Appendixref{CMSS_sampling}, we derive similar bounds for the size of the shares of a CMSS scheme.

%--------------------- END OF IMPROVEMENTS -----------------------

%-------------------- BEGIN OF IMPROVED LOWER BOUNDS 2----------------------
\subsection{An Information Inequality for Protocols and Improved Lower Bounds - II}\label{subsec:improved_lbs2}
%-------------------- BEGIN OF INFORMATION INEQUALITY ----------------------
We can give a different improvement to \Theoremref{prelim_lbs} by exploiting the fact that in a protocol, transcripts are generated by the parties interactively, rather than by an omniscient dealer. Towards this, we derive an information inequality relating the transcripts on different links in general 3-party protocols, in which parties do not share any common or correlated randomness or correlated inputs at the beginning of the protocol. Note that our model for protocols does indeed satisfy these conditions when the inputs are independent of each other.
\begin{lem}
\label{lem:infoineq}
In any well-formed 3-party protocol, if the inputs to the parties are
independent of each other, then, for $\{\alpha,\beta,\gamma\}=\{1,2,3\}$,
\[ I(M_{\gamma\alpha};M_{\beta\gamma}) \geq
I(M_{\gamma\alpha};M_{\beta\gamma}|M_{\alpha\beta}).\]
\end{lem}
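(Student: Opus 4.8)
The plan is to prove the equivalent statement $I(M_{\gamma\alpha};M_{\beta\gamma}) \ge I(M_{\gamma\alpha};M_{\beta\gamma}\mid M_{\alpha\beta})$ by induction on the rounds of the protocol, tracking how the three transcripts $M_{12},M_{23},M_{31}$ grow message by message. Fix the labeling, say $\alpha=1,\beta=2,\gamma=3$, so we want $I(M_{31};M_{23}) \ge I(M_{31};M_{23}\mid M_{12})$. Let $M_{ij}^{(t)}$ denote the prefix of the transcript on link $ij$ after $t$ messages have been sent in the whole protocol, and let $\Delta_t := I(M_{31}^{(t)};M_{23}^{(t)}) - I(M_{31}^{(t)};M_{23}^{(t)}\mid M_{12}^{(t)})$. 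Initially all transcripts are empty, so $\Delta_0 = 0$; the goal is to show $\Delta_t$ is nondecreasing in $t$, whence $\Delta_T \ge 0$ at the end. The single message sent at round $t+1$ travels on exactly one of the three links and is a (possibly randomized) function of the view of its sender — i.e.\ of that sender's input and the two transcripts incident to the sender — and the sender's fresh private randomness is independent of everything else given that view (using the stateless normal form from Footnote on stateless protocols, and crucially that the inputs are mutually independent).

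The key step is a case analysis over which link carries the new message. By symmetry there are really two cases. \emph{Case A: the message is on link $12$}, appended to $M_{12}$; call it $m$, sent by party $1$ or party $2$. Then $M_{31},M_{23}$ are unchanged, and $\Delta_{t+1}-\Delta_t = -\big[I(M_{31};M_{23}\mid M_{12}^{(t)},m) - I(M_{31};M_{23}\mid M_{12}^{(t)})\big]$ (all variables at time $t$). We must show this bracket is $\le 0$, i.e.\ conditioning further on $m$ does not \emph{increase} the conditional mutual information between $M_{31}$ and $M_{23}$. Since $m$ is produced from the sender's view — say party $1$, whose view is $(X, M_{12}^{(t)}, M_{31}^{(t)})$ plus private randomness — we have the Markov chain $m - (M_{12}^{(t)}, M_{31}^{(t)}) - M_{23}^{(t)}$ after also invoking independence of party $1$'s fresh randomness and the independence of inputs. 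From this Markov relation one gets $I(M_{31};M_{23}\mid M_{12}^{(t)}, m) \le I(M_{31};M_{23}\mid M_{12}^{(t)})$ by a short chain-rule computation (expand $I(m, M_{31};M_{23}\mid M_{12})$ two ways and drop a nonnegative term). The symmetric subcase where party $2$ sends is identical with roles of $M_{31},M_{23}$ swapped.

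\emph{Case B: the message is on link $31$ or $23$}; say it is $m$ appended to $M_{31}$, sent by party $1$ or party $3$. Now $M_{23},M_{12}$ are unchanged and $\Delta_{t+1}-\Delta_t = \big[I(m;M_{23}\mid M_{31}^{(t)}) \big] - \big[I(m;M_{23}\mid M_{31}^{(t)}, M_{12}^{(t)})\big]$ after expanding both mutual informations via the chain rule on the new symbol $m$. We must show the unconditioned increment dominates the one conditioned on $M_{12}^{(t)}$. Again $m$ comes from the sender's view: if party $1$ sends, its view is $(X, M_{12}^{(t)}, M_{31}^{(t)})$, so $m - (X, M_{12}^{(t)}, M_{31}^{(t)}) - M_{23}^{(t)}$; if party $3$ sends, its view is $(M_{31}^{(t)}, M_{23}^{(t)})$, so $m$ adds nothing to link $31$'s correlation with link $23$ beyond what $M_{23}^{(t)}$ itself carries, and the increment on link $31$ is accounted symmetrically. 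The desired inequality then reduces, via the chain rule for mutual information and the relevant Markov chain, to the statement $I(M_{12}^{(t)}; M_{23}^{(t)} \mid M_{31}^{(t)}) \ge I(M_{12}^{(t)}; M_{23}^{(t)} \mid M_{31}^{(t)}, m)$, which is exactly a Case-A-type inequality and follows from the same Markov argument. So Case B bootstraps off Case A.

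The main obstacle I anticipate is Case B: keeping the bookkeeping straight when the new message lands on one of the two ``asymmetric'' links, because there the quantity $\Delta$ genuinely changes on both the unconditioned and the conditioned side, and one must verify the \emph{difference} moves the right way rather than each term separately. The clean way to handle it is to not split by link at all but to observe that whichever single party $p$ acts, the new message $m$ satisfies a Markov chain of the form $m - (\text{inputs and transcripts in } p\text{'s view}) - (\text{everything else})$, and then to feed this uniformly into one lemma: ``appending to any link a symbol that is a randomized function of one party's view preserves the sign of $\Delta$.'' Proving that one lemma — essentially a submodularity/data-processing statement for the three-way transcript correlation, using independence of inputs to get the Markov chain off the ground at round $1$ — is the crux; the induction then closes immediately since $\Delta_0 = 0$.
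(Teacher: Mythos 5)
Your proposal is correct and follows essentially the same route as the paper: induction over the messages of the protocol, a case split on which link carries the new message, and the key conditional-independence fact that a freshly sent message is independent of the transcript on the link opposite to its sender given the two transcripts incident to the sender (which is where independence of inputs and private randomness enters); the paper states these as its equations \eqref{eq:21inde}--\eqref{eq:32inde} and runs the same chain-rule manipulations. Your repackaging --- tracking the monotonicity of the difference $\Delta_t$ and reducing the link-$31$/$23$ case to a link-$12$-type inequality via the identity $I(m;M_{23}\mid M_{31})-I(m;M_{23}\mid M_{31},M_{12})=I(M_{12};M_{23}\mid M_{31})-I(M_{12};M_{23}\mid M_{31},m)$ --- is a cosmetic but valid reorganization of the paper's three cases.
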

We prove the lemma in \Appendixref{proofs}. We further
note that, as in \eqref{eq:condI-vs-RI},
$I(M_{\gamma\alpha};M_{\beta\gamma}|M_{\alpha\beta}) \ge$ \\ $RI(M_{\gamma\alpha}M_{\alpha\beta};M_{\beta\gamma}M_{\alpha\beta})$.
Hence, if the inputs are independent of each other,
\begin{align}
I(M_{\gamma\alpha};M_{\beta\gamma})
\ge I(M_{\gamma\alpha};M_{\beta\gamma}|M_{\alpha\beta})
\ge RI(M_{\gamma\alpha}M_{\alpha\beta};M_{\beta\gamma}M_{\alpha\beta}).
\label{eq:infoineq-corol}
\end{align}

\iffalse
The purpose of \Lemmaref{infoineq} is to help us lower bound
$I(M_{\gamma\alpha};M_{\beta\gamma})$ in terms of the distribution
$p_{XYZ}$. This is achieved in a few easy steps:
\Lemmaref{infoineq}, combined with \eqref{eq:RI_minform}, lets us
derive that $I(M_{\gamma\alpha};M_{\beta\gamma}) \ge \RI(M_\alpha;M_\beta)$.
Further combined with \Lemmaref{cutset}, the right hand side can be
equated with the residual information of the views of the parties
$\alpha$ and $\beta$. Finally, by the secure data processing inequality
(\Lemmaref{monotone}), this can be lower bounded in terms of the
residual information of (one of) the inputs and the output.
\fi
%-------------------- END OF INFORMATION INEQUALITY ----------------------

This inequality provides us with a means to exploit the protocol structure
behind transcripts. Below, \Theoremref{intermediate_lbs} (specifically,
\eqref{eq:intermediate_lb_M12}) shows that the term $\max\{RI(X;Z),
RI(Y;Z)\}$ in \eqref{eq:prelim_lb_M12} can be replaced by $RI(X;Z)+
RI(Y;Z)$. Note that \Theoremref{intermediate_lbs} is stated and proven for
independent inputs, that is, $p_{XY}=p_Xp_Y$.
In~\Appendixref{dependent_inputs_lbs} we show that (using ideas of
distribution switching from \Subsectionref{improved_lbs}) it implies lower bounds for
dependent inputs $p_{XY}$ with full support as well. However, this extension
is not required in the sequel where we derive our main theorems.

\begin{thm}\label{thm:intermediate_lbs} Any secure protocol $\Pi(p_Xp_Y,p_{Z|XY})$, where $p_{Z|XY}$ is in normal form and $p_X$, $p_Y$ have full support, should satisfy the following lower bounds on the entropy of the transcripts on each link.
\begin{align}
H(M_{23}) &\geq RI(X;Z) + H(Y,Z|X) \label{eq:intermediate_lb_M23}\\
H(M_{31}) &\geq RI(Y;Z) + H(X,Z|Y) \label{eq:intermediate_lb_M31}\\
H(M_{12}) &\geq RI(X;Z) + RI(Y;Z) + H(X,Y|Z) \label{eq:intermediate_lb_M12}
\end{align}
\end{thm}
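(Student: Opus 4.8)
The bounds \eqref{eq:intermediate_lb_M23} and \eqref{eq:intermediate_lb_M31} need no new argument: since $p_Xp_Y$ has full support it is in normal form, so \Theoremref{prelim_lbs} applies, and these two inequalities are obtained simply by retaining the $\RI(X;Z)$ (resp.\ $\RI(Y;Z)$) term inside the maximum in \eqref{eq:prelim_lb_M23} (resp.\ \eqref{eq:prelim_lb_M31}) and discarding $\RI(X;Y)$. So the only real work is \eqref{eq:intermediate_lb_M12}. For that I would \emph{not} start from $H(M_{12})\ge H(M_{12}\mid M_{31})$ as in the proof of \Theoremref{prelim_lbs} (which only yields the single term $\RI(Y;Z)$ or $\RI(X;Z)$, not their sum), but from the exact chain-rule decomposition
\[ H(M_{12}) = I(M_{12};M_{31}) + I(M_{12};M_{23}\mid M_{31}) + H(M_{12}\mid M_{23},M_{31}), \]
and bound the three summands by $\RI(Y;Z)$, $\RI(X;Z)$ and $H(X,Y\mid Z)$ respectively.

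The last summand is handled exactly as in \Theoremref{prelim_lbs}: by \Lemmaref{cutset}, $Z$ is a deterministic function of $(M_{23},M_{31})$ and $(X,Y)$ is a deterministic function of $(M_{12},M_{23},M_{31})$, so $H(M_{12}\mid M_{23},M_{31}) = H(M_{12},X,Y\mid M_{23},M_{31},Z)\ge H(X,Y\mid M_{23},M_{31},Z)=H(X,Y\mid Z)$, the last step by privacy against Charlie. For the middle summand, taking $Q=M_{31}$ in \eqref{eq:RI_minform} gives $I(M_{12};M_{23}\mid M_{31}) = I(M_{12},M_{31};M_{23},M_{31}\mid M_{31})\ge \RI(M_{12},M_{31};M_{23},M_{31}) = \RI(M_1;M_3)$. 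Now privacy against Alice gives the Markov chain $M_1-X-Z$, privacy against Charlie gives $X-Z-M_3$, and \Lemmaref{cutset} makes $X$ a function of $M_1$ and $Z$ a function of $M_3$ (so that appending them leaves the residual information unchanged); hence \Lemmaref{monotone} yields $\RI(M_1;M_3) = \RI((M_1,X);(M_3,Z))\ge \RI(X;Z)$.

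The one summand that resists this routine treatment is the \emph{unconditional} $I(M_{12};M_{31})$: cut-set and the data-processing inequality alone give nothing here, since a priori two of the three transcripts could be (nearly) independent. This is precisely where \Lemmaref{infoineq} and the hypothesis $p_{XY}=p_Xp_Y$ are needed. Applying \Lemmaref{infoineq} with $\{\alpha,\beta,\gamma\}=\{2,3,1\}$ gives $I(M_{12};M_{31})\ge I(M_{12};M_{31}\mid M_{23})$, and the right-hand side is then bounded exactly as the middle summand above but with the parties permuted: with $Q=M_{23}$ in \eqref{eq:RI_minform}, $I(M_{12};M_{31}\mid M_{23})\ge \RI(M_2;M_3)$, and with the Markov chains $M_2-Y-Z$ (privacy against Bob) and $Y-Z-M_3$ (privacy against Charlie) together with \Lemmaref{cutset}, \Lemmaref{monotone} gives $\RI(M_2;M_3)\ge \RI(Y;Z)$. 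Adding the three bounds yields \eqref{eq:intermediate_lb_M12}. The main obstacle is thus localised to this last step: everything else is reassembling tools already used for \Theoremref{prelim_lbs}, and the genuinely new ingredient is passing from the unconditional $I(M_{12};M_{31})$ to its conditioned version via \Lemmaref{infoineq} --- which is exactly what fails for dependent inputs and explains why the theorem is stated for $p_{XY}=p_Xp_Y$.
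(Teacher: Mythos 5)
Your proposal is correct and follows essentially the same route as the paper: the paper also reduces \eqref{eq:intermediate_lb_M23} and \eqref{eq:intermediate_lb_M31} to \Theoremref{prelim_lbs}, and proves \eqref{eq:intermediate_lb_M12} by the chain-rule decomposition of $H(M_{12})$ into an unconditional mutual information term (bounded via \Lemmaref{infoineq} and \eqref{eq:infoineq-corol}, which is where independence of the inputs enters), a conditional mutual information term, and $H(M_{12}\mid M_{23},M_{31})$. The only difference is cosmetic: the paper peels off $I(M_{12};M_{23})$ first (getting $RI(X;Z)$ from the unconditional term and $RI(Y;Z)$ from the conditioned one), whereas you peel off $I(M_{12};M_{31})$ first, which is the symmetric relabeling of the same argument.
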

\begin{proof}
Firstly, note that 
\eqref{eq:intermediate_lb_M23} and \eqref{eq:intermediate_lb_M31}
follow from 
\eqref{eq:prelim_lb_M23} and \eqref{eq:prelim_lb_M31}. To prove
\eqref{eq:intermediate_lb_M12}, we have
\begin{align*}
H(M_{12}) &= I(M_{12};M_{23}) + H(M_{12}|M_{23}) \\
&= I(M_{12};M_{23}) + I(M_{12};M_{31}|M_{23}) + H(M_{12}|M_{23},M_{31})\\ 
&\ge RI(X;Z) + RI(Y;Z) + H(X,Y | Z),
\end{align*}
where the last inequality used $H(M_{12}|M_{23},M_{31}) \ge H(X,Y | Z)$ and
$I(M_{12};M_{31}|M_{23}) \ge RI(Y;Z)$ (both as in the proof of
\Theoremref{prelim_lbs}) as well as
$I(M_{12};M_{23}) \geq RI(X;Z)$
(by \eqref{eq:infoineq-corol}, 
which applies since we assume independent inputs).
% % where the last inequality follows from the assumption $p_{XY}=p_Xp_Y$ and \Lemmaref{infoineq} by taking $\alpha=3, \beta=1,\gamma=2$.
% Proceeding along the lines of the proof of \Theoremref{prelim_lbs}, we can bound each term of~\eqref{eq:comp_bound_M23} as follows:
% \begin{align*}
% I(M_{23};M_{12}|M_{31}) &\geq RI(Z;X) = RI(X;Z) & \text{ (by \Lemmaref{infoineq-corol})}\\
% % I(M_{23};M_{31}|M_{12}) &\geq RI(Y;X) = I(Y;X) = 0 & \text{ (since $X$ is indep. of $Y$)}, \\
% I(M_{23};M_{31}|M_{12}) &\geq 0 \\
% H(M_{23}|M_{12},M_{31}) &\geq H(Y,Z | X).
% \end{align*}
\end{proof}

\noindent In \Appendixref{CMSS_sampling} we show that the above proof can be extended to derive lower bounds for secure sampling.

We can improve the bounds in \Theoremref{intermediate_lbs} using distribution switching which leads to our main theorems. The following lemma states that if the inputs $X$ and $Y$ are independent, then privacy requirements imply that certain transcripts generated by a secure protocol computing $p_{Z|XY}$, are independent of certain data. More precisely, we show the following.
%------------------------------
\begin{lem}\label{lem:XY_inde_M23M31}
For any secure protocol $\Pi(p_Xp_Y,p_{Z|XY})$, where $p_{Z|XY}$ may not be in normal form, should satisfy $I(X;M_{23})=I(Y;M_{31})=0$.
\end{lem}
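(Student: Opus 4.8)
The plan is to obtain both independences directly from the privacy conditions together with the independence of the inputs, without invoking any interactive structure of the protocol; this is why the hypothesis only needs $p_{XY}=p_Xp_Y$ and not that $p_{Z|XY}$ is in normal form. I will prove $I(X;M_{23})=0$, and note that $I(Y;M_{31})=0$ follows by the symmetric argument with the roles of Alice and Bob exchanged.

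First I would record the two ingredients. Since the protocol is run on the input distribution $p_Xp_Y$, the induced joint distribution satisfies $I(X;Y)=0$. Privacy against Bob gives the Markov chain $M_2-Y-(X,Z)$, i.e.\ $I(M_2;(X,Z)\mid Y)=0$ where $M_2=(M_{12},M_{23})$; dropping $M_{12}$ and $Z$ (which only decreases mutual information) yields in particular $I(X;M_{23}\mid Y)=0$, i.e.\ $X\perp M_{23}\mid Y$.

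Then I would combine these by the chain rule: $I(X;M_{23},Y)=I(X;Y)+I(X;M_{23}\mid Y)=0+0=0$. Since mutual information is monotone under dropping arguments and nonnegative, $0\le I(X;M_{23})\le I(X;M_{23},Y)=0$, so $I(X;M_{23})=0$. Running the identical argument with privacy against Alice, $M_1-X-(Y,Z)$ (hence $I(Y;M_{31}\mid X)=0$, with $M_{31}\subseteq M_1$) and $I(Y;X)=0$, gives $I(Y;M_{31})=0$.

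There is no real obstacle here; the only points to be careful about are that $M_{23}$ is a component of $M_2$ and $M_{31}$ a component of $M_1$, so that the privacy Markov chains indeed constrain the relevant sub-transcripts, and that it is precisely the input-independence $I(X;Y)=0$ that lets the conditional statement $X\perp M_{23}\mid Y$ be promoted to the unconditional $X\perp M_{23}$ (this step would fail for dependent inputs, which is consistent with the more involved treatment needed for dependent inputs elsewhere).
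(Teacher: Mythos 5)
Your proof is correct and is essentially identical to the paper's: both use the chain rule $I(X;M_{23})\le I(X;M_{23},Y)=I(X;Y)+I(X;M_{23}\mid Y)=0$, with the first term vanishing by input independence and the second by privacy against Bob (noting $M_{23}$ is part of $M_2$), and the symmetric argument for $I(Y;M_{31})$.
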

\begin{proof}
$I(X;M_{23})=0$ follows from $I(X;M_{23}) \leq I(X;M_{23},Y)= $  $I(X;Y) +$  $I(X;M_{23}|Y)=0$, where the last equality follows from independence of $X$ and $Y$ and privacy against Bob. Similarly we can show $I(Y;M_{31})=0$.
\end{proof}
If the inputs $X$ and $Y$ are independent, i.e., $p_{XY}=p_Xp_Y$, then the transcripts $M_{23}$ and $M_{31}$ of the protocol (under both the original and the switched input distributions) must remain independent of $X$ and $Y$ respectively (by \Lemmaref{XY_inde_M23M31}). Note that for the independence of $M_{12}$ and $(X,Y)$, we do not need the independence of inputs. Rather, since we used \Lemmaref{infoineq} (which requires independence of inputs) to get~\eqref{eq:intermediate_lb_M12}, we are forced to consider independent inputs if we use the bound on $H(M_{12})$ in~\eqref{eq:intermediate_lb_M12}. This allows us to argue using \Theoremref{intermediate_lbs}, that
\[ H(M_{12}) \geq \sup_{p_{X'}p_{Y'}} \RI(X';Z') + \RI(Y';Z') + H(X',Y'|Z'),\]
where the supremum is over $p_{X'},p_{Y'}$ which have full support and
the terms in the right hand side are evaluated under the joint distribution
\[p_{X',Y',Z'}(x,y,z) = p_{X'}(x)p_{Y'}(y)p_{Z|X,Y}(z|x,y).\]
Similarly, from \Lemmaref{XY_inde_M23M31}, we know that $M_{31}$ is independent of
$Y$, and $M_{23}$ is independent of $X$. Hence,
\begin{align}
H(M_{31}) &\geq \sup_{p_{Y'}} \RI(Y';Z') + H(X,Z'|Y'),
\label{eq:distrswitchintermediate1}\\
H(M_{23}) &\geq \sup_{p_{X'}} \RI(X';Z') + H(Y,Z'|X'),
\end{align}
where the right hand side of \eqref{eq:distrswitchintermediate1} is
evaluated under $p_{X,Y',Z'}(x,y,z) = p_X(x) p_{Y'}(y) p_{Z|X,Y}(z|x,y)$.
Similarly, for the bound on $H(M_{23})$.

In fact, we can show an even stronger bound than above by a more careful application of distribution switching. This leads us to second of our three main lower bound theorems, which is proved in \Appendixref{main_thms_proofs}.

\noindent {\em Remark:} The above discussion of distribution switching used
\Lemmaref{XY_inde_M23M31}, which holds for independent inputs, that is,
$p_{XY}=p_Xp_Y$. However, as we argue in
\Appendixref{dependent_inputs_lbs}, the resulting lower bounds also hold for
dependent inputs $p_{XY}$ with full support; we state the theorem for this
general case.
\begin{thm} \label{thm:lowerbound}
The following communication complexity bounds hold for any secure protocol $\Pi(p_{XY},p_{Z|XY})$, where $p_{Z|X,Y}$ is in normal form and $p_{XY}$ has full support:
\begin{align}
H(M_{23}) &\geq \left(\sup_{p_{X'}}RI(X';Z')\right) + \left(\sup_{p_{X''}}H(Y,Z''|X'') \right), \label{eq:M23_NoCond}\\
H(M_{31}) &\geq \left(\sup_{p_{Y'}}RI(Y';Z')\right) + \left(\sup_{p_{Y''}}H(X,Z''|Y'') \right), \label{eq:M31_NoCond}\\
H(M_{12}) &\geq \max \left\{ \begin{array}{l} \sup_{p_{X'}} \left(\sup_{p_{Y'}} RI(Y';Z')\right) + \left(\sup_{p_{Y''}} RI(X';Z'') + H(X',Y''|Z'')\right),\\ 
\sup_{p_{Y'}} \left(\sup_{p_{X'}} RI(X';Z')\right) + \left(\sup_{p_{X''}} RI(Y';Z'') + H(X'',Y'|Z'')\right)
\end{array}\right\}, \label{eq:M12_bound}
\end{align}
where the supremizations are over distributions $p_{X'}, p_{X''}, p_{Y'}, p_{Y''}$ having full support. The terms in the right hand side of \eqref{eq:M23_NoCond} are evaluated using the distribution $p_Y$ of the data $Y$ of Bob, i.e., 
\begin{align*}
p_{X',Y,Z'}(x,y,z)&=p_{X'}(x)p_{Y}(y)p_{Z|X,Y}(z|x,y),\\
p_{X'',Y,Z''}(x,y,z)&=p_{X''}(x)p_{Y}(y)p_{Z|X,Y}(z|x,y).\end{align*}
Similarly, the terms in \eqref{eq:M31_NoCond} are evaluated using the the distribution $p_X$ of the data $X$ of Alice. The lower bound in \eqref{eq:M12_bound} does not depend on the distributions $p_X$ and $p_Y$ of the data. The terms on the top row of~\eqref{eq:M12_bound}, for instance, are evaluated using
\begin{align*}
p_{X',Y',Z'}(x,y,z)&=p_{X'}(x)p_{Y'}(y)p_{Z|X,Y}(z|x,y),\\
p_{X',Y'',Z''}(x,y,z)&=p_{X'}(x)p_{Y''}(y)p_{Z|X,Y}(z|x,y).
\end{align*}
\end{thm}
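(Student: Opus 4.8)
The plan is to obtain each of the three bounds by revisiting the proof of \Theoremref{intermediate_lbs} and then applying \emph{distribution switching} to the individual summands rather than to the whole expression at once, exploiting that the transcript being bounded has a law that does not change when we switch an input distribution. By the remark preceding the theorem it suffices to treat independent full-support inputs $p_{XY}=p_Xp_Y$, the reduction from general full-support $p_{XY}$ being as in \Appendixref{dependent_inputs_lbs}. The invariances I will use are: by \Lemmaref{XY_inde_M23M31}, $I(X;M_{23})=0$ for every full-support product distribution, which forces $p_{M_{23}\mid X=x}$ to be independent of $x$ and hence $H(M_{23})$ to be unchanged when $p_X$ is replaced by any other full-support $p_{X'}$ (with $p_Y$ fixed), and symmetrically for $H(M_{31})$ under switching $p_Y$; and by part~1 of \Lemmaref{XYZ_inde_M123}, $M_{12}$ is independent of $(X,Y,Z)$, so $H(M_{12})$ is unchanged under switching either input distribution.

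\emph{The bounds \eqref{eq:M23_NoCond} and \eqref{eq:M31_NoCond}.} From \eqref{eq:intermediate_lb_M23} and the invariance of $H(M_{23})$ one already gets $H(M_{23})\ge\sup_{p_{X'}}(RI(X';Z')+H(Y,Z'|X'))$; the point is to separate the two suprema. Recall that in the proof of \eqref{eq:prelim_lb_M23} one has the identity $H(M_{23})=I(M_{23};M_{12}\mid M_{31})+H(M_{23}\mid M_{12},M_{31})$ with $I(M_{23};M_{12}\mid M_{31})\ge RI(X;Z)$ and $H(M_{23}\mid M_{12},M_{31})\ge H(Y,Z|X)$. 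I would run the protocol under an ambient $p_X$ chosen to (nearly) maximize $H(Y,Z|X)$, so the second summand is at least $\sup_{p_{X''}}H(Y,Z''|X'')$ up to $\epsilon$; for the first summand I would argue it is at least $\sup_{p_{X'}}RI(X';Z')$ regardless of the ambient $p_X$, by passing to a coupled experiment in which the real run and a hypothetical $p_{X'}$-run share the block $(M_{23},Y)$ --- legitimate since $(M_{23},Y)$ has an input-independent law (using that $X\perp Y$ and, by privacy against Bob, that $I(X;M_{23}\mid Y)=0$) --- grafting on the virtual output $Z'$ induced by $(X',Y)$, and transporting the chain $I(M_{23};M_{12}\mid M_{31})\ge RI(M_{23}M_{31};M_{12}M_{31})\ge RI(X';Z')$, which uses only \eqref{eq:RI_minform}, \Lemmaref{cutset} and the privacy Markov chains, to this augmented system. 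Combining the two estimates gives \eqref{eq:M23_NoCond}; \eqref{eq:M31_NoCond} follows by the symmetric argument with $X\leftrightarrow Y$ and $M_{23}\leftrightarrow M_{31}$.

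\emph{The bound \eqref{eq:M12_bound}.} Start from the decomposition in the proof of \eqref{eq:intermediate_lb_M12}, $H(M_{12})=I(M_{12};M_{23})+I(M_{12};M_{31}\mid M_{23})+H(M_{12}\mid M_{23},M_{31})$, with the three summands at least $RI(X;Z)$ (via the protocol inequality \Lemmaref{infoineq}, in the form \eqref{eq:infoineq-corol}), $RI(Y;Z)$, and $H(X,Y|Z)$, respectively. Since $H(M_{12})$ is invariant under switching either input distribution, I would switch in two stages: fix a full-support $p_{X'}$, and within it choose $p_{Y'}$ (nearly) maximizing $I(M_{12};M_{31}\mid M_{23})\ge RI(Y';Z')$; then, keeping $p_{X'}$ but switching Bob's distribution to a possibly different $p_{Y''}$, bound $I(M_{12};M_{23})\ge RI(X';Z'')$ and $H(M_{12}\mid M_{23},M_{31})\ge H(X',Y''|Z'')$ under $p_{X'}p_{Y''}$. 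Since $p_{X'}$ is shared between the two stages while Bob's distribution may differ, this produces exactly the first line of the $\max$ in \eqref{eq:M12_bound}; the second line comes from the mirror-image two-stage argument (fixing a shared $p_{Y'}$, switching Alice's distribution, and using the decomposition of $H(M_{12})$ through $M_{31}$ rather than through $M_{23}$). As in the previous part, realizing these optimizations inside one experiment requires grafting an auxiliary input and its induced output, using the input-independence of $M_{12}$ and --- after the first switch --- of $M_{23}$ (resp. $M_{31}$).

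\emph{Main obstacle.} The only non-routine step is making the term-by-term optimization legitimate: the chain identities tie the summands to a single input distribution, the summands are not individually monotone under switching, and so one cannot naively take independent suprema. The resolution --- passing to an augmented experiment that carries a virtual copy of an input together with its induced output, and transporting the cut-set, privacy, and data-processing (\Lemmaref{monotone}) arguments to it --- is precisely what forces the nested, rather than fully independent, supremum structure in \eqref{eq:M12_bound}. Everything else is routine bookkeeping with \Lemmaref{cutset}, \Lemmaref{monotone}, \eqref{eq:infoineq-corol}, and the switching lemmas; the complete argument appears in \Appendixref{main_thms_proofs}.
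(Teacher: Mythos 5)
Your overall strategy --- reduce to independent inputs via \Appendixref{dependent_inputs_lbs}, then switch distributions term-by-term inside a chain-rule decomposition, with the nesting dictated by which transcript marginals are distribution-invariant --- is exactly the paper's strategy, and your treatment of \eqref{eq:M12_bound} (two decompositions of $H(M_{12})$, a shared $p_{X'}$ resp.\ $p_{Y'}$ across the two brackets) matches the paper's proof. However, there is a genuine gap in your argument for \eqref{eq:M23_NoCond} (and symmetrically \eqref{eq:M31_NoCond}). You work with $H(M_{23})\ge I(M_{23};M_{12}\mid M_{31})+H(M_{23}\mid M_{12},M_{31})$ and claim the first, \emph{conditional} term is at least $\sup_{p_{X'}}RI(X';Z')$ regardless of the ambient $p_X$. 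That term is not invariant under switching $p_X$: by \Lemmaref{cutset} the pair $(M_{12},M_{31})$ determines $X$, so the joint law of the triple $(M_{12},M_{23},M_{31})$ genuinely depends on $p_X$. Your proposed coupling ``sharing $(M_{23},Y)$'' cannot transport the chain $I(M_{23};M_{12}\mid M_{31})\ge RI(M_{23}M_{31};M_{12}M_{31})\ge RI(X;Z)$ into the $p_{X'}$-world, because that chain relies on $M_{31}$ --- which is not in the shared block and whose joint law with the rest changes with $p_X$ --- to reconstruct both $X$ and $Z$. Moreover the claim is in tension with \Lemmaref{infoineq}, which gives $I(M_{23};M_{12}\mid M_{31})\le I(M_{23};M_{12})$: the conditional quantity is the \emph{smaller} one, so there is no reason it should dominate the supremum.

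The paper's proof avoids this by decomposing $H(M_{23})=I(M_{23};M_{12})+I(M_{23};M_{31}\mid M_{12})+H(M_{23}\mid M_{12},M_{31})$. The first, \emph{unconditioned} term depends only on the law of Bob's view $(M_{12},M_{23})$, which privacy against Bob together with $X\perp Y$ makes invariant under switching $p_X$; one then applies \Lemmaref{infoineq}, in the form \eqref{eq:infoineq-corol}, \emph{inside the $p_{X'}$-world} to get $I(M_{23};M_{12})\ge RI(X';Z')$ there, and transports the number back to the ambient run. So \Lemmaref{infoineq} is needed for \eqref{eq:M23_NoCond} and \eqref{eq:M31_NoCond} as well, not only for \eqref{eq:M12_bound}; its absence from that part of your argument is precisely where the gap sits. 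The remaining two terms are grouped and evaluated under a single switched $p_{X''}$, which is your ``(nearly) maximize $H(Y,Z''\mid X'')$'' step and is fine.
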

%\noindent {\em Remark:} We remark that even though the protocol may depend on the input distribution $p_{XY}$, the lower bounds  do not. In particular, the lower bound on $H(M_{23})$ in \eqref{eq:M23_NoCond} does not depend on $p_X$, but it does depend on $p_Y$. Similarly, bound on $H(M_{31})$ in \eqref{eq:M31_NoCond} does not depend on $p_Y$, but it does depend on $p_X$. As we show in \Theoremref{conditionallowerbounds}, under some additional constraints on $p_{Z|XY}$, lower bounds on $H(M_{23})$ and $H(M_{31})$ become independent of the input distribution. Note that the bound on $H(M_{12})$ in \eqref{eq:M12_bound} does not depend on the input distribution and this does not require any additional constraints on $p_{Z|XY}$ other than being in normal form.\\

\noindent When the function satisfies certain additional constraints, we can strengthen the lower bounds on the $H(M_{23})$ and $H(M_{31})$ as shown in the following theorem which is proved in \Appendixref{main_thms_proofs}.

\begin{thm} \label{thm:conditionallowerbounds}
Consider any secure protocol $\Pi(p_{XY},p_{Z|XY})$, where $p_{XY}$ has full support and $p_{Z|XY}$ is in normal form.
\begin{enumerate}
\item Suppose the function $p_{Z|XY}$ satisfies Condition~1 of \Lemmaref{XYZ_inde_M123}, that is, there is no non-trivial partition $\mathcal{X} = \mathcal{X}_1 \cup \mathcal{X}_2$ (i.e., $\mathcal{X}_1 \cap \mathcal{X}_2 = \varnothing$ and neither $\mathcal{X}_1$ nor $\mathcal{X}_2$ is empty), such that if
$\mathcal{Z}_k = \{z \in {\mathcal Z} : x \in \mathcal{X}_k, y \in \mathcal{Y}, p(z|x,y)>0\}, k=1,2$, their intersection $\mathcal{Z}_1 \cap \mathcal{Z}_2$ is empty. Then, we have the following strengthening of \eqref{eq:M31_NoCond}.
\begin{align}
H(M_{31}) \geq \sup_{p_{X'}} \left(\left(\sup_{p_{Y'}} RI(Y';Z') \right) +
\left( \sup_{p_{Y''}} H(X',Z''|Y'') \right) \right), \label{eq:M31_WithCond}
\end{align}
where the suprimizations are over distributions $p_{X'},p_{Y'},p_{Y''}$ having full
support and the terms in the right hand side are evaluated using the
distribution
\[p_{X'Y'Z'Y''Z''}(x',y',z',y'',z'')=p_{X'}(x')p_{Y'}(y')p_{Z|XY}(z'|x',y')p_{Y''}(y'')p_{Z|XY}(z''|x',y'').\]

\item Suppose the function $p_{Z|XY}$ satisfies Condition~2 of \Lemmaref{XYZ_inde_M123}, that is, there is no non-trivial partition $\mathcal{Y} = \mathcal{Y}_1 \cup \mathcal{Y}_2$ such that if $\mathcal{Z}_k = \{z\in{\mathcal Z} : x \in \mathcal{X}, y \in \mathcal{Y}_k, p(z|x,y)>0\}, k=1,2$, their intersection $\mathcal{Z}_1 \cap \mathcal{Z}_2$ is empty. Then, we have the following strengthening of \eqref{eq:M23_NoCond}.
\begin{align}
H(M_{23}) \geq \sup_{p_{Y'}} \left( \left( \sup_{p_{X'}} RI(X';Z') \right) +
\left( \sup_{p_{X''}} H(Y',Z''|X'') \right) \right), \label{eq:M23_WithCond}
\end{align}
where the supremizations are over distributions $p_{X'},p_{X''},p_{Y'}$ having full
support and the terms in the right hand side are evaluated using the
distribution
\[p_{X'Y'Z'X''Z''}(x',y',z',x'',z'')=p_{X'}(x')p_{Y'}(y')p_{Z|XY}(z'|x',y')p_{X''}(x'')p_{Z|XY}(z''|x'',y').\]
\end{enumerate}
\end{thm}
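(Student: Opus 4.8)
The plan is to obtain the theorem from \Theoremref{lowerbound} by one additional round of distribution switching --- now over \emph{Alice's} input distribution for the bound on $H(M_{31})$, and over \emph{Bob's} for the bound on $H(M_{23})$. Note that \eqref{eq:M31_NoCond} and \eqref{eq:M23_NoCond} already carry separated inner supremizations over two distributions of the \emph{other} party's input; the only remaining freedom to exploit is the distribution of the third party, and this is exactly what Conditions~1 and~2 unlock through \Lemmaref{XYZ_inde_M123}: under Condition~1 the transcript $M_{31}$ becomes independent of the \emph{entire} triple $(X,Y,Z)$, which lets us switch Alice's distribution at no cost.

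For part~1, I would proceed as follows. \textbf{Step 1 (invariance of $H(M_{31})$).} Since $p_{XY}$ has full support and $p_{Z|XY}$ satisfies Condition~1, \Lemmaref{XYZ_inde_M123}(2) gives $I(X,Y,Z;M_{31})=0$, hence $p_{M_{31}|X=x,Y=y}(\cdot)=p_{M_{31}}(\cdot)$ for every $(x,y)$ in the (full) support. Because the conditional law of the transcript given the inputs is determined by the next-message functions of $\Pi$ alone --- and not by the input distribution --- running the \emph{same} protocol $\Pi$ on any full-support input distribution produces the \emph{same} marginal law of $M_{31}$, and in particular the same value of $H(M_{31})$. \textbf{Step 2 (switch and apply \Theoremref{lowerbound}).} Fix any full-support $p_{X'}$ and consider $\Pi$ executed on the full-support product distribution $p_{X'}\cdot p_Y$ (with $p_Y$ the original marginal of Bob's input); this is still a secure protocol, and $p_{Z|XY}$ is unchanged, so \eqref{eq:M31_NoCond} of \Theoremref{lowerbound} applies. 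Since the right-hand side of \eqref{eq:M31_NoCond} is evaluated using Alice's marginal, which is now $p_{X'}$, we obtain $H(M_{31})\ge\bigl(\sup_{p_{Y'}}RI(Y';Z')\bigr)+\bigl(\sup_{p_{Y''}}H(X',Z''|Y'')\bigr)$ with the joint law precisely as displayed in \eqref{eq:M31_WithCond}, using Step~1 to replace $H(M_{31}^{p_{X'}\cdot p_Y})$ by $H(M_{31})$. \textbf{Step 3.} Take the supremum over full-support $p_{X'}$ to conclude \eqref{eq:M31_WithCond}.

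Part~2 is entirely symmetric: Condition~2 with \Lemmaref{XYZ_inde_M123}(3) gives $I(X,Y,Z;M_{23})=0$ and hence invariance of $H(M_{23})$ under input-distribution switching; applying \eqref{eq:M23_NoCond} of \Theoremref{lowerbound} to $\Pi$ executed on $p_X\cdot p_{Y'}$ and then taking the supremum over full-support $p_{Y'}$ yields \eqref{eq:M23_WithCond}.

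The step that needs the most care is the invariance claim of Step~1: switching Alice's distribution really must leave $H(M_{31})$ unchanged, and this is precisely where Condition~1 is indispensable. Without it, $M_{31}$ generally does carry information about $X$ --- indeed $(M_{12},M_{31})$ determines $X$ by \Lemmaref{cutset} --- so Alice's marginal could not be altered, and only the weaker \eqref{eq:M31_NoCond} would survive. The remaining points are routine: that $\Pi$ stays secure when run on any full-support input distribution (as noted before \Theoremref{improved_prelim_lbs}), and that all supremizations range over full-support distributions. Finally, observe that the argument never uses independence of the original inputs --- \Theoremref{lowerbound} and \Lemmaref{XYZ_inde_M123}(2,3) require only full support --- so the bounds hold for dependent inputs with full support as well.
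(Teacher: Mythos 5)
Your proposal is correct and follows essentially the same route as the paper: both rest on \Lemmaref{XYZ_inde_M123} making $M_{31}$ (resp.\ $M_{23}$) independent of all of $(X,Y,Z)$ under Condition~1 (resp.\ Condition~2), which licenses switching Alice's (resp.\ Bob's) distribution on top of the switches already present in \Theoremref{lowerbound}. The only difference is organizational --- you invoke \eqref{eq:M31_NoCond} as a black box for the switched protocol and add the outer supremum via the invariance of $H(M_{31})$, whereas the paper redoes the chain-rule decomposition with all switches at once --- and your justification of the invariance step is sound.
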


Note that in \Theoremref{improved_prelim_lbs}, \Theoremref{lowerbound} and
\Theoremref{conditionallowerbounds}, any choice of $p_{X'Y'}$,
$p_{X'},p_{X''}$, $p_{Y'},p_{Y''}$ (with full support) will yield a lower
bound. For a given function, while all choices do yield valid lower bounds,
one is often able to obtain the {\em best} lower bound analytically (as in
\Theoremref{ot}, where it is seen to be the best as it matches an upper
bound) or numerically (as in \Theoremref{and}).

To summarize, for any secure computation problem $(p_{XY},p_{Z|XY})$, expressed in the normal form, \Theoremref{prelim_lbs} gives lower bounds on entropies of transcripts on all three links. If $p_{XY}$ has full support and $p_{Z|XY}$ is in normal form, then for $H(M_{31})$, our best lower bound is the larger of~\eqref{eq:prelim_lb_M31} and~\eqref{eq:M31_NoCond}; for $H(M_{23})$, it is the larger of~\eqref{eq:prelim_lb_M23} and~\eqref{eq:M23_NoCond} and for $H(M_{12})$, it is the larger of~\eqref{eq:improved_prelim_lb_M12} and~\eqref{eq:M12_bound}. In addition, if $p_{Z|XY}$ satisfies condition~1 of \Lemmaref{XYZ_inde_M123}, then for $H(M_{31})$, our best lower bound is the larger of~\eqref{eq:improved_prelim_lb_M31} and~\eqref{eq:M31_WithCond}; if $p_{Z|XY}$ satisfies condition~2 of \Lemmaref{XYZ_inde_M123}, then for $H(M_{23})$, our best lower bound is the larger of~\eqref{eq:improved_prelim_lb_M23} and~\eqref{eq:M23_WithCond}.

%-------------------- END OF IMPROVED LOWER BOUNDS 2----------------------

\section{Lower Bounds on Randomness}\label{sec:randomness}
In this section, we provide lower bounds on the amount of randomness required in secure computation protocols. Although our focus in this paper is not to prove lower bounds on the amount of randomness, it turns out that we may apply the above lower bounds on communication to derive bounds on the amount of randomness required. We show in \Sectionref{examples} that they give tight bounds on randomness required for the specific functions we analyse.
\begin{defn}
The randomness required to securely compute a function $p_{Z|XY}$ for input distribution $p_{XY}$ is defined as \[\rho(p_{XY},p_{Z|XY})=\inf_{\Pi(p_{XY},p_{Z|XY})} H(V_1,V_2,V_3|X,Y),\] where $V_i$ is the view of party-$i$ at the end of the protocol and infimum is over all secure protocols for $(p_{XY},p_{Z|XY})$.
\end{defn}
\noindent We can simplify the entropy term in the above definition as follows.
\begin{align*}
H(V_1,V_2,V_3|X,Y) &\geq H(M_{31},M_{12},M_{23}|X,Y)\\
&\geq \max\{H(M_{31}|X,Y), H(M_{12}|X,Y), H(M_{23}|X,Y)\}.
\end{align*}
Since the above inequalities are true for any secure protocol $\Pi(p_{XY},p_{Z|XY})$, we have
\begin{align}
\rho(p_{XY},p_{Z|XY}) \geq \max\{H(M_{31}|X,Y), H(M_{12}|X,Y), H(M_{23}|X,Y)\}.\label{eq:randomness}
\end{align}
\begin{thm}\label{thm:randomness}
Consider any secure protocol $\Pi(p_{XY},p_{Z|XY})$.
\begin{enumerate}
\item If the characteristic bipartite graph of $p_{XY}$ is connected, then $\rho(p_{XY},p_{Z|XY})\geq H(M_{12})$.
\item If $p_{XY}$ has full support and $p_{Z|XY}$ satisfies Condition~1 of \Lemmaref{XYZ_inde_M123}, then $\rho(p_{XY},p_{Z|XY})\geq H(M_{31})$.
\item If $p_{XY}$ has full support and $p_{Z|XY}$ satisfies Condition~2 of \Lemmaref{XYZ_inde_M123}, then $\rho(p_{XY},p_{Z|XY})\geq H(M_{23})$.
\end{enumerate}
\end{thm}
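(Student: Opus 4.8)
The plan is to read the theorem off the generic randomness bound in \Equationref{randomness} together with the transcript–independence statements already established in \Lemmaref{XYZ_inde_M123}. Recall that \Equationref{randomness} gives, for every secure protocol, $\rho(p_{XY},p_{Z|XY}) \ge \max\{H(M_{31}|X,Y), H(M_{12}|X,Y), H(M_{23}|X,Y)\}$. So all I need to do, in each of the three cases, is show that the relevant conditional transcript entropy $H(M_{ij}|X,Y)$ actually equals the unconditional entropy $H(M_{ij})$, at which point the claimed lower bound on $\rho$ drops out of \Equationref{randomness}.

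For part~1 I would argue as follows: if the characteristic bipartite graph of $p_{XY}$ is connected, then part~1 of \Lemmaref{XYZ_inde_M123} gives $I(X,Y,Z;M_{12})=0$, and since $I(X,Y;M_{12}) \le I(X,Y,Z;M_{12})$ by monotonicity of mutual information, we get $I(X,Y;M_{12})=0$, i.e.\ $H(M_{12}|X,Y)=H(M_{12})$; plugging this into \Equationref{randomness} yields $\rho \ge H(M_{12})$. Parts~2 and~3 go through verbatim after swapping in the appropriate item of \Lemmaref{XYZ_inde_M123}: under full support of $p_{XY}$ together with Condition~1 (resp.\ Condition~2), one gets $I(X,Y,Z;M_{31})=0$ (resp.\ $I(X,Y,Z;M_{23})=0$), hence $H(M_{31}|X,Y)=H(M_{31})$ (resp.\ $H(M_{23}|X,Y)=H(M_{23})$), and \Equationref{randomness} again gives the claim.

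I do not expect a genuine obstacle here: the content is entirely carried by \Lemmaref{XYZ_inde_M123}, and the only thing to be careful about is the mild bookkeeping that \Equationref{randomness} controls $\rho$ via conditional transcript entropies while the theorem is phrased in terms of unconditional ones — the lemma is precisely what lets us drop the conditioning at no cost. It is also worth recording (as the surrounding text already notes) that the hypothesis of part~1, connectedness of the characteristic bipartite graph, is weaker than full support, so part~1 applies more broadly than parts~2 and~3.
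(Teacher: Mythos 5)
Your proposal is correct and follows the paper's own proof essentially verbatim: both deduce $H(M_{ij}|X,Y)=H(M_{ij})$ from the relevant item of \Lemmaref{XYZ_inde_M123} and then apply \eqref{eq:randomness}. The only difference is that you spell out the intermediate step $I(X,Y;M_{ij})\le I(X,Y,Z;M_{ij})=0$, which the paper leaves implicit.
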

\begin{proof}
By \Lemmaref{XYZ_inde_M123}, we have the following:
\begin{enumerate}
\item If the characteristic bipartite graph of $p_{XY}$ is connected, then $H(M_{12}|X,Y)=H(M_{12})$. This, together with~\eqref{eq:randomness} implies $\rho(p_{XY},p_{Z|XY})\geq H(M_{12})$.
\item If $p_{XY}$ has full support and $p_{Z|XY}$ satisfies Condition~1, then $H(M_{31}|X,Y)=H(M_{31})$. This, together with~\eqref{eq:randomness} implies $\rho(p_{XY},p_{Z|XY})\geq H(M_{31})$.
\item If $p_{XY}$ has full support and $p_{Z|XY}$ satisfies Condition~2, then $H(M_{23}|X,Y)=H(M_{23})$. This, together with~\eqref{eq:randomness} implies $\rho(p_{XY},p_{Z|XY})\geq H(M_{23})$.
\end{enumerate}
\end{proof}

Hence, we can apply the lower bounds developed in \Sectionref{lowerbounds} to obtain lower bounds on randomness. If $p_{XY}$ has full support and $p_{Z|XY}$ is in normal form, then for $H(M_{12})$, our best lower bound is the larger of~\eqref{eq:improved_prelim_lb_M12} and~\eqref{eq:M12_bound}. In addition, if $p_{Z|XY}$ satisfies condition~1 of \Lemmaref{XYZ_inde_M123}, then for $H(M_{31})$, our best lower bound is the larger of~\eqref{eq:improved_prelim_lb_M31} and~\eqref{eq:M31_WithCond}; if $p_{Z|XY}$ satisfies condition~2 of \Lemmaref{XYZ_inde_M123}, then for $H(M_{23})$, our best lower bound is the larger of~\eqref{eq:improved_prelim_lb_M23} and~\eqref{eq:M23_WithCond}.

We call a protocol {\em randomness-optimal}, if the total number of random bits used by the protocol is optimal. In all the examples we consider in the next section, the amount of randomness does not depend on the input distribution $p_{XY}$ as long as they have full support, so, instead of writing $\rho(p_{XY},p_{Z|XY})$, we simply write $\rho(p_{Z|XY})$.

\section{Application to Specific Functions}\label{sec:examples}
% 
% 
% 
% We now discuss \TODO{two examples} to illustrate our lower bounds. In the first
% example, which is a 3-party version of 1-of-$n$ string oblivious transfer, our
% lower bounds are seen to be tight; this would also provide us with an
% explicit example where the total communication with Charlie must be larger
% (by a constant factor) than the total input size.
% 
% The second example we consider is that of {\sc and}, for which the best
% known protocol is one from \cite{FeigeKiNa94}. Our lower bounds do not match
% the communication in this protocol. However, our lower bound is strong enough
% to show a separation between correlated multiple secret sharing and secure
% multi-party computation (discussed in \Appendixref{CMSS_sampling}).
% 
% In \Appendixref{examples}, we discuss two more functions: \TODO{secure
% computation in a Group $\mathbb{G}$}
% and {\sc controlled erasure}. The latter was considered in \cite{DPAllerton13}. We show that these functions have communication-ideal protocols for all input distributions.

In this section we consider a few important examples, and apply our generic
lower bounds from above to these examples, to obtain interesting results.
While many of these results are natural to conjecture, they are not easy to prove
(see, for instance,~\Footnoteref{addition}). 
%\TODO{while many results are ``expected'' or ``not surprising'', being able
%to prove them is the challenge.. also mention \Footnoteref{summation}.}

\paragraph{Optimality of the FKN Protocol.} Feige et al.\ \cite{FeigeKiNa94}
provided a generic (non-interactive) secure computation protocol for all
3-party functions in our model. This protocol uses a straight-forward (but
``inefficient'') reduction from an arbitrary function to a variant of the oblivious transfer problem, which we shall call the remote OT function (defined below), and then gives a simple protocol for this new function. 
While the resulting protocol is inefficient for most functions, one could
ask whether the protocol that \cite{FeigeKiNa94} used for {\sc remote OT}
itself is optimal. We use our lower bounds from above to show that indeed,
this is the case. 

The {\sc remote $\binom{m}{1}$-OT$^n_2$} function, is defined as follows:
Alice's input $X=(X_0,X_1,\hdots,X_{m-1})$ is made up of $m$ bit-strings
each of length $n$, and Bob has an input $Y\in \{0,1,\hdots,m-1\}$. Charlie
wants to compute $Z=f(X,Y)=X_Y$.  \Figureref{ot} in \Appendixref{ot} gives
the simple protocol for this function from \cite{FeigeKiNa94}
(rephrased as a protocol in our model). It requires $nm$ bits to be
exchanged over the Alice-Charlie (31) link, $n+\log m$  bits over the
Bob-Charlie (23) link and $nm+\log m$ bits over the Alice-Bob (12) link. The total number of random bits used in the protocol is $nm+\log m$. In \Appendixref{ot}, we prove the following theorem, which shows that this
protocol is optimal and in fact, a {\em communication-ideal} protocol. We also prove that this protocol is randomness-optimal.
\begin{thm}\label{thm:ot}
Any secure protocol $\Pi(p_{XY},\text{\sc remote-ot})$ for computing {\sc remote $\binom{m}{1}$-OT$^n_2$} for inputs $X$ and $Y$ where $p_{XY}$ has full support, must satisfy
\begin{align*}
H(M_{31}) \geq nm, \quad H(M_{23}) &\geq n+\log m, \;\text{ and}\quad H(M_{12})\geq nm+\log m, \\
\rho(\text{\sc remote-ot}) &\geq nm+\log m.
\end{align*}
\end{thm}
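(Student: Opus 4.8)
The plan is to apply the three main lower-bound theorems from Section~\ref{sec:lowerbounds} to the function $f(X,Y)=X_Y$, choosing convenient full-support input distributions for the supremizations. First I would observe that {\sc remote $\binom{m}{1}$-OT$^n_2$}, viewed as $p_{Z|XY}$, is already in normal form: no two values of $X=(X_0,\dots,X_{m-1})$ induce the same conditional distribution on $(Y,Z)$ (differing $X$'s differ in some block $X_j$, which is revealed when $Y=j$), no two values of $Y$ are equivalent (different selections generically yield different $Z$), and every $z\in\{0,1\}^n$ is a possible output. I would also check that both Condition~1 and Condition~2 of \Lemmaref{XYZ_inde_M123} hold: for any nontrivial partition of $\X$, two $X$-vectors differing in block $0$ both allow $Y=0$, so the output sets overlap on all of $\{0,1\}^n$; similarly for $\Y$ (any two selection values $y,y'$ with $X$ ranging over everything both produce all of $\{0,1\}^n$). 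Hence the strengthened bounds \eqref{eq:M23_WithCond} and \eqref{eq:M31_WithCond} are available.

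For $H(M_{31})\ge nm$: apply \eqref{eq:M31_WithCond}. Drop the $RI$ term (it is nonnegative) and keep $\sup_{p_{X'}}\sup_{p_{Y''}} H(X',Z''\mid Y'')$. Choose $p_{X'}$ uniform on $\{0,1\}^{nm}$ and $p_{Y''}$ uniform on $\{0,\dots,m-1\}$. Then $Z''=X'_{Y''}$ is a deterministic function of $(X',Y'')$, so $H(X',Z''\mid Y'')=H(X'\mid Y'')=H(X')=nm$ since $X'\perp Y''$. This gives $H(M_{31})\ge nm$. For $H(M_{23})\ge n+\log m$: apply \eqref{eq:M23_WithCond}. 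Here I want both terms to contribute. Take $p_{Y'}$ uniform on $\{0,\dots,m-1\}$. For the second term choose $p_{X''}$ uniform on $\{0,1\}^{nm}$, giving $H(Y',Z''\mid X'')=H(Y')+H(Z''\mid X'',Y')=\log m + n$ (since given $X''$ and $Y'$, $Z''=X''_{Y'}$ is determined — wait, that kills the $n$). Instead I should be more careful: $H(Y',Z''\mid X'')\ge H(Y'\mid X'')=\log m$, and separately bound the $RI$ term $\sup_{p_{X'}}RI(X';Z')\ge n$ by taking $p_{X'}$ concentrated so that $X'_j$ is a fixed known constant for $j\ne 0$ while $X'_0$ is uniform on $\{0,1\}^n$ — no, full support is required; instead take each block independently, with block $0$ uniform and the other blocks near-deterministic, or simply compute $RI(X';Z')$ for the product-uniform $p_{X'}$ with $p_{Y'}$ uniform. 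Since $Z'$ is a function of $X'$, $H(Q)\le H(Z')$ for any common function $Q$, and one checks $X'\sqcap Z'$ is trivial while $I(X';Z')=H(Z')=n$, so $RI(X';Z')=n$. Combining the $RI$ term ($=n$, using $p_{X'}$ product-uniform) with the second term ($\ge\log m$, using $p_{X''}$ arbitrary full-support and $p_{Y'}$ uniform) yields $H(M_{23})\ge n+\log m$.

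For $H(M_{12})\ge nm+\log m$: apply the top row of \eqref{eq:M12_bound}. With $p_{X'}$ product-uniform on $\{0,1\}^{nm}$, $p_{Y'}$ uniform, the inner $\sup_{p_{Y'}}RI(Y';Z')$ term I would make contribute $\log m$ by noting $Z'=X'_{Y'}$ with $X'$ product-uniform makes $Z'$ reveal $Y'$ in a residual sense; more robustly, for the outer structure choose $p_{Y''}$ uniform and compute $RI(X';Z'')+H(X',Y''\mid Z'')$. Since everything is deterministic, $H(X',Y''\mid Z'')=H(X',Y'')-H(Z'')=nm+\log m-n$, and $RI(X';Z'')=n$ as before, summing to $nm+\log m$; adding the inner $RI(Y';Z')$ term, which is $\ge 0$, still gives $H(M_{12})\ge nm+\log m$. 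Finally, $\rho(\text{\sc remote-ot})\ge nm+\log m$ follows from part~2 (or~3) of \Theoremref{randomness}: Condition~1 holds, so $\rho\ge H(M_{31})$ — but that only gives $nm$; instead use part~1, noting the characteristic bipartite graph of any full-support $p_{XY}$ is connected, so $\rho\ge H(M_{12})\ge nm+\log m$.

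The main obstacle I anticipate is the bookkeeping in \eqref{eq:M12_bound} and \eqref{eq:M23_WithCond}: these bounds have nested supremizations with correlated distributions (the same $X'$ appears in two coordinates, sharing $p_{X'}$ but with independent fresh $Y'$ and $Y''$), and one must choose the distributions so that the $RI$ term and the conditional-entropy term are simultaneously large rather than trading off. The key insight that makes it work is that for a deterministic $f$, $RI(X;Z)$ and $H(X,Y\mid Z)$ decompose cleanly — $RI(X;Z)=H(Z)-H(X\sqcap Z)$ and, when the common part $X\sqcap Z$ is trivial, the two quantities add up to exactly $H(X,Y)$ plus a term, matching the protocol's performance link-by-link and certifying communication-ideality.
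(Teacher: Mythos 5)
Your overall strategy --- verify the normal form and Conditions~1 and~2 of \Lemmaref{XYZ_inde_M123}, then instantiate the supremizations in the main lower-bound theorems with convenient full-support distributions --- is exactly the paper's, and your arguments for $H(M_{31})\ge nm$ (drop the $RI$ term, take $X'$ product-uniform so $H(X',Z''|Y'')=H(X')=nm$) and for $\rho\ge H(M_{12})$ via part~1 of \Theoremref{randomness} are correct. However, the two remaining bounds rest on a miscalculation that loses exactly the $\log m$ you need. You repeatedly assert that for product-uniform $p_{X'}$ one has $I(X';Z')=H(Z')=n$ ``since $Z'$ is a function of $X'$''. It is not: $Z'=X'_{Y'}$ depends on $Y'$ as well, and for product-uniform $X'$ with $Y'$ uniform one has $H(Z')=n$ but $H(Z'|X')$ essentially equal to $\log m$ (given $X'$, the output is uniform over the $m$ blocks), so $RI(X';Z')=I(X';Z')\approx n-\log m$. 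Substituting the correct value into your choices gives $H(M_{23})\ge(n-\log m)+\log m=n$ and, in the top row of \eqref{eq:M12_bound}, $(n-\log m)+(nm+\log m-n)+0=nm$ --- both short by $\log m$. The distribution that actually drives $RI(X';Z')$ up to $n$ is the one the paper uses: mass $\approx 2^{-n}$ on each diagonal point $x_0=x_1=\cdots=x_{m-1}$ and a vanishing $\epsilon$ spread over the remaining points (to preserve full support); then $Z'$ is nearly independent of $Y'$, $H(Z'|X')\to 0$, and $I(X';Z')\to n$. You gesture at concentrated choices but reject them and settle on product-uniform, which does not work; your intermediate alternative (block $0$ uniform, other blocks near-constant) fares no better, giving $I(X';Z')\approx n/m$.

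For $H(M_{12})$ there is a second, structural problem: the top row of \eqref{eq:M12_bound} cannot yield $nm+\log m$ for this function under any choice of $p_{X'}$. Writing the second bracket as $H(X')+H(Y'')-H(Z''|X')-H(X'\sqcap Z'')$, you need $H(X')=nm$ to capture the $nm$, which forces $X'$ product-uniform, whence $H(Z''|X')=\log m$ cancels $H(Y'')$ and the first bracket $\sup_{p_{Y'}}RI(Y';Z')$ vanishes by symmetry; making $H(Z''|X')=0$ instead forces $X'$ onto the diagonal, collapsing $H(X')$ to $n$. Either way the top row tops out around $nm$. The paper's proof uses the \emph{bottom} row of \eqref{eq:M12_bound}, where the standalone term $\sup_{p_{X'}}RI(X';Z')$ is evaluated with the near-diagonal $X'$ (contributing $n$) while the second bracket is evaluated with an \emph{independent} product-uniform $X''$ (contributing $I(Y';Z'')+H(X'',Y'|Z'')=0+nm+\log m-n$). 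The whole point of the nested-supremization structure of \Theoremref{lowerbound} is that these two distributions need not coincide; your proposal does not exploit this where it is essential.
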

%We remark that we rely on the full force of \Theoremref{conditionallowerbounds} to obtain these bounds.

\paragraph{More Functions with Communication-Ideal and Randomness-Optimal Protocols.} 
{\sc group-add}, addition in any group has a communication-ideal and randomness-optimal protocol, for any input
distribution with full support (see \Appendixref{group}).  As mentioned in
\Footnoteref{addition}, this is easy to see for the uniform distribution,
and using distribution switching, we can see that the same holds as long as
the input distribution has full support. A more interesting example, is a
function called {\sc controlled-erasure} that was studied in
\cite{DPAllerton13}. We resolve the communication complexity of this secure
computation problem fully, by showing that the protocol for this function
from \cite{DPAllerton13} is in fact communication-ideal as well as randomness-optimal, again, for every
input distribution with full support.

\paragraph{Separating Secure and Insecure Computation.}
A basic question of secure computation is whether it needs more bits to be
communicated than the input-size itself (which suffices for insecure
computation). While natural to expect, it is not easy to prove this.
%\TODO{say remote-OT doesn't work?}
In their restricted model, \cite{FeigeKiNa94}
showed a non-explicit result, that for securely computing {\em most} Boolean functions on the domain
$\{0,1\}^n \times \{0,1\}^n$, Charlie is required to receive at least $3n-4$
bits, which is significantly more than the $2n$ bits sufficient for insecure
computation. 

{\sc remote $\binom{2}{1}$-OT$^n_2$} from above already gives us an explicit example of a
function where this is true: the total input size is $2n+1$, but the
communication is at least $H(M_{31})+H(M_{23}) \ge 3n+1$.
To present an
easy comparison to the lower bound of \cite{FeigeKiNa94}, we can consider a
symmetrized variant of {\sc remote $\binom{2}{1}$-OT$^n_2$}, in which two instances of {\sc
remote $\binom{2}{1}$-OT$^n_2$} are combined, one in each direction. More specifically,
$X=(A_0,A_1,a)$ where $A_0,A_1$ are of length $(n-1)/2$ (for an odd $n$) and
$a$ is a single bit; similarly $Y=(B_0,B_1,b)$; the output of the function
is defined as an $(n-1)$ bit string $f(X,Y)=(A_b,B_a)$. Considering (say)
the uniform input distribution over $X,Y$, the bounds for {\sc remote $\binom{2}{1}$-OT$^n_2$}
add up to give us $H(M_{31}) \ge 3(n-1)/2+1$ and $H(M_{23}) \ge 3(n-1)/2+1$,
so that the communication with Charlie is lowerbounded by
$H(M_{31})+H(M_{23}) \ge 3n-1$. 

This compares favourably with the bound of \cite{FeigeKiNa94} in many ways:
our lower bound holds even in a model that allows interaction; in particular,
this makes the gap between insecure computation ($n-1$ bits in our case,
$2n$ bits for \cite{FeigeKiNa94}) and secure computation (about $3n$ bits
for both) somewhat larger. More importantly, our lower bound is explicit (and
tight for the specific function we use), whereas that of \cite{FeigeKiNa94}
is existential. However, our bound does not subsume that of
\cite{FeigeKiNa94}, who considered {\em Boolean} functions. Our results do
not yield a bound significantly larger than the input size, when the output
is a single bit. It appears that this regime is more amenable to
combinatorial arguments, as pursued in \cite{FeigeKiNa94}, rather than
information theoretic arguments. Finally, for the case of {\em random} Boolean
functions, it is plausible that the actual communication cost is exponential
in the input length, but none of the current techniques are capable of
delivering such a result. We leave it as a fascinating open problem to
obtain tight bounds in this regime, possibly by combining combinatorial and
information-theoretic approaches.

\paragraph{Separating Secure Computation and Secret Sharing.}\label{pg:and-example}
 Another
natural separation one expects is between the amount of communication needed
when the views (or transcripts) are generated by a secure computation
protocol, versus when they are generated by an omniscient ``dealer'' so that
the security requirements are met. As mentioned before, the latter setting
corresponds to the share sizes in a CMSS scheme. Again, while such a
separation is expected, it is not very easy to establish this, especially
for explicit examples. It requires us to establish a strong lower
bound for the secure computation problem as well as provide a CMSS
scheme that is better. None of the examples considered above yield this
separation.

We establish the separation using the 3-party {\sc and} function, defined as
follows: Alice has an input bit $X$, Bob has an input bit $Y$ and Charlie
should obtain $Z=f(X,Y)=X \land Y$. There is a CMSS scheme that achieves
$\log(3)\le 1.6$ bits of entropy for all three shares $M_{12},M_{23}$ and $M_{31}$
(see \Theoremref{gap_CMSS}). However, the following lower bounds, proven in
\Appendixref{and} using numerical optimization, shows that in a
secure computation protocol, $H(M_{12})$ should be strictly larger than this.
\begin{thm}\label{thm:and}
Any secure protocol $\Pi(p_{XY},\text{\sc and})$ for computing {\sc and} for inputs $X$ and $Y$ where $p_{XY}$ has full support over $\{0,1\}^n\times\{0,1\}^n$, must satisfy
\begin{align*}
H(M_{31}) \geq n\log(3), \quad H(M_{23}) &\geq n\log(3), \;\;\text{ and}\quad H(M_{12})\geq n(1.826),\\
\rho(\text{\sc and}) &\geq n(1.826).
\end{align*}
\end{thm}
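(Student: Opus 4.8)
The plan is to apply \Theoremref{improved_prelim_lbs}, \Theoremref{lowerbound} and \Theoremref{randomness} to the $n$-fold {\sc and} function $p_{Z|XY}$, where $X,Y\in\{0,1\}^n$ and $Z=X\wedge Y$ coordinatewise, after checking the hypotheses these results need. The function is in normal form: distinct inputs are inequivalent (if $x,x'$ differ in coordinate $i$, the input $y=e_i$ yields different output distributions), and every $z\in\{0,1\}^n$ is a distinct attainable output. It also satisfies both Condition~1 and Condition~2 of \Lemmaref{XYZ_inde_M123}: since $p(z|x,y)>0$ iff $z=x\wedge y$, the output set reachable from any nonempty $\mathcal{X}_k\subseteq\X$ contains $0^n$, so any partition of $\X$ (or of $\Y$) has intersecting reachable-output sets. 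Hence \eqref{eq:improved_prelim_lb_M31}, \eqref{eq:improved_prelim_lb_M23}, \eqref{eq:improved_prelim_lb_M12} and \eqref{eq:M12_bound} are all available, and since full support makes the characteristic bipartite graph of $p_{XY}$ connected, so is \Theoremref{randomness} (part~1).

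For $H(M_{31})\ge n\log 3$ I would evaluate the first branch of \eqref{eq:improved_prelim_lb_M31}, $H(M_{31})\ge\sup_{p_{X'Y'}}\bigl(\RI(Y';Z')+H(X',Z'|Y')\bigr)$, at product laws $p_{X'Y'}=q^{\otimes n}$ for a full-support $2\times 2$ distribution $q$ on a single coordinate. Because $0^n$ occurs with positive probability jointly with every value of $Y'$, the support graph between $Y'$ and $Z'=X'\wedge Y'$ is connected, so the Gács--K\"orner common part is trivial, $\RI(Y';Z')=I(Y';Z')$, and the $n$-letter objective is exactly $n$ times the single-letter one. Using that $Z'$ is a function of $(X',Y')$ and that $Y'=0$ forces $Z'=0$, the single-letter objective simplifies to $H_2(\Pr[Z'=1])+\Pr[Y'=0]\,H_2\!\bigl(\Pr[X'=1\mid Y'=0]\bigr)\le H_2(t)+(1-\beta)$ with $t=\Pr[X'=Y'=1]<\beta=\Pr[Y'=1]$; choosing $q$ close to the (non-full-support) law $(p_{00},p_{01},p_{10},p_{11})=(1/3,0,1/3,1/3)$ drives it to $H_2(1/3)+2/3=\log 3$, and the $\beta\ge 1/2$ regime only gives $\le 3/2<\log 3$, so the supremum is $\log 3$ and $H(M_{31})\ge n\log 3$. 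The bound $H(M_{23})\ge n\log 3$ follows by the $X\leftrightarrow Y$ symmetry via \eqref{eq:improved_prelim_lb_M23}.

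For $H(M_{12})\ge n(1.826)$ the simple bound \eqref{eq:improved_prelim_lb_M12} only yields $n\log 3$, so I would instead evaluate the top row of \eqref{eq:M12_bound} at product auxiliary distributions $p_{X'}=\mathrm{Bern}(a)^{\otimes n}$, $p_{Y'}=\mathrm{Bern}(c)^{\otimes n}$, $p_{Y''}=\mathrm{Bern}(1/2)^{\otimes n}$. Again $\RI=I$ throughout (each relevant pair has $0^n$ jointly positive with everything), so the expression tensorizes to $n$ times the single-letter quantity $\varphi(a,c)=H_2(ac)+(1-c)H_2(a)+(1-a)$, where the term $H_2(ac)-cH_2(a)$ comes from $\RI(Y';Z')$ and the term $H_2(a)+(1-a)$ from $\sup_{p_{Y''}}\bigl(\RI(X';Z'')+H(X',Y''|Z'')\bigr)$ at $d=\tfrac12$. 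One checks $\varphi(a,c)>1.826$ for a concrete near-optimal pair $(a,c)$ (with $a\approx 0.46$, $c\approx 0.40$), which can be certified rigorously by bounding the binary entropies involved; this gives $H(M_{12})\ge n(1.826)$, and then \Theoremref{randomness} (part~1) gives $\rho(\text{\sc and})\ge H(M_{12})\ge n(1.826)$.

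The main obstacle is this last optimization. The bound \eqref{eq:M12_bound} is a cumbersome expression---a maximum over two rows, each a nested supremum over three input distributions---and one must (i) justify that restricting to product auxiliary distributions is legitimate and makes everything tensorize (straightforward once $\RI=I$ is observed, but it needs to be spelled out), and (ii) certify the one-dimensional numerical maximum $\sup_{a,c}\varphi(a,c)>1.826$. Everything else is routine: the normal-form and Condition~1/2 checks are immediate, the $H(M_{31})$ and $H(M_{23})$ optimizations have the clean closed-form answer $\log 3$, and the randomness bound is a one-line consequence of \Theoremref{randomness}. (Note $\log 3<1.826$, which is exactly what makes this function witness the separation between secret sharing and secure computation on the $M_{12}$ link.)
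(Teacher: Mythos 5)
Your proposal is correct and follows essentially the same route as the paper's proof: it invokes \eqref{eq:improved_prelim_lb_M31}/\eqref{eq:improved_prelim_lb_M23} with i.i.d.\ coordinates approaching the law $(1/3,0,1/3,1/3)$ (resp.\ its transpose) to get $n\log 3$, evaluates the top row of \eqref{eq:M12_bound} with $p_{X'}=\mathrm{Bern}(a)^{\otimes n}$, $p_{Y'}=\mathrm{Bern}(c)^{\otimes n}$, $p_{Y''}$ uniform at the same near-optimal point ($a\approx 0.456$, $c\approx 0.397$, giving the identical single-letter expression $H_2(ac)+(1-c)H_2(a)+(1-a)\approx 1.826$), and closes with \Theoremref{randomness}. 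The only difference is cosmetic: you carry some extra (unneeded) upper-bound analysis of the $M_{31}$ supremum, while the paper simply exhibits the optimizing distributions.
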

The best known protocol for {\sc and} (which resembles the CMSS scheme above, and first appeared in
\cite{FeigeKiNa94}) achieves $H(M_{12}) = 1+\log(3), H(M_{23}) = H(M_{31}) = \log(3)$ (see \Appendixref{and}). Our lower bounds on $H(M_{31})$ and $H(M_{23})$ match with the protocol requirements on these links. The bound on $H(M_{12})$ is not known to be tight. The protocol given in \Appendixref{and} requires $1+\log(3)$ random bits and we prove a lower bound of 1.826.

% Essentially the same protocol appears in \cite{FeigeKiNa94}.
% It requires a ternary symbol to be exchanged over the Alice-Charlie
% (31) and Bob-Charlie (23) links and symbols from an alphabet of size 6 over the
% Alice-Bob (12) link per {\sc and} computation, which implies $H(M_{12}) \leq
% 1+\log(3), H(M_{23}) \leq \log(3)$ and $H(M_{31}) \leq \log(3)$. Our lower
% bounds obtained by numerical optimization is as stated below; proof is
% available in \Appendixref{and}.
% 

\paragraph{Open Problems.} We close with a brief list of concrete open
problems from this work. For secure computation of
{\sc and} and {\sc sum} (see \Appendixref{sum} for {\sc sum}) there is a gap between the best known upper bound and our lower
bound for $M_{12}$ link.
%While we do know the optimality of the secure
%computation protocol for {\sc remote-OT} above, we do not know that for CMSS
%for this problem (\Theoremref{prelim_lbs} leaves a gap for $H(M_{31})$ and
%$H(M_{23})$). For
%{\sc and} there is a gap between the best known upper bound and our lower
%bounds, for all three links, for both the protocol and CMSS problems.  
These specific examples point to challenges in obtaining good lower bounds. Another important problem is to find an explicit example for a {\em Boolean} function in which the communication to Charlie
must be significantly larger than the total input size. Note that
\cite{FeigeKiNa94} gave an existential result (in their restricted model)
and the explicit example in this work does not have Boolean output. The
case of random Boolean functions, where communication being exponential in
the input length is a plausible, but unproven result, was already mentioned.

\bibliographystyle{alpha}
\bibliography{crypto,bib}

\appendix
\section{Preliminaries: More Details}
\label{app:prelims}

\paragraph{Protocol.}
In an execution of the protocol, (a subset of) the parties receive inputs
from the specified distribution, and exchange messages over private,
point-to-point links.  Without loss of generality, the state of the protocol
consists only of the inputs received by each party and the bits transmitted
on each link, so far. This is because, as the parties are computationally unbounded, private randomness 
for a party can always be resampled at every round conditioned on the inputs, outputs and
messages in that party's view.
Each $\Pi_i$ specifies a distribution over $\{0,1\}^* \times \{0,1\}^*$
(corresponding to the messages to be transmitted on the two links party $i$
is connected to) conditioned on the inputs of party $i$ and the messages in
the $i$-th party's links so far. For a protocol to be {\em well-formed}, we
require that the message sent by one party to another at any round is a
codeword in a prefix-free code such that the code itself is determined by the messages exchanged
between the two parties in previous rounds;%
\footnote{Without such a restriction, we implicitly allow ``end-of-message''
markers, and the rest of the communication could have fewer bits than the
entropy of the transcript annotated with rounds.}
also, we require that with probability 1, the protocol terminates --- i.e.,
it reaches a state when all $\Pi_i$ output empty strings (it is not
important for our lower bounds that the parties realize when this happens).

The final transcript in each of the three links consists of two
strings, obtained by simply concatenating all the bits sent in each
direction on that link, by the time the protocol has terminated.

\paragraph{A Normal Form for Functionality $p_{Z|XY}$.} 
For a randomized function $p_{Z|XY}$, define the relation  $x\equiv x'$ for $x,x'\in\X$ to hold
if $\forall y\in\Y, z\in\Z$, $p(z|x,y) = p(z|x',y)$; similarly $y\equiv y'$
is defined. We also define $z\equiv z'$ if there exists a constant $c$
such that $\forall x\in\X, y\in\Y$, $p(z|x,y)=c\cdot p(z'|x,y)$. We say that
$p_{Z|XY}$ is in the normal form if
$x\equiv x' \Rightarrow x=x'$,
$y\equiv y' \Rightarrow y=y'$ and
$z\equiv z' \Rightarrow z=z'$. 
 
It is easy to see that one can transform any randomized function $p_{Z|XY}$ to one
in normal form $p_{Z'|X'Y'}$, with possibly smaller alphabets, so that any
secure computation protocol for the former can be transformed to one for the
latter with the same communication costs, and vice versa.  (To define $X'$,
\X is modified by replacing all
$x$ in an equivalence class of $\equiv$ with a single representative; $Y'$ and
$Z'$ are defined similarly. The modification to the protocol, in either
direction, is for each party to locally map $X$ to $X'$ etc., or vice
versa; notice that the $Z'$ to $Z$ map is potentially randomized.)
Hence it is enough to study the communication complexity of securely
computing functions in the normal form.

\paragraph{A Normal Form for $(p_{XY},p_{Z|XY})$.}
We define a normal form for the pair $(p_{XY},p_{Z|XY})$, where $p_{XY}$ is the input distribution and the randomized function is $p_{Z|XY}$ as follows:
\begin{defn}
For a pair $(p_{XY},p_{Z|XY})$, define the relations $x\cong x'$, $y\cong y'$ and $z\cong z'$ as follows.
\begin{enumerate}
\item Take any $x,x'\in\X$ and define $\mathcal{S}_{x,x'}=\{y\in\Y : p_{XY}(x,y)>0, p_{XY}(x',y)>0\}$. We say that $x\cong x'$, if $\forall y\in\mathcal{S}_{x,x'}$ and $z\in\Z$, we have $p_{Z|XY}(z|x,y)=p_{Z|XY}(z|x',y)$.
\item Take any $y,y'\in\Y$ and define $\mathcal{S}_{y,y'}=\{x\in\X : p_{XY}(x,y)>0, p_{XY}(x,y')>0\}$. We say that $y\cong y'$, if $\forall x\in\mathcal{S}_{y,y'}$ and $z\in\Z$, we have $p_{Z|XY}(z|x,y)=p_{Z|XY}(z|x,y')$.
\item Take $z,z'\in\Z$ and define $\mathcal{S}=\{(x,y) : p_{XY}(x,y)>0\}$. We say that $z\cong z'$, if for some constant $c\geq 0$ and $\forall (x,y)\in\mathcal{S}$, we have $p_{Z|XY}(z|x,y)=c\cdot p_{Z|XY}(z'|x,y)$.
\end{enumerate}
A pair $(p_{XY},p_{Z|XY})$ is said to be in normal form if $x\cong x'\Rightarrow x=x'$, $y\cong y'\Rightarrow y=y'$, and $z\cong z'\Rightarrow z=z'$.
\end{defn}
We can assume without loss of generality that $(p_{XY},p_{Z|XY})$ is in normal form. Otherwise, suppose $x,x'\in\X$, where $x\neq x'$ and $x\cong x'$. In this case we can safely merge $x$ and $x'$ into a single $x^*$ without affecting anything. Now, $\forall (y,z)\in\Y\times\Z$, define $p_{XY}(x^*,y)=p_{XY}(x,y)+p_{XY}(x',y)$ and 
\[
 p_{Z|XY}(z|x^*,y) =
  \begin{cases}
   p_{Z|XY}(z|x,y) & \text{if } y \in \mathcal{S}_{x,x'}, \\
   p_{Z|XY}(z|x,y) & \text{if } p_{XY}(x,y)\geq 0 \text{ and } p_{XY}(x',y)=0, \\
   p_{Z|XY}(z|x',y) & \text{if } p_{XY}(x,y)=0 \text{ and } p_{XY}(x',y)>0, \\
  \end{cases}
\]
which gives $p_{XYZ}(x^*,y,z)=p_{XYZ}(x,y,z)+p_{XYZ}(x',y,z)$. Similarly we can merge equivalent $y$'s. For $z\neq z'$ with $z\cong z'$, we can merge $z$ and $z'$ into a single $z^*$ by defining $p_{Z|XY}(z^*|x,y)=p_{Z|XY}(z|x,y)+p_{Z|XY}(z'|x,y)$, which gives  $p_{XYZ}(x,y,z^*)=p_{XYZ}(x,y,z)+p_{XYZ}(x,y,z')$.

It is easy to see that one can transform any pair $(p_{XY},p_{Z|XY})$ defined by the given $p_{XY}$ and $p_{Z|XY}$ to one in normal form $(p_{X'Y'},p_{Z'|X'Y'})$ using the above described modification, with possibly smaller alphabets, so that any secure computation protocol for the former can be transformed to one for the latter with the same communication costs, and vice versa. The modification to the protocol, in either direction, is for each party to locally map $X$ to $X'$ etc., or vice versa; notice that the $Z'$ to $Z$ map is potentially randomized. Hence, it is enough to study secure computation problems $(p_{XY}, p_{Z|XY})$ in normal form.

\section{Entropy Lower Bounds as Communication Lower Bounds}\label{app:communication-lowerbound-through-entropy-lowerbound}
Our interest is in providing bounds on the amount of communication needed.
In this paper, we derive lower bounds on the entropies, $H(M_{12}),
H(M_{23}), H(M_{31})$, of the transcripts on the links. As we argue below,
these bounds will lowerbound the expected number of bits exchanged over the
link when we require that the message (bit string) sent by one party to
another in any round is a codeword in a prefix-free code such that the code
itself is determined by the messages exchanged between the two parties in
previous rounds. This is a natural requirement when variable length bit
strings are allowed as messages. Notice that without this restriction,
since a party sees the messages in each round rather than just a
concatenation of all the bits sent over the entire protocol, implicit
end-of-message markers are present which may convey implicit information
(without being accounted for in communication complexity which is measured
by the length of the transcript). The prefix-free requirement eliminates
the possibility of information being implicitly conveyed through end-of-message markers. 

Let us denote by $M_{\ij,t}$ the message sent {\em by} party-$i$ {\em to} party-$j$
in round-$t$. Let $L_{\ij,t}$ be the (potentially random) length in bits of
this message, and let $L_{ij} = \sum_{t=1}^N L_{\ij,t} + L_{\ji,t}$ be the
number of bits exchanged over the link $ij$ in either direction. We are
interested in lower bounds for ${\mathbb E}[L_{ij}]$. We have
\begin{align*}
H(M_{ij}) 
	&= \sum_{t=1}^\infty H(M_{\ij,t},M_{\ji,t}|M_{\ij}^{t-1},M_{\ji}^{t-1})\\
	&\leq \sum_{t=1}^\infty H(M_{\ij,t}|M_{\ij}^{t-1},M_{\ji}^{t-1}) +
H(M_{\ji,t}|M_{\ij}^{t-1},M_{\ji}^{t-1})\\
	&\stackrel{\text{(a)}}{\leq} \sum_{t=1}^\infty {\mathbb E}[L_{\ij,t}] +
{\mathbb E}[L_{\ji,t}]\\
	&= {\mathbb E}[L_{ij}],
\end{align*}
where (a) follows from the fact that the prefix-code of which $M_{\ij,t}$
is a codeword, is a function of the conditioning random variables
$(M_{\ij}^{t-1},M_{\ji}^{t-1})$, and hence the conditional entropy
$H(M_{\ij,t}|M_{\ij}^{t-1},M_{\ji}^{t-1})$ is no larger than ${\mathbb
E}[L_{\ij,t}]$ (by Kraft's inequality and non-negativity of
Kullback-Leibler divergence); similarly for the second term.

\section{Connections to Secure Sampling and Correlated Multi-Secret Sharing}\label{app:CMSS_sampling}

\paragraph{Secure Sampling.} In {\em secure sampling} functionalities, none of the parties receives any input, but all three parties produce outputs. The functionality is specified by a joint distribution $p_{XYZ}$ and the protocol for sampling $p_{XYZ}$ is specified by $\Pi(p_{XYZ})$.  The correctness condition in this case is that the outputs of Alice, Bob and Charlie are distributed according to $p_{XYZ}$. The security conditions remain the same as in the case of secure computation, that is, none of the parties can infer anything about the other parties' outputs other than what they can from their own outputs.

\paragraph{A Normal Form for $p_{XYZ}$.} For a joint distribution $p_{XYZ}$, define the relation $x\sim x'$ for $x,x'\in\X$ to hold if $\exists c\geq 0$ such that $\forall y\in\Y, z\in\Z$, $p(x,y,z)=c\cdot p(x',y,z)$. Similarly, we define $y\sim y'$ for $y,y'\in\Y$ and $z\sim z'$ for $z,z'\in\Z$. We say that $p_{XYZ}$ is in the {\em normal form} if $x\sim x' \Rightarrow x=x'$, $y\sim y' \Rightarrow y=y'$ and $z\sim z' \Rightarrow z=z'$.

It is easy to see that one can transform any distribution $p_{XYZ}$ to one in normal form $p_{X'Y'Z'}$, with possibly smaller alphabets, so that any secure sampling protocol for the former can be transformed to one for the latter with the same communication costs, and vice versa. (To define $X'$, $X$ is modified by removing all $x$ such that $p(x) = 0$ and then replacing all $x$ in an equivalence class of $\sim$ with a single representative; $Y'$ and $Z'$ are defined similarly. The modification to the protocol, in either direction, is for each party to locally map $X$ to $X'$ etc., or vice versa.) Hence it is enough to study the communication complexity of securely sampling distributions in the normal form.

Now, we show an analog of \Lemmaref{cutset} for secure sampling protocols.
\begin{lem}\label{lem:cutset_samp}
Suppose $p_{XYZ}$ is in normal form. Then, in any secure sampling protocol $\Pi(p_{XYZ})$,  the cut isolating Alice from Bob and Charlie must determine Alice's output $X$, i.e., $H(X| M_{12},M_{31}) =0$. Similarly, $H(Y|M_{12},M_{23})=0$ and $H(Z|M_{23},M_{31})=0$.
\begin{proof}
We only prove $H(X|M_{12},M_{31})=0$; the other ones, i.e., $H(Y|M_{12},M_{23})=0$ and $H(Z|M_{23},M_{31})=0$ are similarly proved. We need to show that for every $m_{12},m_{31}$
with $p(m_{12},m_{31})>0$, there is a (necessarily unique) $x\in {\mathcal
X}$ such that $p(x|m_{12},m_{31})=1$. Suppose, to the contrary, that we
have a secure sampling protocol resulting in a p.m.f. $p(x,y,z,m_{12},m_{31})$ such
that there exists $x,x'\in{\mathcal X}$, $x\neq x'$ and $m_{12},m_{31}$
satisfying $p(m_{12},m_{31})>0$, $p(x|m_{12},m_{31})>0$ and $p(x'|m_{12},m_{31})>0$. Since $p(m_{12},m_{31})>0$ and $p(x|m_{12},m_{31})>0$ imply $p_X(x)>0$, there exists $(y,z)$ s.t. $p_{XYZ}(x,y,z)>0$.
\begin{enumerate}
\item[(i)] The definition of a protocol implies that $p(x,y,z,m_{12},m_{31})$ can be written as \\ $p_{YZ}(y,z)p(m_{12},m_{31}|y,z)p(x|m_{12},m_{31})$.
\item[(ii)] Privacy against Alice implies that $p(x,y,z,m_{12},m_{31})$ can be written as \\ $p_{XYZ}(x,y,z)p(m_{12},m_{31}|x)$.
\item[(iii)] (i) and (ii) gives \\ $p_{YZ}(y,z)p(m_{12},m_{31}|y,z)p(x|m_{12},m_{31})=p_{XYZ}(x,y,z)p(m_{12},m_{31}|x)$. 
\end{enumerate}
By assumption, $p(m_{12},m_{31})>0$ and $p(x|m_{12},m_{31})>0$, which imply that $p(m_{12},m_{31}|x)>0$. And since $p_{XYZ}(x,y,z)>0$, we have from (iii) that $p(m_{12},m_{31}|y,z)>0$. Now consider $(x',y,z)$. By assumption, $p(m_{12},m_{31})>0$ and $p(x'|m_{12},m_{31})>0$, which imply $p(m_{12},m_{31}|x')>0$. Since $p(m_{12},m_{31}|y,z)>0$ from above, (iii) implies that $p_{XYZ}(x',y,z)>0$. Define $\alpha \triangleq \frac{p(x,y,z)}{p(x',y,z)}$. Since $p_{XYZ}$ is in normal form, $\exists (y',z')\in (\mathcal{Y,Z})$ s.t. $p_{XYZ}(x,y',z') \neq \alpha \cdot p_{XYZ}(x',y',z')$. Since $\alpha \neq 0$, at least one of $p(x,y',z')$ or $p(x',y',z')$ is non-zero. Assume that any one of these is non-zero, then applying the above arguments will give us that the other one should also be non-zero.
\begin{enumerate}
\item[(iv)] Dividing the expression in (iii) by the one we obtain when we apply the above arguments to $(x',y,z)$ gives $\frac{p(x|m_{12},m_{31})}{p(x'|m_{12},m_{31})} = \alpha \cdot \frac{p(m_{12},m_{31}|x)}{p(m_{12},m_{31}|x')}$.
\item[(v)] Repeating (i)-(iv) for $(x,y',z')$ and $(x',y',z')$, we get $\frac{p(x|m_{12},m_{31})}{p(x'|m_{12},m_{31})} \neq \alpha \cdot \frac{p(m_{12},m_{31}|x)}{p(m_{12},m_{31}|x')}$, which contradicts (iv).
\end{enumerate}
\end{proof}

\end{lem}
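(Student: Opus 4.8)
The plan is to argue by contradiction, with the argument paralleling that of \Lemmaref{cutset} but invoking the secure-sampling structure (Alice has no input but produces the output $X$) in place of the secure-computation structure. It suffices to prove $H(X\mid M_{12},M_{31})=0$; the claims $H(Y\mid M_{12},M_{23})=0$ and $H(Z\mid M_{23},M_{31})=0$ then follow by permuting the roles of the three parties.

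Suppose $H(X\mid M_{12},M_{31})>0$, so there are $m_{12},m_{31}$ with $p(m_{12},m_{31})>0$ and distinct $x,x'\in\X$ with $p(x\mid m_{12},m_{31})>0$ and $p(x'\mid m_{12},m_{31})>0$. I would exploit two factorizations of the joint law of $(X,Y,Z,M_{12},M_{31})$. First, since in a sampling protocol Alice has no input, her output $X$ is a probabilistic function of her view, which (after resampling private randomness) is exactly $(M_{12},M_{31})$; hence $X-(M_{12},M_{31})-(Y,Z)$, giving
\[ p(x,y,z,m_{12},m_{31})=p_{YZ}(y,z)\,p(m_{12},m_{31}\mid y,z)\,p(x\mid m_{12},m_{31}). \]
Second, privacy against Alice yields $(M_{12},M_{31})-X-(Y,Z)$, giving
\[ p(x,y,z,m_{12},m_{31})=p_{XYZ}(x,y,z)\,p(m_{12},m_{31}\mid x). \]
Note that $p(m_{12},m_{31}\mid x)>0$ and $p(m_{12},m_{31}\mid x')>0$ follow from the contradiction hypothesis.

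Equating the two factorizations, I would first propagate positivity: if $p_{XYZ}(x,y,z)>0$ then the equation forces $p(m_{12},m_{31}\mid y,z)>0$, and then the same equation written with $x'$ in place of $x$ forces $p_{XYZ}(x',y,z)>0$; symmetrically with $x\leftrightarrow x'$. Thus $x$ and $x'$ have the same $(Y,Z)$-support. On that common support, dividing the $x$-equation by the $x'$-equation cancels the $(y,z)$-dependent factors $p_{YZ}(y,z)\,p(m_{12},m_{31}\mid y,z)$, leaving
\[ \frac{p_{XYZ}(x,y,z)}{p_{XYZ}(x',y,z)}=\frac{p(x\mid m_{12},m_{31})}{p(x'\mid m_{12},m_{31})}\cdot\frac{p(m_{12},m_{31}\mid x')}{p(m_{12},m_{31}\mid x)}=:\alpha, \]
a strictly positive constant independent of $(y,z)$. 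Since the $(Y,Z)$-supports of $x$ and $x'$ coincide, $p_{XYZ}(x,y,z)=\alpha\, p_{XYZ}(x',y,z)$ for \emph{every} $(y,z)$, i.e.\ $x\sim x'$ with $x\neq x'$, contradicting the assumption that $p_{XYZ}$ is in normal form.

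The only delicate point, and the main obstacle, is the bookkeeping around zero-probability events: one must ensure that every conditional probability appearing in the division step is conditioned on a positive-probability event, and must establish the support-coincidence claim \emph{before} concluding that $\alpha$ is a global constant (a priori it is defined only on the common support). This positivity-propagation through the two factorizations is exactly the secure-sampling analog of the corresponding step in the proof of \Lemmaref{cutset}, and everything else is routine manipulation of conditional distributions.
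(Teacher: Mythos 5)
Your proposal is correct and follows essentially the same route as the paper's proof: the same two factorizations of $p(x,y,z,m_{12},m_{31})$ (one from the protocol structure, one from privacy against Alice), the same positivity propagation between $x$ and $x'$, and the same division step yielding a constant ratio. The only (cosmetic) difference is at the end: you establish support coincidence and conclude $p_{XYZ}(x,y,z)=\alpha\,p_{XYZ}(x',y,z)$ for all $(y,z)$, directly contradicting the normal form, whereas the paper picks a witness $(y',z')$ where proportionality fails and derives two contradictory identities.
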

\begin{thm}\label{thm:samp_lbs}
Any secure sampling protocol $\Pi(p_{XYZ})$, where $p_{XYZ}$ is in normal form, should satisfy the following lower bounds on the entropy of the transcripts on each link.
\begin{align*}
H(M_{23}) &\geq RI(X;Z) + RI(X;Y) + H(Y,Z|X), \\
H(M_{31}) &\geq RI(Y;Z) + RI(X;Y) + H(X,Z|Y), \\
H(M_{12}) &\geq RI(X;Z) + RI(Y;Z) + H(X,Y|Z).
\end{align*}
\end{thm}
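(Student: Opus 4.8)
The plan is to follow the proof of \Theoremref{prelim_lbs} essentially verbatim, but to exploit the extra symmetry of the sampling model --- where all three parties output and none holds an input --- to upgrade each $\max$ of two residual-information terms into their \emph{sum}. The ingredient that makes this upgrade possible on \emph{every} link is the protocol information inequality \Lemmaref{infoineq}: a secure-sampling protocol has no inputs at all, so its ``inputs'' are vacuously independent of each other and the lemma applies. I would prove the bound on $H(M_{23})$ in full; the bounds on $H(M_{31})$ and $H(M_{12})$ then follow by permuting the roles of $(1,X)$, $(2,Y)$, $(3,Z)$, since \Lemmaref{cutset_samp} and the three privacy Markov chains are symmetric under this relabelling.

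First, decompose
\begin{align*}
H(M_{23}) = H(M_{23}\mid M_{12},M_{31}) + I(M_{23};M_{31}) + I(M_{23};M_{12}\mid M_{31}).
\end{align*}
For the first summand, \Lemmaref{cutset_samp} gives $H(X\mid M_{12},M_{31})=0$ and $H(Y\mid M_{12},M_{23})=H(Z\mid M_{23},M_{31})=0$, so conditioning on $(M_{12},M_{31})$ fixes $X$ and adjoining $M_{23}$ fixes $(Y,Z)$; hence $H(M_{23}\mid M_{12},M_{31}) = H(M_{23},Y,Z\mid M_{12},M_{31},X) \ge H(Y,Z\mid M_{12},M_{31},X) = H(Y,Z\mid X)$, the last equality by privacy against Alice. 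For the third summand, as in \eqref{eq:condI-vs-RI}, $I(M_{23};M_{12}\mid M_{31}) = I(M_{23},M_{31};M_{12},M_{31}\mid M_{31}) \ge \RI(M_{23}M_{31};M_{12}M_{31})$ via \eqref{eq:RI_minform} with $Q=M_{31}$; then apply \Lemmaref{monotone} with the deterministic reductions $(M_{23},M_{31})\mapsto Z$ and $(M_{12},M_{31})\mapsto X$ (valid by \Lemmaref{cutset_samp}) and the Markov chains $(M_{23},M_{31})-Z-X$ (privacy against Charlie) and $Z-X-(M_{12},M_{31})$ (privacy against Alice) to obtain $\RI(M_{23}M_{31};M_{12}M_{31}) \ge \RI(X;Z)$.

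The second summand is where the sampling symmetry is used. By \Lemmaref{infoineq} (applicable since the protocol has no inputs), $I(M_{23};M_{31}) \ge I(M_{23};M_{31}\mid M_{12})$; and then, exactly as in the previous step, $I(M_{23};M_{31}\mid M_{12}) = I(M_{23},M_{12};M_{31},M_{12}\mid M_{12}) \ge \RI(M_{23}M_{12};M_{31}M_{12}) \ge \RI(Y;X)$, the last inequality by \Lemmaref{monotone} with reductions $(M_{23},M_{12})\mapsto Y$, $(M_{31},M_{12})\mapsto X$ and chains $(M_{23},M_{12})-Y-X$ (privacy against Bob), $Y-X-(M_{31},M_{12})$ (privacy against Alice). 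Summing the three lower bounds gives $H(M_{23}) \ge \RI(X;Z) + \RI(X;Y) + H(Y,Z\mid X)$, and the other two inequalities follow symmetrically.

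The only step needing genuine care --- the ``main obstacle'' --- is justifying the use of \Lemmaref{infoineq}: one must check that a well-formed secure-sampling protocol meets its hypotheses (it is a well-formed $3$-party protocol, and having no inputs vacuously makes the inputs independent), so that $I(M_{23};M_{31}) \ge I(M_{23};M_{31}\mid M_{12})$ holds simultaneously on all three links. Everything else is a faithful transcription of the arguments already used for \Theoremref{prelim_lbs} and \Theoremref{intermediate_lbs}, now deployed symmetrically in all three directions.
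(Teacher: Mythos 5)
Your proof is correct and follows essentially the same route as the paper's: a chain-rule decomposition of $H(M_{23})$, the protocol information inequality (\Lemmaref{infoineq}, applicable since a sampling protocol has no inputs) to upgrade the unconditional mutual-information term, and the residual-information data-processing argument from \Theoremref{prelim_lbs} for the remaining terms. The only (immaterial) difference is the order of the chain rule — you peel off $M_{31}$ before $M_{12}$ and hence apply \Lemmaref{infoineq} to $I(M_{23};M_{31})$ rather than to $I(M_{12};M_{23})$ — which by the symmetry of the lemma yields the identical bound.
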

\begin{proof}
From \Lemmaref{cutset_samp}, we have $H(X|M_{12},M_{31})=0$, $H(Y|M_{12},M_{23})=0$ and $H(Z|M_{23},M_{31})=0$. Note that we can apply \Lemmaref{infoineq} for secure sampling of {\em dependent} $X$, $Y$ and $Z$, because, in the beginning, parties only have independent randomness, but no inputs. In the end, they output from a joint distribution $p_{XYZ}$, where $X$, $Y$ and $Z$ may be dependent, but this does not affect the requirements of \Lemmaref{infoineq} in any way. The proof for $H(M_{23})$ is given below; the other two bounds follows similarly.
\begin{align*}
H(M_{23}) &= I(M_{12};M_{23}) + H(M_{23}|M_{12}) \\
&= I(M_{12};M_{23}) + I(M_{31};M_{23}|M_{12}) + H(M_{23}|M_{12},M_{31})\\ 
&\stackrel{\text{(a)}}{\ge} I(M_{12};M_{23}|M_{31}) + I(M_{31};M_{23}|M_{12}) + H(M_{23}|M_{12},M_{31})\\ 
&\stackrel{\text{(b)}}{\ge} RI(X;Z) + RI(X;Y) + H(Y,Z | X),
\end{align*}
where (a) used $I(M_{12};M_{23}) \ge I(M_{12};M_{23}|M_{31})$, which follows from \Lemmaref{infoineq}; (b) used $I(M_{12};M_{23}|M_{31}) \geq RI(X;Z)$, $I(M_{31};M_{23}|M_{12}) \geq RI(X;Y)$ and $H(M_{23}|M_{12},M_{31}) \geq H(Y,Z | X)$, which we have shown in the proof of \Theoremref{prelim_lbs}.
\end{proof}

We remark that if the marginal distributions satisfy $p_{XY} = p_X p_Y$ (i.e., $X$ and $Y$ are independent), then a secure computation protocol for $p_{Z|XY}$ can be turned into a secure sampling protocol (with the same communication costs), by having Alice and Bob locally sample inputs $X$ and $Y$ according to $p_X$ and $p_Y$ and then run the computation protocol. So, whenever $X$ and $Y$ are independent, the lower bounds on communication for secure sampling imply lower bounds for secure computation.

\paragraph{Correlated Multi-Secret Sharing Schemes.}
We define a notion of secret-sharing, called Correlated Multi-Secret Sharing (CMSS) that is closely related to secure sampling/computation problem. We will show that lower bounds on the entropy of shares of such secret-sharing schemes will also be lower bounds on entropy of transcripts for the corresponding secure computation protocols. However, we shall show a separation between the efficiency of secret-sharing (where there is an omniscient dealer) and a protocol, using the stronger lower bounds we have established in \Sectionref{lowerbounds}.
\begin{defn}
Given a graph $G=(V,E)$, an adversary structure $\mathcal{A} \subseteq 2^V$, and a joint distribution $p_{(X_v)_{v\in V}}$ over random variables $X_v$ indexed by $v\in V$, a correlated multiple secret sharing scheme for $(G,p_{(X_v)_{v\in V}})$ defines a distribution $p_{(M_e)_{e\in E}|(X_v)_{v\in V}}$ of shares $M_e$ for each edge $e\in E$, such that the following hold. Below, for $S\subseteq E$, $M_S$ stands for the collection of all $M_e$ for $e\in
S$; similarly $X_T$ is defined for $T\subseteq V$; $E_v\subseteq E$ denotes the set of edges incident on a vertex $V$.
\begin{itemize}
\item Correctness: For all $v\in V$, $H(X_v|M_{E_v})=0$.
\item Privacy: For every set $T\in \mathcal{A}$, let $E_T=\cup_{v\in T} E_v$; then, $I(X_{\overline{T}};M_{E_T}|X_T)= 0$.
\end{itemize}
\end{defn}
Below we give a specialised version of the above general definition which is suitable to our setting, where $G$ is the clique over the vertex set $V=\{1,2,3\}$, and $\mathcal{A}=\{ \{1\}, \{2\}, \{3\} \}$ (corresponding to 1-privacy).

We define $\Sigma$ to be a {\em correlated multi-secret sharing scheme for a joint distribution $p_{XYZ}$} (with respect to our fixed adversary structures) if it probabilistically maps secrets $(X,Y,Z)$ to shares $M_{12},M_{23},M_{31}$ such that the following conditions hold:
\begin{itemize}
\item Correctness: $H(X|M_{12},M_{31}) = H(Y|M_{12},M_{23}) = H(Z|M_{23},M_{31}) = 0$.
\item Privacy: 
\begin{align*}
I((M_{12},M_{31});(Y,Z)|X) &=0 \qquad \text{(privacy against Alice)}, \\
I((M_{12},M_{23});(X,Z)|Y) &=0 \qquad \text{(privacy against Bob)}, \\
I((M_{23},M_{31});(X,Y)|Z) &=0 \qquad \text{(privacy against Charlie)}.
\end{align*}

\end{itemize}
We point out that while the correctness condition relates only to the supports of $X$, $Y$ and $Z$ individually, the privacy condition is crucially influenced by the joint distribution.

\begin{thm}\label{thm:CMSS_prelim_lbs}
Any CMSS scheme for any joint distribution $p_{XYZ}$ satisfies
\begin{align*}
H(M_{12}) &\geq \max\{RI(X;Z), RI(Y;Z)\} + H(X,Y|Z), \\
H(M_{23}) &\geq \max\{RI(X;Z), RI(X;Y)\} + H(Y,Z|X), \\
H(M_{31}) &\geq \max\{RI(Y;Z), RI(X;Y)\} + H(X,Z|Y).
\end{align*}
\end{thm}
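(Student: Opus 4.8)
The plan is to reuse, essentially verbatim, the argument that proved \Theoremref{prelim_lbs}. The key observation is that that proof used only two properties of a secure protocol: the cut-set property of \Lemmaref{cutset}, and the three privacy Markov chains $M_1-X-(Y,Z)$, $M_2-Y-(X,Z)$, $M_3-Z-(X,Y)$. Both are built directly into the definition of a CMSS scheme for $p_{XYZ}$: the correctness condition $H(X|M_{12},M_{31})=H(Y|M_{12},M_{23})=H(Z|M_{23},M_{31})=0$ plays the role of \Lemmaref{cutset}, and the privacy conditions $I((M_{12},M_{31});(Y,Z)|X)=I((M_{12},M_{23});(X,Z)|Y)=I((M_{23},M_{31});(X,Y)|Z)=0$ are precisely the needed conditional independences. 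Since the proof of \Theoremref{prelim_lbs} never appeals to the interactive structure of a protocol, it transfers to the static CMSS setting unchanged.

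Concretely, I would prove $H(M_{23}) \geq \max\{RI(X;Z),RI(X;Y)\} + H(Y,Z|X)$; the other two bounds follow by symmetry (relabelling the roles of Alice, Bob, Charlie). Begin with
\[ H(M_{23}) \geq \max\{H(M_{23}|M_{31}),H(M_{23}|M_{12})\} = \max\{I(M_{23};M_{12}|M_{31}),I(M_{23};M_{31}|M_{12})\} + H(M_{23}|M_{12},M_{31}). \]
For the last term, use correctness to argue $H(M_{23}|M_{12},M_{31}) = H(M_{23},Y,Z|M_{12},M_{31},X) \geq H(Y,Z|M_{12},M_{31},X)$, and then privacy against Alice to drop the conditioning on $(M_{12},M_{31})$, obtaining $H(Y,Z|X)$. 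For the first term in the $\max$: by \Equationref{RI_minform} with $Q=M_{31}$ (a deterministic function of both $(M_{23},M_{31})$ and $(M_{12},M_{31})$), $I(M_{23};M_{12}|M_{31}) = I(M_{23}M_{31};M_{12}M_{31}|M_{31}) \geq RI(M_{23}M_{31};M_{12}M_{31})$. Since $Z$ is determined by $(M_{23},M_{31})$ and $X$ by $(M_{12},M_{31})$, appending these deterministic functions leaves the residual information unchanged, and privacy against Charlie gives $(M_{23},M_{31})-Z-X$ while privacy against Alice gives $Z-X-(M_{12},M_{31})$ --- exactly the two Markov chains \Lemmaref{monotone} requires --- so $RI(M_{23}M_{31};M_{12}M_{31}) \geq RI(Z;X) = RI(X;Z)$. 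For the second term in the $\max$, repeat this with $Q=M_{12}$, using that $Y$ is determined by $(M_{12},M_{23})$ and $X$ by $(M_{12},M_{31})$, with privacy against Bob and privacy against Alice supplying the chains $(M_{12},M_{23})-Y-X$ and $Y-X-(M_{12},M_{31})$; this yields $I(M_{23};M_{31}|M_{12}) \geq RI(X;Y)$.

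I do not expect a genuine obstacle here: the whole content is bookkeeping of which CMSS axiom supplies each conditional-independence or determinism fact used in the proof of \Theoremref{prelim_lbs}. The one point deserving care is to check that every Markov chain fed into \Lemmaref{monotone} is a consequence of one of the three privacy conditions alone (and not of the interactive structure exploited, e.g., in \Lemmaref{infoineq}), and that each ``determined by incident shares'' claim is exactly a clause of the correctness condition; both checks are immediate from the definitions above.
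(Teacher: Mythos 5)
Your proposal is correct and is exactly the paper's argument: the paper proves this theorem by noting that the proof of \Theoremref{prelim_lbs} goes through verbatim once the CMSS correctness conditions $H(X|M_{12},M_{31})=H(Y|M_{12},M_{23})=H(Z|M_{23},M_{31})=0$ replace \Lemmaref{cutset}, with the privacy conditions supplying the Markov chains for \Lemmaref{monotone}. Your bookkeeping of which axiom supplies each step matches the paper's proof of \Theoremref{prelim_lbs} exactly.
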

\begin{proof}
We proceed along the lines of the proof of \Theoremref{prelim_lbs}, except that here we do not need \Lemmaref{cutset} to argue that $H(X|M_{12},M_{31})$ = $H(Y|M_{12},M_{23})$ = $H(Z|M_{23},M_{31}) = 0$, instead, these follow from the correctness of CMSS.
\end{proof}

If $p_{XYZ}=p_{XY}p_{Z|XY}$, where $(p_{XY},p_{Z|XY})$ is in normal form, using \Lemmaref{cutset}, the bounds in \Theoremref{CMSS_prelim_lbs} imply bounds in \Theoremref{prelim_lbs}. If $p_{XYZ}$ has full support, then we can further strengthen the bounds in \Theoremref{CMSS_prelim_lbs} by applying distribution switching.

\begin{thm}\label{thm:CMSS_lbs}
Consider any CMSS scheme for a joint distribution $p_{XYZ}$, where $p_{XYZ}$ has full support.
\begin{enumerate}
\item $H(M_{12}) \geq \max\{\sup_{p_{X'Y'Z'}}\left(RI(X';Z')+H(X',Y'|Z')\right), \sup_{p_{X'Y'Z'}}\left(RI(Y';Z') + H(X',Y'|Z')\right)\},$ \\[0.2cm]
where $p_{X'Y'Z'}$ is any distribution for which the characteristic bipartite graph of $p_{X'Y'}$ is connected.
\item $H(M_{23}) \geq \max\{\sup_{p_{X'Y'Z'}}\left(RI(X';Z')+H(Y',Z'|X')\right), \sup_{p_{X'Y'Z'}}\left(RI(X';Y') + H(Y',Z'|X')\right)\},$ \\[0.2cm]
where $p_{X'Y'Z'}$ is any distribution for which the characteristic bipartite graph of $p_{Y'Z'}$ is connected.
\item $H(M_{31}) \geq \max\{\sup_{p_{X'Y'Z'}}\left(RI(Y';Z')+H(X',Z'|Y')\right), \sup_{p_{X'Y'Z'}}\left(RI(X';Y') + H(X',Z'|Y')\right)\},$ \\[0.2cm]
where $p_{X'Y'Z'}$ is any distribution for which the characteristic bipartite graph of $p_{X'Z'}$ is connected.
\end{enumerate}
\end{thm}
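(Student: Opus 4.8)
The plan is to replay, for CMSS schemes, exactly the distribution-switching argument that takes Theorem~\ref{thm:prelim_lbs} to Theorem~\ref{thm:improved_prelim_lbs}, with the role of Lemma~\ref{lem:XYZ_inde_M123} played by a CMSS analog. Throughout, the key point to exploit is that a CMSS scheme is specified by a \emph{conditional} distribution $p_{(M_e)|(X_v)}$ of shares given secrets; feeding it a different secret distribution produces another candidate scheme with the \emph{same} share-generation rule, and the privacy and correctness axioms are then checked to be preserved.

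First I would record the CMSS counterpart of Lemma~\ref{lem:XYZ_inde_M123}: in any CMSS scheme for $p_{XYZ}$, if the characteristic bipartite graph of $p_{XY}$ is connected then $I(X,Y,Z;M_{12})=0$, and symmetrically connectedness of the characteristic bipartite graph of $p_{YZ}$ (resp.\ $p_{XZ}$) forces $I(X,Y,Z;M_{23})=0$ (resp.\ $I(X,Y,Z;M_{31})=0$). This is immediate from the privacy conditions: privacy against Alice gives $p_{M_{12}|XYZ}=p_{M_{12}|X}$ and privacy against Bob gives $p_{M_{12}|XYZ}=p_{M_{12}|Y}$, so $p_{M_{12}|X=x}=p_{M_{12}|Y=y}$ for every edge $(x,y)$ of the characteristic bipartite graph of $p_{XY}$; connectedness makes all these conditionals equal to a common distribution $q_{12}$, which is then the marginal of $M_{12}$ and is independent of $(X,Y,Z)$. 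In particular, since the given $p_{XYZ}$ has full support, all three of $p_{XY}, p_{YZ}, p_{XZ}$ have complete (hence connected) characteristic bipartite graphs, so in the scheme under consideration each of $M_{12},M_{23},M_{31}$ is independent of $(X,Y,Z)$, with $p_{M_{ij}|XYZ}(\cdot\mid x,y,z)=q_{ij}(\cdot)$ for \emph{all} $(x,y,z)$.

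Next I would carry out the switching for the $M_{12}$ bound (the other two are identical). Fix any $p_{X'Y'Z'}$ on the same alphabets with the characteristic bipartite graph of $p_{X'Y'}$ connected, and define the switched scheme to use the same share-generation conditional but fed with secrets drawn from $p_{X'Y'Z'}$. I claim it is a valid CMSS scheme for $p_{X'Y'Z'}$. Privacy transfers because each privacy condition asserts that a conditional distribution of shares given secrets factors through a sub-vector of the secrets; under the original full-support distribution this is an identity of conditional distributions valid at every secret triple, hence it persists after switching. Correctness transfers because under the original scheme there is a deterministic map $g$ with $X=g(M_{12},M_{31})$ on the support, and since $p_{XYZ}$ has full support, switching to any $p_{X'Y'Z'}$ can only shrink the support of the shares (the share conditional is unchanged while the secret support shrinks), so the same $g$ witnesses $H(X'|M_{12},M_{31})=0$, and likewise for $Y'$ and $Z'$. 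Now Theorem~\ref{thm:CMSS_prelim_lbs} applied to the switched scheme gives $H(M_{12})\ge RI(X';Z')+H(X',Y'|Z')$ and $H(M_{12})\ge RI(Y';Z')+H(X',Y'|Z')$, where $H(M_{12})$ refers to the switched share; but by the previous paragraph the switched share has marginal $\sum_{x,y,z}p_{X'Y'Z'}(x,y,z)\,q_{12}(\cdot)=q_{12}(\cdot)$, which is the marginal of $M_{12}$ in the original scheme, so $H(M_{12})$ is the same for both. Taking the supremum of each of the two lower bounds separately over all admissible $p_{X'Y'Z'}$ yields the two-term maximum in part~1; parts~2 and~3 follow verbatim, switching over distributions whose $p_{Y'Z'}$ (resp.\ $p_{X'Z'}$) characteristic bipartite graph is connected and using the independence of $M_{23}$ (resp.\ $M_{31}$) established above.

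The only step that genuinely requires care is the transfer of correctness under switching, and this is precisely where the full-support hypothesis on $p_{XYZ}$ enters: it is what guarantees that replacing the secret distribution cannot enlarge the support of the shares, so the reconstruction maps for the original scheme remain valid. Everything else — privacy transfer and the invariance of the share entropies — is routine once the share-independence of the first step is in hand.
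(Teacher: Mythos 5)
Your proof is correct and follows essentially the same route as the paper's: establish the CMSS analog of Lemma~\ref{lem:XYZ_inde_M123} from the privacy conditions, verify that the fixed share-generation conditional remains a valid CMSS under any switched secret distribution (using full support of $p_{XYZ}$ to transfer correctness and privacy), and then apply Theorem~\ref{thm:CMSS_prelim_lbs} to the switched scheme while noting the share marginals are unchanged. You have simply filled in the details that the paper's terse proof leaves implicit, in particular the careful argument for why correctness survives the switch.
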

\begin{proof}
First we observe that we can apply distribution switching to CMSS schemes also, i.e., if we have a CMSS $\Sigma(p_{XYZ})$, where $p_{XYZ}$ has full support, it will remain a CMSS if we change the distribution to a different one $p_{X'Y'Z'}$. This follows from the correctness and privacy conditions of a CMSS. Proceeding as in the proof of \Lemmaref{XYZ_inde_M123}, we can show that for any CMSS $\Sigma(p_{XYZ})$, connectedness of the characteristic bipartite graph of $p_{XY}$ implies $I(X,Y,Z;M_{12})=0$. The other two, i.e., connectedness of the characteristic bipartite graph of $p_{XZ}$ implies $I(X,Y,Z;M_{31})=0$ and connectedness of the characteristic bipartite graph of $p_{YZ}$ implies $I(X,Y,Z;M_{23})=0$, follow similarly. Now, we can apply the distribution switching to the bounds in \Theoremref{CMSS_prelim_lbs}.
\end{proof}

It is easy to see that any secure sampling protocol $\Pi(p_{XYZ})$, where $p_{XYZ}$ is in normal form, yields a CMSS scheme for
the same joint distribution $p_{XYZ}$: An omniscient dealer can always
produce the shares $M_{12},M_{23},M_{31}$ which are precisely the
transcripts produced by the secure sampling protocol. Now, correctness for
this CMSS follows from \Lemmaref{cutset_samp}, and
privacy of CMSS scheme follows from the privacy of the secure sampling
protocol. Thus the lower bounds on the transcripts produced by a CMSS
scheme for a given $p_{XYZ}$ in normal form, gives lower bounds on the corresponding links for any secure sampling protocol for this $p_{XYZ}$. If $p_{XYZ}=p_{XY}p_{Y|XY}$, where $(p_{XY},p_{Z|XY})$ is in normal form, then lower bounds for CMSS schemes provide lower bounds for secure
computation problems. As we discuss in \Pageref{thm:and}, this lower
bound is not tight in general, i.e., there is a function (in fact the {\sc
and} function) for which there is a CMSS scheme which requires less
communication than what our lower bound for secure computation for that
function provides. Towards this, here we give upper bounds on the share
sizes of a 3-party CMSS for {\sc and} which is defined as $X$ and $Y$
independent and uniformly distributed bits, and $Z=X\land Y$.

\begin{thm}\label{thm:gap_CMSS}
For $p_{XYZ}$ such that $X$ and $Y$ independent and uniformly distributed
bits, and $Z=X\land Y$, there is a CMSS $\Sigma(p_{XYZ})$ which has
$H(M_{12})=H(M_{23})=H(M_{31})=\log(3)$.
%There exists a function $p_{Z|XY}$ and input distributions $p_{X}p_{Y}$ such that for the corresponding secret sharing problem, there is a CMSS scheme $\Sigma$, for which the entropy of the share on Alice-Bob link is strictly smaller than the entropy of the transcript on that link for any secure computation protocol.
\end{thm}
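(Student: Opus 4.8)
The plan is to construct such a scheme explicitly over $\mathbb{Z}_3$, arranging matters so that each individual share is uniform on $\{0,1,2\}$ while the two shares on any link jointly determine the secret reconstructible from that link. Encode $X,Y,Z\in\{0,1\}$ as elements of $\mathbb{Z}_3$. The dealer, knowing $(X,Y,Z)$ with $Z=X\wedge Y$, first samples a mask $R$ uniformly from $\mathbb{Z}_3$ independently of everything, then samples a pair $(D_1,D_2)\in\mathbb{Z}_3^2$ from the conditional law
\[
(D_1,D_2)\ \sim\ \begin{cases}(0,0) & \text{if }X=Y=1,\\[2pt] (0,\,\mathrm{Unif}\{1,2\}) & \text{if }X=1,\ Y=0,\\[2pt] (\mathrm{Unif}\{1,2\},\,0) & \text{if }X=0,\ Y=1,\\[2pt] \mathrm{Unif}\{(1,2),(2,1)\} & \text{if }X=Y=0,\end{cases}
\]
and sets $M_{12}=R$, $M_{31}=R+D_1$, $M_{23}=R+D_2$, all arithmetic modulo $3$. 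The intuition is that $D_1=M_{31}-M_{12}$ records ``$X=1$'' (it is $0$ iff $X=1$), $D_2=M_{23}-M_{12}$ records ``$Y=1$'', and $D_2-D_1=M_{23}-M_{31}$ records ``$Z=1$''; these are mutually consistent with $Z=X\wedge Y$ in all four input cases, the only one leaving any freedom being $X=Y=0$, where we must place $(D_1,D_2)$ on $\{(1,2),(2,1)\}$ so that $D_1\ne D_2$ (i.e.\ $Z=0$) is recorded correctly.

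Correctness is immediate from this description: $(M_{12},M_{31})$ determines $D_1$ and hence $X$; $(M_{12},M_{23})$ determines $D_2$ and hence $Y$; $(M_{23},M_{31})$ determines $D_2-D_1$ and hence $Z$. For the entropy claim, since $R$ is uniform and independent of $(D_1,D_2)$, each of $M_{12}=R$, $M_{31}=R+D_1$, $M_{23}=R+D_2$ is marginally uniform on $\mathbb{Z}_3$, giving $H(M_{12})=H(M_{23})=H(M_{31})=\log 3$.

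The substance of the proof is the three privacy conditions, and the key point is that, conditioned on the full secret $(X,Y,Z)$, the law of each ``difference'' variable depends on only one of the three secrets. Concretely: the conditional law of $D_1$ given $(X,Y,Z)$ depends only on $X$ (it is $\delta_0$ when $X=1$ and $\mathrm{Unif}\{1,2\}$ when $X=0$, as one reads off the display); symmetrically, the conditional law of $D_2$ given $(X,Y,Z)$ depends only on $Y$; and the conditional law of $D_2-D_1$ given $(X,Y,Z)$ depends only on $Z$ (it is $\delta_0$ when $Z=1$, and a short case check shows it equals $\mathrm{Unif}\{1,2\}$ in each of the three cases with $Z=0$). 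Now $(M_{12},M_{31})$ and $(R,D_1)$ determine each other, and since $R$ is independent of $(X,Y,Z,D_1,D_2)$, the law of $(R,D_1)$ given $(X,Y,Z)$ depends only on $X$, whence $I((M_{12},M_{31});(Y,Z)\mid X)=0$ (privacy against Alice); privacy against Bob follows symmetrically. For privacy against Charlie, $(M_{23},M_{31})$ and $(R+D_1,\ D_2-D_1)$ determine each other; conditioned on $(X,Y,Z,D_1,D_2)$ the variable $R+D_1$ is uniform, so $R+D_1$ is uniform and independent of $D_2-D_1$ given $(X,Y,Z)$, and hence the law of $(R+D_1,D_2-D_1)$ given $(X,Y,Z)$ depends only on $Z$, giving $I((M_{23},M_{31});(X,Y)\mid Z)=0$.

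The only genuine obstacle is the design decision in the $X=Y=0$ case: the non-linearity of {\sc and} forbids making $D_1$ and $D_2$ conditionally independent there (that would allow $D_1=D_2$, i.e.\ an incorrect reconstruction $Z=1$), forcing a correlation that puts $(D_1,D_2)$ on $\{(1,2),(2,1)\}$; the construction survives because this correlation is invisible to each party separately, as each of $D_1$ and $D_2$ remains marginally uniform on $\{1,2\}$ — exactly what the privacy conditions need.
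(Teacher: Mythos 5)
Your construction is exactly the paper's scheme in a different parametrization: with $R=\alpha$, $R+D_1$ playing the role of ($\alpha$ if $X=1$, else $\beta$) and $R+D_2$ the role of ($\alpha$ if $Y=1$, else $\gamma$), your joint distribution of $(M_{12},M_{31},M_{23})$ coincides with the one obtained from a uniformly random permutation $(\alpha,\beta,\gamma)$ of $\{0,1,2\}$. The proof is correct, and your explicit verification of the three privacy conditions (which the paper leaves as ``it can be seen'') via the observation that each of $D_1$, $D_2$, $D_2-D_1$ has conditional law depending on only one secret is a valid and complete way to fill in that check.
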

\begin{proof}
%Let $X,Y\sim \text{Bernoulli}(1/2)$ and the function be determinisitc
%$Z=X\wedge Y$, the {\sc and} function which we studied in
%\Sectionref{and}. From \Theoremref{and}, we know that
%for any secure protocol for {\sc and} for this distribution of $X,Y$,
%we will have $H(M^\Pi_{12}) > 1.826$. 

%On the other hand,
Consider a CMSS scheme $\Sigma$ defined as follows.
Let $(\alpha,\beta,\gamma)$ be a random permutation of the set
$\{0,1,2\}$. Let $M_{12}=\alpha$, and
\begin{align*}
M_{31}=\begin{cases}
\alpha \qquad \text{if }X=1 \\
\beta  \qquad \text{if }X=0
\end{cases}
&
\qquad
M_{23}=\begin{cases}
\alpha \qquad \text{if }Y=1 \\
\gamma  \qquad \text{if }Y=0
\end{cases}
\end{align*}
It can be seen that this scheme satisfies the correctness and privacy
requirements (in particular, 
$(M_{12},M_{31})$ is uniformly random, conditioned on
$M_{12}=M_{31}$ when $X=1$ and conditioned on
$M_{12}\not=M_{31}$  when $X=0$).
$H(M^\Sigma_{12})=H(M^\Sigma_{23})=H(M^\Sigma_{31})=\log 3 < 1.585$.\\

\noindent \Theoremref{CMSS_lbs} implies that this scheme is optimal.
\end{proof}

\section{Proofs of the Main Theorems}\label{app:main_thms_proofs}
\begin{proof}[Proof of \Theoremref{lowerbound}]
Here, we prove our lower bounds only for {\em independent} inputs, i.e., $p_{XY}=p_Xp_Y$, but as we show in \Appendixref{dependent_inputs_lbs}, they also hold for dependent inputs $p_{XY}$ with full support.

Suppose we have a secure protocol for computing $p_{Z|XY}$ in the normal form under $p_X,p_Y$ which have full support. Consider $H(M_{23})$,
\[ H(M_{23}) = I(M_{23};M_{12}) + I(M_{23};M_{31}|M_{12}) + H(M_{23}|M_{12},M_{31}).\]
By \Lemmaref{XY_inde_M23M31}, $M_{23}$ is independent of $X$. So, by distribution switching, we know that we may switch the distribution of $X$ to, say, $p_{X''}$ which also has full support and the resulting $M_{23}$ has the same distribution as under $p_X$, i.e.,
\begin{align*}
H(M_{23}) = \sup_{p_{X''}} I(M_{23};M_{12}) + I(M_{23};M_{31}|M_{12}) + H(M_{23}|M_{12},M_{31}).
\end{align*}
Under this switched distribution, let us consider the first term $I(M_{23};M_{12})$. Let us notice that, by privacy against Bob, $(M_{23},M_{12})$ must again be independent of $X''$. Hence, even if we switch the distribution of $X$ to, say $p_{X'}$, the joint distribution of $(M_{23},M_{12})$ must remain unchanged. Hence, we have that $I(M_{23};M_{12})$ under the distribution $p_{X''}$ is the same as that under $p_{X'}$. Therefore,
\begin{align*}
H(M_{23}) = \left(\sup_{p_{X'}} I(M_{23};M_{12})\right) + \left(\sup_{p_{X''}} I(M_{23};M_{31}|M_{12}) +
H(M_{23}|M_{12},M_{31})\right).
\end{align*}
Now proceeding as in the proof of \Theoremref{intermediate_lbs}, we have
\begin{align*}
H(M_{23}) \geq \left(\sup_{p_{X'}} RI(X';Z')\right) + \left(\sup_{p_{X''}} H(Y,Z''|X'')\right).
\end{align*}
The bound on $H(M_{31})$ follows in an identical fashion. To see the bounds on $H(M_{12})$, let us recall that $M_{12}$ is independent of $X,Y$ (by \Lemmaref{XYZ_inde_M123}) and hence we may switch the distributions of both $X$ and $Y$. Furthermore, let us note that we may write $H(M_{12})$ in two different ways.
\begin{align}
H(M_{12}) &= [I(M_{12};M_{31})] +
[I(M_{12};M_{23}|M_{31})+H(M_{12}|M_{23},M_{31})]\label{eq:M12intwoways1}\\
H(M_{12}) &= [I(M_{12};M_{23})] +
[I(M_{12};M_{31}|M_{23})+H(M_{12}|M_{23},M_{31})].\label{eq:M12intwoways2}
\end{align}
Using \eqref{eq:M12intwoways1} and proceeding as we did for $H(M_{23})$ leads to the top row of the right hand side of \eqref{eq:M12_bound}, and \eqref{eq:M12intwoways2} leads to the bottom row. 
\end{proof}

\begin{proof}[Proof of \Theoremref{conditionallowerbounds}]
Here, we prove our lower bounds only for {\em independent} inputs, i.e., $p_{XY}=p_Xp_Y$, but as we show in \Appendixref{dependent_inputs_lbs}, they also hold for dependent inputs $p_{XY}$ with full support.

In the proof of \Lemmaref{XYZ_inde_M123}, we show that under condition~1, $M_{31}$ is independent of both $X,Y$. This allows us to switch the distribution of both $X$ and $Y$ as we did for bounding $H(M_{12})$ in the proof of \Theoremref{lowerbound}. Proceeding in an identical fashion as there leads us to~\eqref{eq:M31_WithCond}. Similarly, under condition~2, $M_{23}$ is independent of $X,Y$ which leads to~\eqref{eq:M23_WithCond}.
\end{proof}

\section{Proofs Omitted from \Sectionref{lowerbounds}}\label{app:proofs}
\begin{proof}[Proof of \Lemmaref{cutset}]
First we will show $H(X|M_{12},M_{13})=0$; the other one, i.e. $H(Y|M_{12},M_{23})=0$, is similarly proved. We apply a cut-set argument. Consider the cut isolating Alice from Bob and Charlie.

We need to show that for every $m_{12},m_{31}$ with $p(m_{12},m_{31})>0$,
there is a (necessarily unique) $x\in {\mathcal X}$ such that
$p(x|m_{12},m_{31})=1$. Suppose, to the contrary, that we have a secure
protocol resulting in a p.m.f. $p(x,y,z,m_{12},m_{31})$ such that there
exists $x,x'\in{\mathcal X}$, $x\neq x'$ and $m_{12},m_{31}$ satisfying
$p(m_{12},m_{31})>0$, $p(x|m_{12},m_{31})>0$ and $p(x'|m_{12},m_{31})>0$. For these $x,x'$, since
$(p_{XY},p_{Z|XY})$ is in the normal form, $\exists (y,z) \in
{\mathcal{Y}\times\mathcal{Z}}$ such that $p_{XY}(x,y)>0,p_{XY}(x',y)>0$ and $p_{Z|X,Y}(z|x,y) \neq
p_{Z|X,Y}(z|x',y)$.

\begin{enumerate}
\item[(i)] The definition of a protocol implies that $p(x,y,z,m_{12},m_{31})$ can be written as \\ $p_{X,Y}(x,y)p(m_{12},m_{31}|x,y)p(z|m_{12},m_{31},y)$.
\item[(ii)] Privacy against Alice implies that $p(m_{12},m_{31}|x,y,z)=p(m_{12},m_{31}|x)$.
\item[(iii)] Using (ii) in (i), we get $p(x,y,z,m_{12},m_{31})=p_{X,Y}(x,y)p(m_{12},m_{31}|x)p(z|m_{12},m_{31},y)$.
\item[(iv)] Correctness and (ii) imply that we can also write\\$p(x,y,z,m_{12},m_{31})=p_{X,Y}(x,y)p_{Z|X,Y}(z|x,y)p(m_{12},m_{31}|x)$.
\item[(v)] Since $p_{X,Y}(x,y)p(m_{12},m_{31}|x)>0$, from (iii) and (iv), we get $p(z|m_{12},m_{31},y)=p_{Z|X,Y}(z|x,y)$.
\end{enumerate}
Applying the above arguments to $(x',y,z,m_{12},m_{31})$, we get $p(z|m_{12},m_{31},y)=p_{Z|X,Y}(z|x',y)$, leading to the contradiction $p(z|m_{12},m_{31},y) \neq p(z|m_{12},m_{31},y)$, since by assumption $p_{Z|X,Y}(z|x,y) \neq p_{Z|X,Y}(z|x',y)$.

For $H(Z|M_{23},M_{31})=0$, we need to show that for every $m_{23},m_{31}$
with $p(m_{23},m_{31})>0$, there is a (necessarily unique) $z\in {\mathcal
Z}$ such that $p(z|m_{23},m_{31})=1$. Suppose, to the contrary, that we
have a secure protocol resulting in a p.m.f. $p(x,y,z,m_{23},m_{31})$ such
that there exists $z,z'\in{\mathcal Z}$, $z\neq z'$ and $m_{23},m_{31}$
satisfying $p(m_{23},m_{31})>0$, $p(z|m_{23},m_{31})$ and $p(z'|m_{23},m_{31})>0$. Since $(p_{XY},p_{Z|XY})$ is in normal form, there exists $(x,y)$ s.t. $p_{XY}(x,y)>0$ and $p_{Z|X,Y}(z|x,y)>0$.
\begin{enumerate}
\item[(i)] The definition of a protocol implies that $p(x,y,z,m_{23},m_{31})$ can be written as \\ $p_{X,Y}(x,y)p(m_{23},m_{31}|x,y)p(z|m_{23},m_{31})$.
\item[(ii)] Privacy against Charlie implies that $p(x,y,z,m_{23},m_{31})$ can be written as \\ $p_{X,Y}(x,y)p(z|x,y)p(m_{23},m_{31}|z)$.
\item[(iii)] (i) and (ii) gives \\ $p(m_{23},m_{31}|x,y)p(z|m_{23},m_{31})=p_{Z|X,Y}(z|x,y)p(m_{23},m_{31}|z)$. 
\end{enumerate}
By assumption, $p(m_{23},m_{31})>0$ and $p(z|m_{23},m_{31})>0$, which imply that $p(m_{23},m_{31}|z)>0$. And since $p_{Z|X,Y}(z|x,y)>0$, we have from (iii) that $p(m_{23},m_{31}|x,y)>0$. Now consider $(x,y,z')$. By assumption, $p(m_{23},m_{31})>0$ and $p(z'|m_{23},m_{31})>0$, which imply $p(m_{23},m_{31}|z')>0$. Since $p(m_{23},m_{31}|x,y)>0$ from above, (iii) implies that $p_{Z|X,Y}(z'|x,y)>0$. Define $\alpha \triangleq \frac{p(z|x,y)}{p(z'|x,y)}$. Since $(p_{XY},p_{Z|XY})$ is in normal form, $\exists (x',y')\in (\mathcal{X,Y})$ s.t. $p_{XY}(x',y')>0$ and $p_{Z|X,Y}(z|x',y') \neq \alpha \cdot p_{Z|X,Y}(z'|x',y')$. Since $\alpha \neq 0$, at least one of $p(z|x',y')$ or $p(z'|x',y')$ is non-zero. Assume that any one of these is non-zero, then applying the above arguments will give us that the other one should also be non-zero.
\begin{enumerate}
\item[(iv)] Dividing the expression in (iii) by the one we obtain when we apply the above arguments to $(x,y,z')$ gives $\frac{p(z|m_{23},m_{31})}{p(z'|m_{23},m_{31})} = \alpha \cdot \frac{p(m_{23},m_{31}|z)}{p(m_{23},m_{31}|z')}$.
\item[(v)] Repeating (i)-(iv) for $(x',y',z)$ and $(x',y',z')$, we get $\frac{p(z|m_{23},m_{31})}{p(z'|m_{23},m_{31})} \neq \alpha \cdot \frac{p(m_{23},m_{31}|z)}{p(m_{23},m_{31}|z')}$, which contradicts (iv).
\end{enumerate}
\end{proof}

\begin{proof}[Proof of \Lemmaref{XYZ_inde_M123}]\ 
\begin{enumerate}
\item To show $I(X,Y,Z;M_{12})=0$, we need only show that $I(X;M_{12})=0$, since $I(X,Y,Z;M_{12})=I(X;M_{12})+I(Y,Z;M_{12}|X)$ and the second term is equal to zero by the privacy against Alice.

For $I(X;M_{12})=0$, we need to show that $p(m_{12}|x)=p(m_{12}|x')$ for all $x,x'\in{\mathcal
X}$. Take some $x,x' \in \mathcal{X}, x \neq x'$. Suppose there is a $y \in
\mathcal{Y}$ s.t. $p_{XY}(x,y)>0,p_{XY}(x',y) >0$. Then, by privacy against Alice
$p(m_{12},x,y)=p_{X,Y}(x,y)p(m_{12}|x)$ and by privacy against
Bob $p(m_{12},x,y)= p_{X,Y}(x,y)p(m_{12}|y)$. By comparing these two,
we get $p(m_{12}|x)=p(m_{12}|y)$. Applying the above arguments to
$(m_{12},x',y)$ gives $p(m_{12}|x') = p(m_{12}|y)$. Hence,
$p(m_{12}|x)=p(m_{12}|x')$.

Connectedness of the characteristic bipartite graph of $p_{XY}$ implies that for every $x,x' \in \mathcal{X}$, there is a sequence $x_0=x,x_1,x_2,\hdots,x_{L-1},x_L=x'\in \mathcal{X}$ such that for every pair $(x_{l-1},x_l), l=1,2,\hdots,L$, there is a $y_l\in \mathcal{Y}$ s.t. $p_{X,Y}(x_{l-1},y_l)>0$ and $p_{X,Y}(x_l,y_l)>0$. Hence, $p(m_{12}|x)=p(m_{12}|x_1)=$ $p(m_{12}|x_2)=\hdots =p(m_{12}|x')$.\\

\item To show $I(X,Y,Z;M_{31})=0$ under condition~1, we need only show that $I(X;M_{31})=0$, since $I(X,Y,Z;M_{31})=I(X;M_{31})+I(Y,Z;M_{31}|X)$ and the second term is equal to zero by the privacy against Alice.

We need to show that $p(m_{31}|x)=p(m_{31}|x')$ for all $x,x'\in{\mathcal
X}$. Take some $x,x' \in \mathcal{X}, x \neq x'$. Suppose there is a $z \in
\mathcal{Z}$ s.t. $p_{Z|X,Y}(z|x,y),p_{Z|X,Y}(z|x',y') >0$ for some $y,y'
\in \mathcal{Y}$. Then, by privacy against Alice
$p(m_{31},x,z)=p_{X,Z}(x,z)p(m_{31}|x)$ and by privacy against
Charlie $p(m_{31},x,z)= p_{X,Z}(x,z)p(m_{31}|z)$. By comparing these two,
and since $p_{X,Z}(x,z) >0$ (which follows from the assumption that $p_{X,Y}$ has full
support), we get $p(m_{31}|x)=p(m_{31}|z)$. Applying the above arguments to
$(x',z)$ gives $p(m_{31}|x') = p(m_{31}|z)$. Hence,
$p(m_{31}|x)=p(m_{31}|x')$.

Condition 1 implies that for every $x,x' \in \mathcal{X}$, there is a sequence $x_0=x,x_1,x_2,\hdots,x_{L-1},x_L=x'\in \mathcal{X}$ such that for every pair $(x_{l-1},x_l), l=1,2,\hdots,L$, there is a $z_l\in \mathcal{Z}$ s.t. $p_{Z|X,Y}(z_l | x_{l-1},y_l)$, $p_{Z|X,Y}(z_l |x_l,y_l') >0$ for some $y_l,y_l' \in \mathcal{Y}$. Hence, $p(m_{31}|x)=p(m_{31}|x_1)=$ $p(m_{31}|x_2)=\hdots =p(m_{31}|x')$.

\item The other case under condition~2 follows similarly.
\end{enumerate}
\end{proof}

\begin{proof}[Proof of \Lemmaref{infoineq}] We will apply induction on the number of rounds of the protocol.\\
{\em Base case:} At the beginning of the protocol, all the transcripts $M_{\gamma \alpha}, M_{\beta \gamma}$ and $M_{\alpha \beta}$ are empty. So, the inequality is trivially true. \\
{\em Inductive step:} Assume that the inequality is true at the end of round $t$, and we prove it for $t+1$. For simplicity, let us denote the transcript $M_{\gamma \alpha}$ (similarly others) at the end of round $t$ by $M_{\gamma \alpha}$ itself and at the end of round $t+1$ by $\widetilde{M}_{\gamma \alpha}$. We denote by $\Delta M$, the new message sent in round $t+1$ and if that message is sent from party $\gamma$ to party $\alpha$, we denote it by $\Delta M_{\vec{\gamma \alpha}}$ and so $\widetilde{M}_{\gamma \alpha}$ becomes $(M_{\gamma \alpha}, \Delta M_{\vec{\gamma \alpha}})$.\\

Observe that we need to consider only three kinds of messages exchanged in round $t+1$, which are $\Delta M_{\vec{\beta \alpha}}, \Delta M_{\vec{\beta \gamma}}$ and $\Delta M_{\vec{\gamma \beta}}$. The inequality for other three kinds of messages is similarly proved. Since the parties do not share any common or correlated randomness, the new message that one party (say, $\beta$) sends to another (say, $\alpha$) is conditionally independent of the transcript ($M_{\gamma\alpha}$) between the other two parties ($\gamma$ and $\alpha$) conditioned on the transcripts ($M_{\alpha\beta},M_{\beta\gamma}$) on both of the links to which that party (namely, $\beta$) is associated with. So we have the following:
\begin{align}
I(M_{\gamma \alpha}; \Delta M_{\vec{\beta \alpha}}| M_{\alpha \beta}, M_{\beta \gamma}) &= 0, \label{eq:21inde} \\
I(M_{\gamma \alpha}; \Delta M_{\vec{\beta \gamma}}| M_{\alpha \beta}, M_{\beta \gamma}) &= 0, \label{eq:23inde} \\
I(M_{\alpha \beta}; \Delta M_{\vec{\gamma \beta}}| M_{\beta \gamma}, M_{\gamma \alpha}) &= 0. \label{eq:32inde}
\end{align}

\begin{enumerate}
\item [1.] If $\Delta M = \Delta M_{\vec{\beta \alpha}}$, then
\begin{align*}
I(\widetilde{M}_{\gamma \alpha}; \widetilde{M}_{\beta \gamma}) &\stackrel{\text{(a)}}{=} I({M}_{\gamma \alpha}; {M}_{\beta \gamma}) \\
&\stackrel{\text{(b)}}{\geq} I(M_{\gamma \alpha}; M_{\beta \gamma} | M_{\alpha \beta}) \\
&\stackrel{(\text{c)}}{=} I(M_{\gamma \alpha}; M_{\beta \gamma}, \Delta M_{\vec{\beta \alpha}} | M_{\alpha \beta}) \\
&\geq I(\underbrace{M_{\gamma \alpha}}_{\widetilde{M}_{\gamma \alpha}}; \underbrace{M_{\beta \gamma}}_{\widetilde{M}_{\beta \gamma}} | \underbrace{M_{\alpha \beta}, \Delta M_{\vec{\beta \alpha}}}_{\widetilde{M}_{\alpha \beta}}) \\
&= I(\widetilde{M}_{\gamma \alpha}; {\widetilde{M}_{\beta \gamma}} |
\widetilde{M}_{\alpha \beta}),
\end{align*}
where (a) follows because $\widetilde{M}_{\gamma \alpha} = M_{\gamma \alpha}$ and $\widetilde{M}_{\beta \gamma} = M_{\beta \gamma}$, (b) follows from the induction hypothesis and ({c}) follows from \eqref{eq:21inde}.\\

\item [2.] If $\Delta M = \Delta M_{\vec{\beta \gamma}}$, then
\begin{align*}
I(\widetilde{M}_{\gamma \alpha}; \widetilde{M}_{\beta \gamma}) &\stackrel{\text{(d)}}{=} I({M}_{\gamma \alpha}; {M}_{\beta \gamma}, \Delta M_{\vec{\beta \gamma}}) \\
&\geq I({M}_{\gamma \alpha}; {M}_{\beta \gamma}) \\
&\stackrel{\text{(e)}}{\geq} I(M_{\gamma \alpha}; M_{\beta \gamma} | M_{\alpha \beta}) \\
&\stackrel{\text{(f)}}{\geq} I(\underbrace{M_{\gamma \alpha}}_{\widetilde{M}_{\gamma \alpha}}; \underbrace{M_{\beta \gamma},\Delta M_{\vec{\beta \gamma}}}_{\widetilde{M}_{\beta \gamma}} | \underbrace{M_{\alpha \beta}}_{\widetilde{M}_{\alpha \beta}}) \\
&= I(\widetilde{M}_{\gamma \alpha}; {\widetilde{M}_{\beta \gamma}} |
\widetilde{M}_{\alpha \beta}),
\end{align*}
where (d) follows because $\widetilde{M}_{\gamma \alpha} = M_{\gamma \alpha}$ and $\widetilde{M}_{\beta \gamma} = (M_{\beta \gamma}, \Delta M_{\vec{\beta \gamma}})$, (e) follows from the induction hypothesis and (f) follows from \eqref{eq:23inde}.\\

\item [3.] If $\Delta M = \Delta M_{\vec{\gamma \beta}}$, then
\begin{align*}
I(\widetilde{M}_{\gamma \alpha}; \widetilde{M}_{\beta \gamma}) &\stackrel{\text{(g)}}{=} I({M}_{\gamma \alpha}; {M}_{\beta \gamma}, \Delta M_{\vec{\gamma \beta}}) \\
&= I({M}_{\gamma \alpha}; {M}_{\beta \gamma}) + I({M}_{\gamma \alpha}; \Delta M_{\vec{\gamma \beta}} | {M}_{\beta \gamma}) \\
&\stackrel{\text{(h)}}{=} I({M}_{\gamma \alpha}; {M}_{\beta \gamma}) + I({M}_{\gamma \alpha}; \Delta M_{\vec{\gamma \beta}} | {M}_{\beta \gamma}) + I(M_{\alpha \beta}; \Delta M_{\vec{\gamma \beta}}| M_{\beta \gamma}, M_{\gamma \alpha}) \\
&= I({M}_{\gamma \alpha}; {M}_{\beta \gamma}) + I({M}_{\gamma \alpha}, M_{\alpha \beta}; \Delta M_{\vec{\gamma \beta}} | {M}_{\beta \gamma}) \\
&\stackrel{\text{(i)}}{\geq} I(M_{\gamma \alpha}; M_{\beta \gamma} | M_{\alpha \beta}) + I({M}_{\gamma \alpha}; \Delta M_{\vec{\gamma \beta}} | {M}_{\beta \gamma}, M_{\alpha \beta}) \\
&= I(\underbrace{M_{\gamma \alpha}}_{\widetilde{M}_{\gamma \alpha}}; \underbrace{M_{\beta \gamma}, \Delta M_{\vec{\gamma \beta}}}_{\widetilde{M}_{\beta \gamma}} | \underbrace{M_{\alpha \beta}}_{\widetilde{M}_{\alpha \beta}}) \\
&= I(\widetilde{M}_{\gamma \alpha}; \widetilde{M}_{\beta \gamma} |
\widetilde{M}_{\alpha \beta}),
\end{align*}
where (g) follows because $\widetilde{M}_{\gamma \alpha} = M_{\gamma \alpha}$ and $\widetilde{M}_{\beta \gamma} = (M_{\beta \gamma}, \Delta M_{\vec{\gamma \beta}})$, (h) follows from \eqref{eq:32inde} and (i) follows from the induction hypothesis.
\end{enumerate}

\end{proof}

\section{Details omitted from \Sectionref{examples}}\label{app:examples}

\renewcommand{\star}{+}
%--------------------- BEGIN OF GROUP-ADD ------------------------
\subsection{Secure Computation of {\sc group-add}} \label{app:group}
Let $\mathbb{G}$ be a (possibly non-abelian) group with binary operation $\star$. The 
function {\sc group-add} is defined as follows: Alice has an input $X\in\mathbb{G}$, Bob has an input
$Y\in\mathbb{G}$ and Charlie should get $Z=f(X,Y)=X \star Y$.

In \Figureref{group}, we recapitulate a well-known simple protocol for
securely computing the above function. The protocol requires a
$|\mathbb{G}|$-ary symbol to be exchanged per computation over each link. As
we show below, this protocol is easily seen to be optimal in terms of
expected number of bits on each link as well as the amount of randomness. For vectors $X,Y\in\mathbb{G}^n$, we write $X\star Y$ to denote the component-wise computation.

\begin{figure}[htb]
\hrule height 1pt
\vspace{.05cm}
{\bf Algorithm 1:} {Secure Computation of {\sc group-add}} 
\hrule
\begin{algorithmic}[1]
\REQUIRE Alice \& Bob have input vectors $X,Y\in\mathbb{G}^n$.
\ENSURE Charlie securely computes the component-wise \[Z=X\star Y.\]

\medskip

\STATE Charlie samples $n$ i.i.d. uniformly distributed elements $K=(K_1,K_2,\hdots,K_n)$ from $\mathbb{G}$ using his private randomness; sends it to Bob as $M_{\vec{32}}=K$.

\STATE Bob sends $M_{\vec{21}}=Y \star M_{\vec{32}}$ to Alice.

\STATE Alice sends $M_{\vec{13}}=X \star M_{\vec{21}}$ to Charlie.

\STATE Charlie outputs $Z= M_{\vec{13}} - K$.
\end{algorithmic}
\hrule
\caption{An optimal protocol for secure computation in any group
$\mathbb{G}$. The protocol requires a $|\mathbb{G}|$-ary symbol to be exchanged per computation over each link.}
\label{fig:group}
\end{figure}
\begin{thm}\label{thm:group}
Any secure protocol for computing in a Group $\mathbb{G}$, where $p_{XY}$ has full support over $\mathbb{G}^n\times\mathbb{G}^n$, must satisfy 
\begin{align*}
H(M_{12}), H(M_{23}), H(M_{31}) \geq n\log|\mathbb{G}|, \\
\rho(\text{\sc GROUP-ADD}) \geq n\log|\mathbb{G}|.
\end{align*}
\end{thm}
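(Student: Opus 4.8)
The plan is to obtain all three communication bounds by instantiating the generic lower bound of \Theoremref{improved_prelim_lbs} with the \emph{uniform} input distribution, and to obtain the randomness bound from \Theoremref{randomness}. The only place the group structure enters is the observation that for each fixed $x$ the map $y\mapsto x\star y$ is a bijection of $\mathbb{G}^n$ (and similarly in the other argument), which makes the relevant conditional entropies maximal.

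First I would check the structural hypotheses for the $n$-fold functionality $f:\mathbb{G}^n\times\mathbb{G}^n\to\mathbb{G}^n$, $f(X,Y)=X\star Y$ (componentwise). By right cancellation in $\mathbb{G}$, $x\star y=x'\star y$ for all $y$ forces $x=x'$, so no two distinct $x$ are equivalent; symmetrically for $y$ (left cancellation); and since $p(z\mid x,y)=\mathbf 1[x\star y=z]$ cannot be a scalar multiple of $p(z'\mid x,y)$ when $z\neq z'$, no two distinct outputs are equivalent. Hence $p_{Z\mid XY}$ is in normal form. Next, for any nonempty $\mathcal{X}_1\subseteq\mathbb{G}^n$ and any fixed $x\in\mathcal{X}_1$ we have $\{x\star y:y\in\mathbb{G}^n\}=\mathbb{G}^n$, so the set $\mathcal{Z}_1$ of Condition~1 of \Lemmaref{XYZ_inde_M123} equals $\mathbb{G}^n$; likewise $\mathcal{Z}_2=\mathbb{G}^n$, so $\mathcal{Z}_1\cap\mathcal{Z}_2=\mathbb{G}^n\neq\varnothing$ and no forbidden partition of $\mathcal{X}$ exists. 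Thus Condition~1 holds, and by the symmetric argument Condition~2 holds as well.

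Now I would apply \Theoremref{improved_prelim_lbs} with $p_{X'Y'}$ the uniform distribution on $\mathbb{G}^n\times\mathbb{G}^n$ (which has full support), so that $X',Y'$ are independent and uniform and $Z'=X'\star Y'$. Because $y\mapsto x\star y$ is a bijection, $Z'$ is uniform on $\mathbb{G}^n$, and a direct computation of the joint law of $(X',Z')$ (respectively $(Y',Z')$) shows $Z'$ is independent of $X'$ and of $Y'$. Using $H(Z'\mid X',Y')=0$ and the non-negativity of residual information: \eqref{eq:improved_prelim_lb_M23} (applicable since Condition~2 holds) gives $H(M_{23})\ge H(Y',Z'\mid X')=H(Y'\mid X')=H(Y')=n\log|\mathbb{G}|$; symmetrically \eqref{eq:improved_prelim_lb_M31} (Condition~1) gives $H(M_{31})\ge n\log|\mathbb{G}|$; and \eqref{eq:improved_prelim_lb_M12} gives $H(M_{12})\ge H(X',Y'\mid Z')=H(X'\mid Z')+H(Y'\mid X',Z')=H(X')=n\log|\mathbb{G}|$. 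For the randomness bound, full support of $p_{XY}$ makes its characteristic bipartite graph complete and hence connected, so \Theoremref{randomness}(1) yields $\rho(p_{XY},p_{Z\mid XY})\ge H(M_{12})\ge n\log|\mathbb{G}|$ for every secure protocol.

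Finally I would note that all four bounds are matched by Algorithm~1: its transcripts $M_{23}=K$, $M_{12}=Y\star K$ and $M_{31}=X\star Y\star K$ are each uniform on $\mathbb{G}^n$ (since $K$ is uniform and independent of the inputs), so each link carries entropy exactly $n\log|\mathbb{G}|$, and the only randomness is Charlie's $K$, so $H(V_1,V_2,V_3\mid X,Y)=H(K)=n\log|\mathbb{G}|$; thus the protocol is communication-ideal and randomness-optimal. I do not expect a genuine obstacle: the one mildly delicate point is verifying that with uniform independent $X',Y'$ the output $Z'$ is uniform and independent of each input — exactly where bijectivity of the group operation is used — after which the generic theorems do the rest.
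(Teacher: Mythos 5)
Your proposal is correct and follows essentially the same route as the paper: verify Conditions~1 and~2 of \Lemmaref{XYZ_inde_M123}, apply \Theoremref{improved_prelim_lbs} while keeping only the conditional-entropy terms (discarding the non-negative residual-information terms), optimize with uniform independent inputs using the bijectivity of the group operation to get $n\log|\mathbb{G}|$ on each link, and invoke \Theoremref{randomness} via the bound on $H(M_{12})$ for the randomness claim. The extra verifications you include (normal form, and the matching analysis of Algorithm~1) are consistent with what the paper asserts elsewhere.
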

\begin{proof}
It is easy to see that the above function satisfies Condition~1 and Condition~2 of \Lemmaref{XYZ_inde_M123}. 
We will only need the last terms (corresponding to the na\"ive bounds $H(X',Y'|Z')$ etc., but
with distribution switching) of \eqref{eq:improved_prelim_lb_M12}, \eqref{eq:improved_prelim_lb_M31} and \eqref{eq:improved_prelim_lb_M23} for $H(M_{12})$, $H(M_{31})$ and $H(M_{23})$ respectively.
Since we are computing a deterministic function, and $Y$ can be determined from $(X,Z)$, the last terms in each of the these bounds will reduce to the following:
\begin{align*}
H(M_{12}) &\geq \sup_{p_{X'Y'}}  H(X'|Z'),\\
H(M_{31}) &\geq \sup_{p_{X'Y'}}  H(X'|Y'),\\
H(M_{23}) &\geq \sup_{p_{X'Y'}}  H(Y'|X').
\end{align*}
The optimum bounds for $M_{12}$, $M_{31}$ and $M_{23}$ are obtained by taking $X'$ and $Y'$ to be independent and uniform over $\mathbb{G}^n$, which gives $H(M_{12}), H(M_{31}), H(M_{23}) \geq n\log|\mathbb{G}|$.

From~\Theoremref{randomness} and the above bound on $H(M_{12})$, we have $\rho(\text{\sc GROUP-ADD}) \geq n\log |\mathbb{G}|$, which implies that the above protocol is randomness-optimal.
\end{proof}
%--------------------- END OF GROUP-ADD ------------------------

%--------------------- BEGIN OF SUM ------------------------

\subsection{Secure Computation of {\sc sum}}\label{app:sum}
The {\sc sum} function is defined as follows: Alice and Bob have one bit input $X\in\{0,1\}$ and $Y\in\{0,1\}$ respectively. Charlie wants to compute the arithmetic sum $Z=f(X,Y)=X+Y$. \Figureref{sum} recapitulates a simple protocol for this function. This protocol requires a ternary symbol to be exchanged per computation over each link. We show in below that our bounds give $H(M_{31}),H(M_{23})\geq \log(3)$ and $H(M_{12})\geq 1.5$. Thus, while the protocol matches the lower bound on $H(M_{31})$ and $H(M_{23})$, there is a gap for $H(M_{12})$. While the protocol requires $H(M_{12})=\log(3)$, the lower bound is only $H(M_{12})\geq1.5$. We also show that this protocol is randomness-optimal, which proves a recent conjecture of \cite{AbbeLe14} for three users.

For vectors $U,V\in\{0,1,2\}^n$, we write $U+V$ to denote the component-wise addition modulo-3.
\begin{figure}[htb]
\hrule height 1pt
\vspace{.05cm}
{\bf Algorithm 4:} {Secure Computation of \textsc{sum}}
\hrule
\begin{algorithmic}[1]
\REQUIRE Alice and Bob have input vectors $X,Y\in\{0,1\}^n$.
\ENSURE Charlie securely computes the component-wise {\sc sum} $Z=X+Y$.

\medskip

\STATE Charlie samples $n$ i.i.d. uniformly distributed elements $K=(K_1,K_2,\hdots,K_n)$ from $\{0,1,2\}$ using his private randomness; sends it to Alice as $M_{\vec{31}}=K$.

\STATE Alice sends $M_{\vec{12}}=M_{\vec{31}}+X$ to Bob.

\STATE Bob sends $M_{\vec{23}}=M_{\vec{12}}+Y$ to Charlie.

\STATE Charlie outputs $Z=M_{\vec{23}}-K$.
\end{algorithmic}
\hrule
\caption{A protocol to compute {\sc sum}. The protocol requires a ternary symbol to be exchanged over all the three links per computation. We show a lower bound of $\log(3)$ both on Alice-Charlie and Bob-Charlie links and a lower bound of 1.5 on Alice-Bob link.}
\label{fig:sum}
\end{figure}
\begin{thm}\label{thm:sum}
Any secure protocol for computing {\sc sum}, where $p_{XY}$ has full support over $\{0,1\}^n\times\{0,1\}^n$ must satisfy 
\begin{align*}
H(M_{31}),H(M_{23}) &\geq n\log(3) \text{ and } H(M_{12})\geq 1.5n, \\
\rho(\text{\sc sum}) &\geq n\log(3).
\end{align*}
\end{thm}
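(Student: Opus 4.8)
The plan is to apply the generic lower bounds of \Sectionref{lowerbounds} to the $n$-fold function $\textsc{sum}^n\colon\{0,1\}^n\times\{0,1\}^n\to\{0,1,2\}^n$ given by $Z=X+Y$ component-wise (on bit inputs this agrees with coordinate-wise addition mod $3$, as used by Algorithm~4). First I would record the structural facts needed. The function $\textsc{sum}^n$ is in normal form: for $x\neq x'$ the outputs $x+0^n$ and $x'+0^n$ differ so $x\not\equiv x'$ (similarly for $y$), and distinct $z\neq z'$ have disjoint preimages under a deterministic map so $z\not\equiv z'$. Moreover $\textsc{sum}^n$ satisfies both Condition~1 and Condition~2 of \Lemmaref{XYZ_inde_M123}: for any $y,y'\in\{0,1\}^n$ the string $z$ defined by $z_i=1$ on coordinates where $y_i\neq y'_i$ and $z_i=y_i$ elsewhere satisfies $z=x+y=x''+y'$ for suitable $x,x''$, so every bipartition of $\mathcal Y$ gives $\mathcal Z_1\cap\mathcal Z_2\neq\varnothing$, and Condition~1 follows by the $X\leftrightarrow Y$ symmetry of $\textsc{sum}$.

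For the bounds $H(M_{31}),H(M_{23})\ge n\log 3$ I would use the distribution-switching bounds \eqref{eq:improved_prelim_lb_M23} and \eqref{eq:improved_prelim_lb_M31} of \Theoremref{improved_prelim_lbs}, which are available precisely because $\textsc{sum}^n$ satisfies Conditions~1 and~2. For \eqref{eq:improved_prelim_lb_M23} it suffices to exhibit one full-support $p_{X'Y'}$ on $\{0,1\}^n\times\{0,1\}^n$ and estimate the first term in the max. Take $p_{X'Y'}(x,y)=3^{-n}\,2^{-w(x+y)}$, where $w(z)$ is the number of coordinates of $z$ equal to $1$. This has full support (every pair $(x,y)$ has $x+y\in\{0,1,2\}^n$), and since the fiber $\{(x,y):x+y=z\}$ has size $2^{w(z)}$, one checks $\Pr[Z'=z]=3^{-n}$ for every $z$, i.e.\ $Z'=X'+Y'$ is uniform on $\{0,1,2\}^n$, so $H(Z')=n\log 3$. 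Because the function is deterministic and $Y'$ is recovered from $(X',Z')$, we have $H(Y',Z'\mid X')=H(Y'\mid X')=H(Z'\mid X')$, hence $RI(X';Z')+H(Y',Z'\mid X')=I(X';Z')-H(X'\sqcap Z')+H(Z'\mid X')=H(Z')-H(X'\sqcap Z')$. It then remains to show the G\'acs--K\"orner common information $H(X'\sqcap Z')$ vanishes: the support of $(X',Z')$ is the set of ``compatible'' pairs $(x,z)$ with $z_i\in\{x_i,x_i+1\}$ for all $i$, and the corresponding characteristic bipartite graph is connected, since the right vertex $1^n$ is adjacent to every left vertex and every right vertex $z$ is adjacent to the left vertex $x$ with $x_i=\max(z_i-1,0)$. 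Thus $H(M_{23})\ge n\log 3$, and by symmetry $H(M_{31})\ge n\log 3$.

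For $H(M_{12})\ge 1.5\,n$ I would pass to independent uniform inputs: by distribution switching a secure protocol for the given full-support $p_{XY}$ remains secure for $X,Y$ independent uniform, and by \Lemmaref{XYZ_inde_M123}(1) the law of $M_{12}$ is unaffected, so \eqref{eq:intermediate_lb_M12} of \Theoremref{intermediate_lbs} yields $H(M_{12})\ge RI(X;Z)+RI(Y;Z)+H(X,Y\mid Z)$ evaluated under that distribution. There each $Z_i$ is $(\tfrac14,\tfrac12,\tfrac14)$-distributed, so $H(Z)=\tfrac32 n$ and $H(Z\mid X)=H(Y)=n$; the characteristic graph of $(X,Z)$ is connected by the same argument as above, so $H(X\sqcap Z)=0$ and $RI(X;Z)=I(X;Z)=\tfrac n2$, and likewise $RI(Y;Z)=\tfrac n2$; finally $(X,Y)$ is determined by $Z$ except on the coordinates where $Z_i=1$, so $H(X,Y\mid Z)=\Pr[Z_1=1]\cdot n=\tfrac n2$. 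Summing gives $H(M_{12})\ge \tfrac32 n$. The randomness bound is then immediate: since $\textsc{sum}^n$ has full support and satisfies Condition~1, \Theoremref{randomness} gives $\rho(\textsc{sum})\ge H(M_{31})\ge n\log 3$.

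The routine parts are the entropy computations; the step I expect to require the most care is verifying that the two G\'acs--K\"orner common-information terms vanish (equivalently, the connectivity of the two characteristic bipartite graphs) and double-checking that the correlated distribution $p_{X'Y'}(x,y)=3^{-n}2^{-w(x+y)}$ has full support while making $Z'$ exactly uniform on $\{0,1,2\}^n$ --- this is what lets \eqref{eq:improved_prelim_lb_M23} meet the protocol's $n\log 3$ on the $23$ and $31$ links.
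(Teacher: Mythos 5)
Your proof is correct and follows essentially the same route as the paper: you use \eqref{eq:improved_prelim_lb_M23} and \eqref{eq:improved_prelim_lb_M31} with (in effect) the same i.i.d.\ input distribution $p(0,0)=p(1,1)=1/3$, $p(0,1)=p(1,0)=1/6$ that makes $Z'$ uniform on $\{0,1,2\}^n$, and the interactive-protocol bound on $H(M_{12})$ under independent uniform inputs, giving $0.5n+0.5n+0.5n=1.5n$, plus \Theoremref{randomness} for the randomness claim. The only cosmetic differences are that you invoke \Theoremref{intermediate_lbs} together with an explicit distribution switch where the paper cites \eqref{eq:M12_bound} directly, and that you verify the vanishing of the G\'acs--K\"orner common-information terms (via connectivity of the characteristic bipartite graphs) which the paper merely asserts.
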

\begin{proof}
It is easy to see that \textsc{sum} satisfies Condition~1 and Condition~2 of \Lemmaref{XYZ_inde_M123}. Also, $RI(Y;Z)=I(Y;Z)$ and $RI(Z;X)=I(Z;X)$. It turns out that for $H(M_{31})$ and $H(M_{23})$, the bounds in~\eqref{eq:improved_prelim_lb_M31} and~\eqref{eq:improved_prelim_lb_M23} are better than~\eqref{eq:M31_WithCond} and~\eqref{eq:M23_WithCond} respectively. Since $X$ can be determined from $(Y,Z)$ and $Y$ can be determined from $(X,Z)$, we can simplify the bounds in~\eqref{eq:improved_prelim_lb_M31} and~\eqref{eq:improved_prelim_lb_M23} to the following:
\begin{align*}
H(M_{31}) &\geq \sup_{p_{X'Y'}} \left( H(Z') \right), \\
H(M_{23}) &\geq \sup_{p_{X'Y'}} \left( H(Z') \right).
\end{align*}
For $H(M_{31})$, taking $p_{X'Y'}(0,0)$ = $p_{X'Y'}(1,1)=1/3$ and $p_{X'Y'}(0,1)$ = $p_{X'Y'}(1,0)=1/6$ gives $H(M_{31}),H(M_{23})\geq n\log(3)$. For $H(M_{12})$, the bound in \eqref{eq:M12_bound} is better than \eqref{eq:improved_prelim_lb_M12} and \eqref{eq:M12_bound} simplifies to
\[ H(M_{12}) \geq \sup_{p_{X'}}\left\{\sup_{p_{Y'}} I(Y';Z') + \sup_{p_{Y''}}\left\{I(X';Z'') + H(X',Y''|Z'')\right\}\right\}.\]
The second term simplifies to $H(X')$. Taking $X',Y'\sim$ Bern(1/2) gives $H(M_{12})\geq 1.5n$.

Since {\sc sum} satisfies condition 1 of \Lemmaref{XYZ_inde_M123}. So, from~\Theoremref{randomness}, we have $\rho(\text{\sc sum}) \geq H(M_{31})$, which from the above calculation is lower bounded by $n\log(3)$, implying the randomness-optimality of the above protocol.

\end{proof}

\subsection{Secure Computation of {\sc controlled erasure}}
\label{app:erasure}

The controlled erasure function from~\cite{DPAllerton13} is shown below. Alice's input $X$ acts as
the ``control'' which decides whether Charlie receives an erasure
($\Delta$) or Bob's input $Y$. 
\begin{center}
\begin{tabular}{ccc}
\toprule
& \multicolumn{2}{c}{y}\\
\cmidrule(r){2-3} 
x&\multicolumn{1}{c}{0}&\multicolumn{1}{c}{1}\\
\midrule
0& $\Delta$ & $\Delta$\\
1& 0 & 1\\
\bottomrule
\end{tabular}
\end{center}
Notice that Charlie always find out Alice's control bit, but does not learn
Bob's bit when it is erased. This function does not satisfy Condition~1 of
\Lemmaref{XYZ_inde_M123}.

\Figureref{erasure} gives a protocol (repeated
from~\cite{DPAllerton13}) for securely computing this
function on each location of strings of length $n$. Bob sends his input
string to Charlie under the cover of a one-time pad and reveals the key
used to Alice. Alice sends his input to Charlie compressed using a Huffman
code (replaced by Lempel-Ziv if we want the protocol to be distribution
independent). He also sends to Charlie those key bits he received
corresponding to the locations where there is no erasure (i.e., where his
input bit is 1). When $X\sim\text{Bernoulli}(p)$ and
$Y\sim\text{Bernoulli}(q)$, i.i.d., where $p,q\in(0,1)$, the expected
message length for Alice-Charlie link is ${\mathbb E}[L_{31}] < nH_2(p)+1
+ np$, the messages lengths on the other two links are determinisitically
$n$ each, $L_{12}=L_{23}=n$. Here we prove the optimality of this protocol
for $X\sim\text{Bernoulli}( p)$ and $Y\sim\text{Bernoulli}( q)$, where $p,q
\in (0,1)$; \cite{DPAllerton13}~only considered the case where $X,Y\sim
\text{Bernoulli}(1/2)$. We also prove that this protocol is randomness-optimal.
\begin{figure}[htb]
\hrule height 1pt
\vspace{.05cm}
{\bf Algorithm 2:} {Secure Computation of \textsc{controlled erasure}} 
\hrule
\begin{algorithmic}[1]
\REQUIRE Alice \& Bob have input bits $X^n,Y^n\in\{0,1\}^n$.
\ENSURE Charlie securely computes the {\sc controlled erasure} function
\[Z_i=f(X_i,Y_i),\qquad i=1,\ldots,n.\]

\medskip

\STATE Bob samples $n$ i.i.d. uniformly distributed bits $K^n$ from his private randomness; sends it to Alice as $M_{\vec{21},1}=K^n$. Bob sends to Charlie his input $Y^n$ masked (bit-wise) with $K^n$ as $M_{\vec{23},1}=Y^n\oplus K^n$. 

\STATE Alice sends his input $X^n$ to Charlie compressed using a Huffman
code (or Lempel-Ziv if we want the protocol to not depend on the input
distribution of $X^n$); let $c(X^n)$ be the codeword. Alice also sends to
Charlie the sequence of key bits $K_i$ corresponding to the locations where
his input $X_i$ is 1.
\[ M_{\vec{12},2}= c(X^n),(K_i)_{i:X_i=1}.\]

\STATE Charlie outputs \[Z_i=\left\{ \begin{array}{ll}\Delta,& \text{ if } X_i=0\\ (Y_i\oplus K_i)\oplus K_i,& \text{ if } X_i=1.\end{array}\right.\]
\end{algorithmic}
\hrule

\caption{A protocol to compute {\sc controlled erasure} function. For
$X\sim\text{Bernoulli}(p)$ and $Y\sim\text{Bernoulli}(q)$, both i.i.d and
$p,q\in(0,1)$, the
expected message lengths per bit are ${\mathbb E}[L_{31}] < n(H_2(p)+p)+1$,
$L_{12} = n$, and $L_{23} =n$. We show that these are asymptotically
optimal by showing the following lower bounds: $H(M_{31}) \geq n(H_2(p) +
p)$, $H(M_{12})\geq n$ and $H(M_{23})\geq n$.}
\label{fig:erasure}
\end{figure}

\begin{thm}\label{thm:erasure}
Any secure protocol for computing {\sc controlled erasure} for $X\sim \text{Bernoulli}(p )$ and $Y\sim \text{Bernoulli}(q )$, both i.i.d., with $p,q\in(0,1)$ over block length $n$ must satisfy
\begin{align*}
H(M_{31}) \geq n(H_2(p) + p),\quad H(M_{12}) &\geq n,\;\; \text{ and }\quad H(M_{23})\geq n, \\
\rho(\text{\sc controlled-erasure}) &\geq n.
\end{align*}
\end{thm}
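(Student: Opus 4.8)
The plan is to apply the generic bounds of \Theoremref{lowerbound} and \Theoremref{improved_prelim_lbs} to the $n$-fold controlled-erasure function. First I would check the hypotheses: the function is in normal form (the control string is recovered from the output, so distinct control inputs induce distinct output distributions; the output $Y$-coordinates are separated by $x=1$; and distinct output strings are never scalar multiples of each other), and $p_{XY}=\mathrm{Bernoulli}(p)^{\otimes n}\otimes\mathrm{Bernoulli}(q)^{\otimes n}$ has full support since $p,q\in(0,1)$. As noted in the statement, Condition~1 of \Lemmaref{XYZ_inde_M123} fails (partitioning $\mathcal X=\{0,1\}^n$ by the first control bit separates the reachable outputs), so for $H(M_{31})$ only \eqref{eq:M31_NoCond} is available; but Condition~2 holds (the all-erasure string $\Delta^n$ is reachable from every $y^n$), so for $H(M_{23})$ we may use \eqref{eq:improved_prelim_lb_M23}, and for $H(M_{12})$ we use \eqref{eq:improved_prelim_lb_M12}. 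The single fact driving every computation is that under \emph{any} input distribution the control string is a deterministic function of the output ($X'_i=\mathbf{1}[Z'_i\neq\Delta]$); hence $X'\sqcap Z'=X'$ and $RI(X';Z')=I(X';Z')-H(X')=0$, while for independent $X',Y'$ one checks $Y'\sqcap Z'$ is trivial, so $RI(Y';Z')=I(Y';Z')$.

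For $H(M_{31})$ I apply \eqref{eq:M31_NoCond} keeping $X$ at its native distribution $\mathrm{Bernoulli}(p)^{\otimes n}$ and switching $Y'$ to $\mathrm{Bernoulli}(1/2)^{\otimes n}$. Then each $Z'_i$ equals $\Delta$ with probability $1-p$ and equals $Y'_i$ with probability $p$, so $H(Z'_i)=H_2(p)+p$ and $H(Z'_i\mid Y'_i)=H_2(p)$, giving $RI(Y'^n;Z'^n)=I(Y'^n;Z'^n)=np$. The second term is, for every full-support $Y''$, $H(X^n,Z''^n\mid Y''^n)=H(X^n\mid Y''^n)=H(X^n)=nH_2(p)$, using that $Z''^n$ is determined by $(X^n,Y''^n)$ and $X^n\perp Y''^n$. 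Summing, $H(M_{31})\ge np+nH_2(p)=n(H_2(p)+p)$, matching $\mathbb E[L_{31}]<n(H_2(p)+p)+1$ in \Figureref{erasure}.

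For $H(M_{23})$ I use the first branch of \eqref{eq:improved_prelim_lb_M23}, which (by $RI(X';Z')=0$ and $H(Z'\mid X',Y')=0$) equals $\sup_{p_{X'Y'}}H(Y'\mid X')$; taking $X'^n,Y'^n$ independent and uniform on $\{0,1\}^n$ gives $H(Y'^n\mid X'^n)=n$, hence $H(M_{23})\ge n$. For $H(M_{12})$ I use the first branch of \eqref{eq:improved_prelim_lb_M12}, which equals $\sup_{p_{X'Y'}}H(X',Y'\mid Z')$; with $X'^n\sim\mathrm{Bernoulli}(\epsilon)^{\otimes n}$ and $Y'^n\sim\mathrm{Bernoulli}(1/2)^{\otimes n}$ independent, conditioning on a non-erased coordinate fixes both $X'_i$ and $Y'_i$ while conditioning on an erased one leaves $Y'_i$ uniform, so $H(X'^n,Y'^n\mid Z'^n)=n(1-\epsilon)\to n$ as $\epsilon\downarrow 0$ over full-support distributions; hence $H(M_{12})\ge n$. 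Finally, since Condition~2 holds and $p_{XY}$ has full support, part~3 of \Theoremref{randomness} gives $\rho(\text{\sc controlled-erasure})\ge H(M_{23})\ge n$, matching the $n$ random bits Bob uses in \Figureref{erasure}.

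The one subtle point — and the reason the result needs the full toolkit rather than just \Theoremref{prelim_lbs} — is that for this function the residual-information terms attached to the control input vanish identically, so \Theoremref{prelim_lbs} alone gives only $H(M_{23})\ge nH_2(q)$ and $H(M_{31})\ge nH_2(p)+npH_2(q)$, both strictly below the target unless $q=1/2$; it is distribution switching (replacing $Y$ by a uniform input on the $M_{31}$ and $M_{23}$ links, and pushing $X'$ to a near-point-mass on the $M_{12}$ link) that recovers the exact constants. Everything else — the normal-form and Condition~1/2 verifications and the single-letter entropy arithmetic — is routine.
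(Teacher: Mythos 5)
Your proposal is correct and follows essentially the same route as the paper: \eqref{eq:M31_NoCond} with $Y'$ switched to uniform for $H(M_{31})$, the na\"ive terms of \Theoremref{improved_prelim_lbs} for $H(M_{23})$ and $H(M_{12})$, and \Theoremref{randomness} for $\rho$. The only (immaterial) deviations are that for $H(M_{12})$ you take the $RI(X';Z')$ branch with a near-point-mass $X'$ where the paper takes the $RI(Y';Z')$ branch with uniform inputs, and you route the randomness bound through $H(M_{23})$ (Condition~2) rather than $H(M_{12})$ (connectedness); both yield the same constants.
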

\begin{proof}
It is easy to see that this function satisfies only Condition~2 of \Lemmaref{XYZ_inde_M123}. We also have $RI(X;Z)=0$ and $RI(Y;Z)=I(Y;Z)$ for this function. Since Condition~1 of \Lemmaref{XYZ_inde_M123} is not satisfied, our best bound for $H(M_{31})$ is given by \eqref{eq:M31_NoCond}. Since $X$ is independent of $Y''$ in \eqref{eq:M31_NoCond} and we are computing a deterministic function, the bound in \eqref{eq:M31_NoCond} simplifies to the following:
\begin{align*}
H(M_{31}) &\geq \sup_{p_{Y'}}\left\{I({Y'}^n;{Z'}^n)\right\} + H(X^n).
\end{align*}
The optimum bound for $H(M_{31})$ is obtained by taking $Y'\sim \text{Bernoulli}(1/2)$, which gives $H(M_{31}) \geq n(p+H_2(p ))$. For $H(M_{23})$, we can apply the bound in \eqref{eq:improved_prelim_lb_M23}, which simplifies to the following:
\begin{align*}
H(M_{23}) &\geq \sup_{p_{X'Y'}}\left\{H({Y'}^n | {X'}^n)\right\}.
\end{align*}
Taking $Y'$ to be independent of $X'$ and $Y'\sim \text{Bernoulli}(1/2)$ gives $H(M_{31}) \geq n(p+H_2(p ))$. For $H(M_{12})$, we can apply the bound in \eqref{eq:improved_prelim_lb_M12}, which simplifies to the following:
\begin{align*}
H(M_{12}) &\geq \sup_{p_{X'Y'}} \left(I(Y';Z') + H(X',Y'|Z')\right) \\
&= \sup_{p_{X'Y'}} \left(-H(Z' | Y') + H(X',Y')\right) \\
&\stackrel{\text{(a)}}{=} \sup_{p_{X'Y'}} \left(-H(X' | Y') + H(X',Y')\right) \\
&= \sup_{p_{X'Y'}} \left(H(X')\right),
\end{align*}
where (a) follows because, we can determine $X$ from $Z$ and $Z$ is a deterministic function of $(X,Y)$. Now, taking $X'\sim \text{Bernoulli}(1/2)$ gives $H(M_{12})\geq n$.

From~\Theoremref{randomness} and the above bound on $H(M_{12})$, we have $\rho(\text{\sc controlled-erasure}) \geq n$, which implies that the above protocol is randomness-optimal.

\end{proof}
%--------------------- END OF CONTROLLED-ERASURE ------------------------

%--------------------- BEGIN OF REMOTE-OT ------------------------
\subsection{Secure Computation of {\sc remote $\binom{m}{1}$-OT$^n_2$}}
\label{app:ot}

\begin{figure}[htb]
\hrule height 1pt
\vspace{.05cm}
{\bf Algorithm 3:} {Secure Computation of \textsc{remote $\binom{m}{1}$-OT$^n_2$}} 
\hrule
\begin{algorithmic}[1]
\REQUIRE Alice has $m$ input bit strings $X_0,X_1,\hdots,X_{n-1}$ each of length $n$ \& Bob has an input $Y \in \{0,1,\hdots,m-1\}$.
\ENSURE Charlie securely computes the {\sc remote $\binom{m}{1}$-OT$^n_2$}: $Z=X_Y$.

\medskip

\STATE Alice samples $nm + \log m$ indep., uniformly distributed bits from her private randomness. Denote the first $m$ blocks each of length $n$ of this random string by $K_0,K_1,\hdots,K_{m-1}$ and the last $\log m$ bits by $\pi$. Alice sends it to Bob as $M_{\vec{12},1}=(K_0,K_1,\hdots,K_{m-1},\pi)$.

\STATE Alice computes $M^{(i)}=X_{\pi+i\; (\text{mod }m)}\oplus K_{\pi+i\; (\text{mod } m)}, \quad i\in\{0,1,\hdots,m-1\}$ and sends to Charlie 
$M_{\vec{13},2}=(M^{(0)},M^{(1)},\hdots,M^{(m-1)})$. Bob computes $C=Y-\pi\; (\text{mod } m), K=K_Y$ and sends to Charlie $M_{\vec{23},2}=(C, K)$.

\STATE Charlie outputs $Z=M^{(C )}\oplus K$. 

\end{algorithmic}

\hrule
\caption{A protocol to securely compute {\sc remote $\binom{m}{1}$-OT$^n_2$}, which is a special case of the general protocol given in \cite{FeigeKiNa94}. The protocol requires $nm$ bits to be exchanged over the Alice-Charlie (31) link, $n+\log m$  bits over the Bob-Charlie (23) link and $nm+\log m$ bits over the Alice-Bob (12) link. We show optimality of our protocol by showing that any protocol must exchange an expected $nm$ bits  over the Alice-Charlie (31) link, $n+\log m$ bite over the Bob-Charlie (23) link and $nm+\log m$ bits over the Alice-Bob (12) link.}
\label{fig:ot}
\end{figure}
\begin{proof}[Proof of \Theoremref{ot}]
\textsc{remote $\binom{m}{1}$-OT$^n_2$} satisfies Condition~1 and Condition~2 of \Lemmaref{XYZ_inde_M123}. We also have, $RI(Y;Z)=I(Y;Z)$ and $RI(Z;X)=I(Z;X)$. It turns out that for $H(M_{31})$ and $H(M_{23})$,~\eqref{eq:improved_prelim_lb_M31} and~\eqref{eq:improved_prelim_lb_M23} give the same bounds as~\eqref{eq:M31_WithCond} and~\eqref{eq:M23_WithCond} respectively. We will consider the bounds in~\eqref{eq:M31_WithCond} and~\eqref{eq:M23_WithCond} in the following. Since $X'$ is independent of $Y''$ in \eqref{eq:M31_WithCond}, $X''$ is independent of $Y'$ in \eqref{eq:M23_WithCond} and we are computing a deterministic function, the bounds in \eqref{eq:M31_WithCond} and \eqref{eq:M23_WithCond} simplify to the following:

\begin{align*}
H(M_{31}) \geq \sup_{p_{X'}} \left\{ \left( \sup_{p_{Y'}} I(Y';Z') \right) + H(X') \right\}, \\
H(M_{23}) \geq \sup_{p_{Y'}} \left\{ \left( \sup_{p_{X'}} I(X';Z') \right) + H(Y') \right\}.
\end{align*}
Taking $X'$ and $Y'$ to be uniform, we get $H(M_{31}) \geq nm$.
To derive a lower bound on $H(M_{23})$, take $Y' \sim \text{unif}\{0,1\}$ and $X'$ distributed as below
\[ p_{X_0',X_1',\hdots,X_{m-1}'}(x_0,x_1,\hdots,x_{m-1})=\left\{\begin{array}{ll} \frac{1}{2^n}-\epsilon, &
x_0=x_1=\hdots=x_{m-1}\\ \epsilon/(2^{n(m-1)}-1),& \text{otherwise},\end{array}\right.\]
where $\epsilon>0$ can be made arbitrarily small to make $I(Z';X')$ as close
to $n$ as desired. This gives a bound of $H(M_{23})\geq n+\log m$. For $H(M_{12})$, the bottom row of \eqref{eq:M12_bound} simplifies to
\[ H(M_{12}) \geq \sup_{p_{Y'}}\left\{\sup_{p_{X'}} I(X';Z') + \sup_{p_{X''}}\left\{ I(Y';Z'') + H(X'',Y'|Z'')\right\}\right\}.\]
Taking $Y'$ and $X''$ to be uniform and $X'$ to be as below
\[ p_{X_0',X_1',\hdots,X_{m-1}'}(x_0,x_1,\hdots,x_{m-1})=\left\{\begin{array}{ll} \frac{1}{2^n}-\epsilon, &
x_0=x_1=\hdots=x_{m-1}\\ \epsilon/(2^{n(m-1)}-1),& \text{otherwise},\end{array}\right.\]
where $\epsilon>0$ can be made arbitrarily small to make $I(X';Z')$ as close to $n$ as desired. This gives a bound of $H(M_{12})\geq nm+\log m$.

From~\Theoremref{randomness} and the above bound on $H(M_{12})$, we have $\rho(\text{\sc remote-ot}) \geq nm+\log m$, which implies that the above protocol is randomness-optimal.

\end{proof}
%--------------------- END OF REMOTE-OT ------------------------

%--------------------- BEGIN OF AND ------------------------

\subsection{Secure Computation of {\sc and}}\label{app:and}

\begin{figure}[htb]
\hrule height 1pt
\vspace{.05cm}
{\bf Algorithm 5:} {Secure Computation of \textsc{and}}
\hrule
\begin{algorithmic}[1]
\REQUIRE Alice has an input bit $X$ \& Bob has a bit $Y$.
\ENSURE Charlie securely computes the {\sc and} $Z=X\wedge Y$.

\medskip

\STATE Alice samples a uniform random permutation $(\alpha,\beta,\gamma)$
of $(0,1,2)$ from her private randomness; sends it to Bob
$M_{\vec{12}}=(\alpha,\beta,\gamma)$ (using a symbol from an alphabet of
size 6).

\STATE Alice sends $\alpha$ to Charlie if $X=1$, and $\beta$ if $X=0$. Bob
sends $\alpha$ to Charlie if $Y=1$, and $\gamma$ if $Y=1$.
\begin{align*}
M_{31}=\begin{cases}
\alpha \qquad \text{if }X=1 \\
\beta  \qquad \text{if }X=0
\end{cases}
&
\qquad
M_{23}=\begin{cases}
\alpha \qquad \text{if }Y=1 \\
\gamma  \qquad \text{if }Y=0
\end{cases}
\end{align*}

\STATE Charlie outputs $Z=1$ if $M_{31}=M_{23}$, and 0 otherwise.
\end{algorithmic}
\hrule
\caption{A protocol to compute {\sc and} \cite{FeigeKiNa94}. The protocol requires a ternary symbol to be exchanged over the Alice-Charlie (31) and Bob-Charlie (23) links and symbols from an alphabet of size 6 over the Alice-Bob (12) link per {\sc and} computation.}
\label{fig:and}
\end{figure}
\begin{proof}[Proof of \Theoremref{and}]
We will prove the result only for $n=1$, i.e., when input consists of only one bit. The result for general $n$ follows by taking in the following proof $X_{i}'$s, $Y_{i}'$s and $Y_{i}''$s to be i.i.d.

It is easy to see that \textsc{and} satisfies Condition~1 and Condition~2 of \Lemmaref{XYZ_inde_M123}. Also, $RI(Y;Z)=I(Y;Z)$ and $RI(Z;X)=I(Z;X)$. It turns out that for $H(M_{31})$ and $H(M_{23})$, the bounds in~\eqref{eq:improved_prelim_lb_M31} and~\eqref{eq:improved_prelim_lb_M23} are better than~\eqref{eq:M31_WithCond} and~\eqref{eq:M23_WithCond} respectively. The simplified bounds in~\eqref{eq:improved_prelim_lb_M31} and~\eqref{eq:improved_prelim_lb_M23} are as follows:
\begin{align*}
H(M_{31}) \geq \sup_{p_{X'Y'}} \left( I(Y';Z') + H(X',Z'|Y') \right), \\
H(M_{23}) \geq \sup_{p_{X'Y'}} \left( I(X';Z') + H(Y',Z'|X') \right).
\end{align*}
For $H(M_{31})$, take $p_{X'Y'}(0,0)=p_{X'Y'}(1,0)=p_{X'Y'}(1,1)=(1-\epsilon)/3$ and $p_{X'Y'}(0,1)=\epsilon$, where $\epsilon>0$ can be made arbitrarily small to make $H(M_{31})$ as close to $\log(3)$ as we desire.

For $H(M_{23})$, take $p_{X'Y'}(0,0)=p_{X'Y'}(0,1)=p_{X'Y'}(1,1)=(1-\epsilon)/3$ and $p_{X'Y'}(1,0)=\epsilon$, where $\epsilon>0$ can be made arbitrarily small to make $H(M_{23})$ as close to $\log(3)$ as we desire.

For $H(M_{12})$, \eqref{eq:M12_bound} simplifies to
\[ H(M_{12}) \geq \sup_{p_{X'}}\left\{\sup_{p_{Y'}} I(Y';Z') + \sup_{p_{Y''}}\left\{I(X';Z'') + H(X',Y''|Z'')\right\}\right\}.\]
The second term simplifies to $H(X') + p_{X'}(0)$ by taking $Y''$ to be uniform. 
Taking $p_{X'}(1)=0.456$ and $p_{Y'}(1)=0.397$ gives $H(M_{12}) \geq
1.826$.

From~\Theoremref{randomness} and the above bound on $H(M_{12})$, we have $\rho(\text{\sc AND}) \geq n(1.826)$, whereas the protocol requires $1+\log 3$ random bits.

\end{proof}
%--------------------- END OF AND ------------------------

{\bf Note:} We need the use of \Lemmaref{infoineq} (information inequality) only to improve the bound on $H(M_{12})$ in {\sc remote-ot, sum} and {\sc and}. All other bounds in all other functions do not need the use information inequality.

\section{Lower Bounds for Dependent Inputs}\label{app:dependent_inputs_lbs}
We will show that all our lower bounds proven for independent inputs hold
for dependent inputs as well provided the distribution has full support. In \Subsectionref{improved_lbs}, we observed that any secure protocol $\Pi(p_{XY},p_{Z|XY})$, where distribution $p_{XY}$ has full support, continues to be a secure protocol even if we switch the input distribution to a different one $p_{\widetilde{X}\widetilde{Y}}$.

Since we can switch to any distribution $p_{\widetilde{X}\widetilde{Y}}$, in particular, we can switch to $p_{\widetilde{X}\widetilde{Y}}$, where $\widetilde{X}$ and $\widetilde{Y}$ have the same marginals as $X$ and $Y$ respectively, i.e., $p_{\widetilde{X}}(x)=p_X(x), \forall x\in\mathcal{X}$ and $p_{\widetilde{Y}}(y)=p_Y(y), \forall y\in\mathcal{Y}$. This allows us to argue that the communication lower bounds for $\Pi(p_{\widetilde{X}\widetilde{Y}},p_{Z|XY})$ also hold for $\Pi(p_{XY},p_{Z|XY})$. To prove this, we show below that the resulting marginal distributions on the transcripts remain the same as the original ones, implying the same entropies.

Let denote the resulting distribution on the transcript on 12 link by $\widetilde{M}_{12}$ and similarly on the other two links.
%{\allowdisplaybreaks
\begin{align*}
p_{\widetilde{M}_{12}}(m_{12}) &= \sum_{x,y}p_{\widetilde{M}_{12}|\widetilde{X}\widetilde{Y}}(m_{12}|x,y)p_{\widetilde{X}\widetilde{Y}}(x,y) \\
&\stackrel{\text{(a)}}{=} \sum_{x,y}p_{M_{12}|XY}(m_{12}|x,y)p_{\widetilde{X}\widetilde{Y}}(x,y) \\
&\stackrel{\text{(b)}}{=} \sum_{x,y}p_{M_{12}|X}(m_{12}|x)p_{\widetilde{X}\widetilde{Y}}(x,y) \\
&= \sum_{x}p_{M_{12}|X}(m_{12}|x)\sum_{y}p_{\widetilde{X}\widetilde{Y}}(x,y) \\
&= \sum_{x}p_{M_{12}|X}(m_{12}|x) p_{\widetilde{X}}(x) \\
&= \sum_{x}p_{M_{12}|X}(m_{12}|x) p_{X}(x) \\
&= p_{M_{12}}(m_{12}),
\end{align*}
%}
where (a) follows from the fact that in a secure computation protocol, once
Alice and Bob are given inputs $X=x$ and $Y=y$ respectively, the protocol
produces $(m_{12},m_{23},m_{31},z)$ according to the conditional
distribution $p_{M_{12}M_{23}M_{31}Z|XY}(m_{12},m_{23},m_{31},z|x,y)$ and
this conditional distribution does not depend on the distribution $p_{XY}$,
hence, $p_{\widetilde{M}_{12}|\widetilde{X}\widetilde{Y}}(m_{12}|x,y) = p_{M_{12}|XY}(m_{12}|x,y)$; and (b)
follows from privacy against Alice. This implies that $H(\widetilde{M}_{12})=H(M_{12})$.
Similarly we can prove $H(\widetilde{M}_{23})=H(M_{23})$ and $H(\widetilde{M}_{31})=H(M_{31})$.

Proofs of our lower bounds for secure computation in \Theoremref{lowerbound} and \Theoremref{conditionallowerbounds} assumed independent inputs. For them to hold for dependent ones, we can take $p_{\widetilde{X}\widetilde{Y}}$ in above to be a product distribution $p_{\widetilde{X}\widetilde{Y}}=p_{\widetilde{X}}p_{\widetilde{Y}}$ with $\widetilde{X}$ and $\widetilde{Y}$ having the same marginals as $X$ and $Y$.

\section{Dependence on Input Distributions} \label{app:discuss}
Our communication
lower bounds were developed for protocols whose designs may take into
account the distributions of $X$ and $Y$. Specifically, the right hand sides of \eqref{eq:improved_prelim_lb_M12} and \eqref{eq:M12_bound} do not depend on the distributions $p_Xp_Y$ of the inputs. Thus, even though we
allow the protocol to depend on the distributions, our lower bound on
$H(M_{12})$ does not. The same is true for \eqref{eq:improved_prelim_lb_M31} and \eqref{eq:M31_WithCond} for $H(M_{31})$ and \eqref{eq:improved_prelim_lb_M23} and \eqref{eq:M23_WithCond} for $H(M_{23})$, which apply when the function satsifies certain conditions. As the {\sc controlled-erasure} example 
(\Appendixref{erasure}) demonstrates, when these conditions are
not satisfied, the communication complexity of the optimal protocol may
indeed depend on the distribution of the input.
Notice that the specific protocols we have given in this paper do not need
the knowledge of the input distributions.

\end{document}